\journal{ArXiV version}
\renewcommand{\spot}{\mathrel{\cdot}}
\newcommand{\Nil}{TMPNIL}
\newcommand{\draftonly}[1]{}
\newcommand{\anegate}{\cnegate}
\newcommand{\tnegate}{\neg}
\newcommand{\atomid}{\boldsymbol{\epsilon}}
\newcommand{\AtomicSteps}{\mathcal{A}}
\newcommand{\Commands}{\mathcal{C}}
\newcommand{\Tests}{\mathcal{T}}
\newcommand{\BooleanAlg}{\mathcal{B}}
\newcommand{\sync}{\mathbin{\otimes}}
\newcommand{\Identity}[1]{\mathbf{1_{#1}}}
\newcommand{\syncid}{\Identity{}}
\newcommand{\Identitycommand}[1]{\mathit{\mathcal{I}d_{#1}}}
\newcommand{\syncidcommand}{\Identitycommand{}}
\newcommand{\join}{\mathop{\sqcup}}
\mathchardef\mhyphen="2D
\newcommand{\pguard}[1]{(\mathop{\Keyword{\pi \mhyphen restrict}} #1)}
\newcommand{\eassume}[1]{(\mathop{\Keyword{\epsilon \mhyphen assm}} #1)}
\newcommand{\pibot}{\boldsymbol{\pi}} 
\newcommand{\ebot}{\boldsymbol{\epsilon}}
\newcommand{\e}{\varepsilon}
\newcommand{\prednegate}[1]{\overline{#1}}
\newcommand{\PCommands}{{\cal P}}
\newcommand{\ECommands}{{\cal E}}
\def\RSpec{\@ifnextchar*{\@RSpec}{\@@RSpec}}
\def\@RSpec*#1#2#3{\ifx\@empty#1\else#1\colon\fi
   [#2\ifx\@empty#2\else,~\fi#3]}
\def\@@RSpec#1#2#3{\ifx\@empty#1\else
   \begin{array}{@{}l@{}}#1\colon\end{array}\!\!\fi%
   \left[\begin{array}{@{}c@{}}#2\end{array}\ifx\@empty#2\else,~\fi
   \begin{array}{@{}c@{}}#3\end{array}\right]}
\def\Spec{\@ifnextchar*{\@Spec}{\@@Spec}}
\def\@Spec*#1#2#3{\ifx\@empty#1\else#1\colon\fi
   \llparenthesis#2\ifx\@empty#2\else,~\fi#3\rrparenthesis}
\def\@@Spec#1#2#3{\ifx\@empty#1\else
   \begin{array}{@{}l@{}}#1\colon\end{array}\!\!\fi%
   \llparenthesis\begin{array}{@{}c@{}}#2\end{array}\ifx\@empty#2\else,~\fi
   \begin{array}{@{}c@{}}#3\end{array}\rrparenthesis}
\newcommand{\quint}[5]{\{#1,#2\}~#5~\{#3,#4\}}
\newcommand{\quintprgqc}{\quint{p}{r}{g}{q}{c}}
\renewenvironment{proof}{\par\noindent\textbf{Proof.}}{\hfill$\Box$}
\newcommand{\Pevent}[1]{\cpstep{#1}}
\newcommand{\refeqn}[1]{(\ref{eqn-#1})}
\newcommand{\refsect}[1]{Sect.~\ref{S-#1}}
\newcommand{\reflemma}[1]{Lemma~\ref{L-#1}}
\newcommand{\synchpstep}[1]{\Pevent{#1}}
\newcommand{\synchpstepE}{\synchpstepd}
\newcommand{\synchestepE}{\atomid}
\newcommand{\spstep}[1]{\pstep{#1}}
\newcommand{\sestep}[1]{\estep{#1}}
\newcommand{\sestepE}{\atomid}
\newcommand{\ssilent}{\spstep{\silent}}
\newcommand{\pl}{\parallel}
\newcommand{\plE}[1]{\underset{#1}{\pl}}
\newcommand{\Event}{Event}
\newcommand{\silent}{\iota}
\newcommand{\compl}[1]{\overline{#1}}
\newcommand{\ecompl}[1]{\tilde{#1}}
\newcommand{\atev}[1]{\atomic{#1}}
\newcommand{\atevs}{\atev{\silent}}
\newcommand{\Hide}[2]{{#2}/{#1}}
\newcommand{\Res}[2]{{#2\backslash#1}}
\newcommand{\commentout}[1]{}
\definecolor{Red}{rgb}{1.,0.,0.}
\definecolor{Blue}{rgb}{0.,0.,1.}
\definecolor{Pink}{rgb}{1.,0.75,0.8}
\definecolor{Green}{rgb}{0.2.,0.5,0.2}
\newtheorem{theorem}{Theorem}
\newtheorem{corollary}[theorem]{Corollary}
\newtheorem{lemma}[theorem]{Lemma}
\newcommand{\figurerule}{\rule{\textwidth}{0.5pt}}
\edef\today{\number\day\ \ifcase\month\or
  January\or February\or March\or April\or May\or June\or
  July\or August\or September\or October\or November\or December\fi
  \ \number\year}
\newcounter{Hours}
\newcounter{Minutes}
\newcommand{\CurrentTime}{%
 \ifthenelse{\value{Hours}<10}{0}{}\theHours:%
 \ifthenelse{\value{Minutes}<10}{0}{}\theMinutes}
\newcommand{\runningdate}{\draftonly{(\today\ \CurrentTime\ DRAFT)}}
\begin{document}

\draftonly{
\let\origthepage=\thepage
\makeatletter
\renewcommand{\thepage}{\@arabic\c@page-R}
\makeatother
\input{conf-diffs}
\let\thepage=\origthepage
\clearpage
\setcounter{page}{1}
\setcounter{section}{0}
}

\title[Synchronous refinement algebra \runningdate]{A synchronous program algebra:\\
a basis for reasoning about shared-memory and event-based concurrency}
\author[I. J. Hayes et al.  \runningdate]{
Ian J. Hayes \and 
Larissa A. Meinicke \and 
Kirsten Winter \and 
Robert J. Colvin 
\\
School of Information Technology and Electrical Engineering, \\
The University of Queensland, Australia}
\correspond{Ian J. Hayes, School of Information Technology and Electrical Engineering,\\ 
The University of Queensland, Australia 4072.
              e-mail: Ian.Hayes@itee.uq.edu.au \\
This work was supported by Australian Research Council (ARC) Discovery Project DP130102901.
}
\date{\today\ \CurrentTime}

\maketitle
\makecorrespond

\begin{abstract}
This research started with 
an algebra for reasoning about rely/guarantee concurrency for a shared
memory model. The approach taken led to a more \emph{abstract algebra of
atomic steps}, in which atomic steps synchronise (rather than
interleave) when composed in parallel. The algebra of rely/guarantee
concurrency then becomes an instantiation of the more abstract
algebra. Many of the core properties needed for rely/guarantee
reasoning can be shown to hold in the abstract algebra where their
proofs are simpler and hence allow a higher degree of automation. 
The algebra has been encoded in Isabelle/HOL to provide a basis
for tool support for program verification.

In rely/guarantee concurrency, programs are specified to guarantee certain behaviours 
until assumptions about the behaviour of their environment are violated. 
When assumptions are violated, program behaviour is unconstrained (aborting), 
and guarantees need no longer hold.
To support these guarantees a second synchronous operator, weak conjunction, was introduced:
both processes in a weak conjunction must agree to take each atomic step, 
unless one aborts in which case the whole aborts.
In developing the laws for parallel and weak conjunction 
we found many properties were shared by the operators and
that the proofs of many laws were essentially the same.
This insight led to the idea of generalising synchronisation 
to an abstract operator with only the axioms that are shared 
by the parallel and weak conjunction operator, 
so that those two operators can be viewed as instantiations of the abstract synchronisation operator.
The main differences between parallel and weak conjunction are how they combine
individual atomic steps; 
that is left open in the axioms for the abstract operator.

Milner's process algebra SCCS 
also includes a synchronous parallel operator
and (again) the main difference between it and Aczel's synchronous parallel operator 
is how it combines individual atomic steps.
Milner's parallel can be seen as another instance of the abstract synchronisation operator.
Moreover, the realisation that the synchronisation
mechanisms of standard process algebras, such as CSP and CCS/SCCS, can
be interpreted in our abstract algebra gives evidence of its unifying
power. 
\end{abstract}

\section{Introduction}

Our overall goal is to provide mechanised support for the development and verification
of concurrent programs 
based on the rely/guarantee technique of Jones~\cite{Jones81d,Jones83a,Jones83b}.
Our approach is to develop a concurrent program algebra
similar to Concurrent Kleene Algebra \cite{DBLP:journals/jlp/HoareMSW11}
and Synchronous Kleene Algebra \cite{Pris10}.
The algebra is mechanised as Isabelle/HOL theories.
The theory makes use of a synchronous parallel operator due to Aczel~\cite{Aczel83,DeRoever01}
that is designed to support the rely/guarantee approach for shared memory concurrency.
The general nature of the algebra lends itself, unexpectedly to the authors
at first, to an interpretation in which properties of the abstract communication of process algebras 
may be described, and which uses concepts common to the rely/guarantee
framework.

\paragraph{Atomic steps and commands}
Many models of concurrent systems are based on a notion of primitive \emph{atomic} steps
which can be combined into more complex terms, referred to here as \emph{commands}.
For process algebras like Milner's CCS/SCCS \cite{CaC} and Hoare's CSP \cite{Hoare85}, 
events are viewed as being atomic. 
For shared memory concurrency, 
operations like individual reads and writes of memory words are viewed as atomic.

\paragraph{Synchronisation}
In CSP, events common to the alphabets of two parallel processes are synchronised as a single event 
of their composition, while other events of the processes interleave \cite{Hoare85}. 
In CCS, every event $e$ has a complementary event $\ecompl{e}$, where a parallel composition may 
synchronise an $e$ event from one process with an $\ecompl{e}$ event of the other to give an
internal action $\iota$, or the actions may interleave \cite{CaC}. 
Milner recognised the commonality between these languages when he proposed Synchronous 
CCS (SCCS) \cite{Milner83}.  In SCCS every step of a parallel composition $c \parallel d$  
is a synchronisation of a step of $c$ with a step of $d$. To allow for interleaving, 
a step $e$ of one process may be matched with  an idling step, $\mathbf{1}$, of the other process 
to give a step $e$ of their parallel composition. 
CCS can be encoded within SCCS by providing an asynchronising operation that allows
interleaving of atomic steps \cite[Section 9.3]{CaC}.

\paragraph{Rely/guarantee concurrency}
To handle the semantics of rely/guarantee concurrency for shared memory systems 
Aczel \cite{Aczel83,DeRoever01} invented traces that include both program steps 
$\pstepd(\sigma,\sigma')$ and environment steps $\estepd(\sigma,\sigma')$,
where the state space contains the values of the program variables and 
$\sigma$ and $\sigma'$ are the before and after states of the steps.
The environment steps of a process record the interference from its environment. 
In Aczel's model a parallel composition $c \parallel d$ synchronises a program step 
$\pstepd(\sigma, \sigma')$ of $c$ with an environment step $\estepd(\sigma,\sigma')$ of $d$ 
to give a program step $\pstepd(\sigma, \sigma')$ of $c \parallel d$. 
It also synchronises a common environment step $\estepd(\sigma,\sigma')$ of both $c$ and $d$ to 
give an environment step $\estepd(\sigma,\sigma')$ of $c \parallel d$. 
Hence Aczel's shared memory model, like Milner's SCCS, 
also treats parallel composition as a synchronising operator. 

\paragraph{Abstracting synchronous operators}
In our previous work on an algebra of synchronous atomic steps \cite{FM2016atomicSteps},
our motivation was to provide an algebra that supported rely/guarantee concurrency and
Aczel's synchronous model of the parallel operator.
It was only after we developed the algebra that we realised that it was related to
Milner's SCCS and hence also to Prisacariu's Synchronous Kleene Algebra (SKA) \cite{Pris10}.
Our previous research on rely/guarantee concurrency \cite{HayesJonesColvin14TR,FACJexSEFM-14,AFfGRGRACP} 
also made use of a weak conjunction operator $\together$
(explained in more detail in Section \ref{S-instantiations})
that is also a synchronous operator with properties similar to the  parallel operator.
One of the contributions of this paper is to devise an abstract algebra for such synchronous operators.

Perhaps the most important concept for abstracting synchronous operators is that the 
algebra for synchronising the individual atomic steps 
can be treated separately to the algebra for the wider context that supports sequences of atomic steps.
For Milner's SCCS, the (atomic) events form a commutative group with an operator $\times$ 
\cite[Section 9.3]{CaC}.
He used a prefixing operator $a.c$ to prefix an event $a$ onto a process $c$,
and his parallel operator satisfies the interchange law,
\begin{equation}\label{Milner-interchange}
  a.c \parallel b.d = (a \times b).(c \parallel d)~,
\end{equation}
where $a$ and $b$ are events and $c$ and $d$ are processes.
Our approach differs in that we embed atomic commands 
(corresponding to Milner's events)
as a distinguished subset of commands
and thus make use of the same parallel operator to combine them
(rather than introducing a new operator $\times$).
In addition, prefixing becomes a special case of sequential composition
for which the first command is atomic. 
Milner's interchange law (\ref{Milner-interchange}) becomes,
\begin{equation}\label{parallel-interchange-sequential}
  a \Seq c \parallel b \Seq d = (a \parallel b) \Seq (c \parallel d)~,
\end{equation}
where it should be emphasised that this law only provides an equality if $a$ and $b$ are atomic commands.

One of the advantages of treating atomic steps as a subset of commands is that
all of the operators on commands may be applied to atomic commands.
In particular, because commands form a lattice under the refinement ordering,
the lattice meet and join operators can be applied to atomic commands.
In fact, atomic commands form a sub-lattice of commands.
It also turns out to be useful to add a complement operator on atomic commands,
which means they form a Boolean algebra (see Section \ref{S-atomic}).

The main differences between the different forms of synchronous operators 
(e.g. the different parallel operators in the different languages and weak conjunction)
is how they are defined on atomic steps
but otherwise these operators satisfy a range of similar laws 
and hence, to avoid duplicating these laws and their proofs,
it is advantageous to consider an algebra for an abstract synchronisation operator $\sync$,
which we latter instantiate for each synchronous operator.
The operator $\sync$ is associative, commutative and has identity $\syncidcommand$.

The abstract synchronisation operator $c\sync d$ synchronises the commands $c$ and $d$ by 
synchronising the atomic steps of $c$ and $d$. 
At this level of abstraction the definition of $a\sync b$ for atomic steps $a$ and $b$ is left open 
because it differs for the different instantiations of the synchronisation operator (e.g., 
parallel composition and weak conjunction). 
In fact, the definition of $a \sync b$ on atomic commands for each instantiation 
largely defines the respective operation. 
Given that the command $\Nil$ is the null command, i.e.~the identity of sequential composition,
$\sync$ satisfies,
\begin{eqnarray}
  a \Seq c \sync b \Seq d & = & (a \sync b) \Seq (c \sync d) \label{sync-interchange-sequential} \\
  \Nil \sync \Nil & = & \Nil  \label{nil-sync-nil} \\
  a \Seq c \sync \Nil & = & \top \label{atomic-sync-nil}
\end{eqnarray}
where $\top$ is the top of the refinement lattice and represents the everywhere infeasible command
(sometimes referred to as ``magic'').
For example, the above laws hold with $\sync$ instantiated with $\parallel$,
so that (\ref{sync-interchange-sequential}) corresponds to (\ref{parallel-interchange-sequential}).
The other laws codify that two null commands synchronise to give null (\ref{nil-sync-nil})
and that a null command cannot synchronise with a command that performs at least one atomic step 
-- their combination is infeasible (\ref{atomic-sync-nil}).

\paragraph{Distinguishing program and environment steps}
Motivated by Aczel's model \cite{Aczel83},
when dealing with the parallel operator it is useful to consider 
two disjoint subsets of atomic commands:
one corresponding to program steps
and 
the other corresponding to environment steps,
where there is usually exactly one environment step to match each and every program step.
These two sets form a basis for building any atomic command as a non-deterministic choice
over a set of program and environment steps.
The set of all commands built from program steps forms a sub-lattice,
as does the commands build from environment steps.

It turns out that the distinction between program and environment steps is also
relevant for handling process algebras.
For a set of events $Event$ and $e \in Event$, let $\spstep{e}$ represent the process performing event $e$
and $\sestep{e}$ represent the environment of the process performing an $e$.
For CSP-style concurrency,
two processes synchronising on an event $e$ corresponds to a $\spstep{e}$ step of one process 
synchronising with a $\spstep{e}$ step of the other.
And an interleaving event corresponds to a $\spstep{e}$ step of one process synchronising 
with a $\sestep{e}$ step of the other.

The non-deterministic choice over all environment steps, 
$\ebot = \Nondet_{e \in Event} \epsilon(e)$,
allows the environment to perform any event whatsoever
and hence corresponds to the atomic step identity of parallel
and corresponds to Milner's identity $\textbf{1}$ for SCCS \cite{CaC}.
However, Milner only supports the identity and not the complete range of commands
that can be built from combinations of program and environment steps.

For a set of events, $E \subseteq Event$, it is useful to define 
an atomic command that can perform any of the events in $E$, 
$\cpstep{E} = \Nondet_{e \in E} \spstep{e}$,
and 
another that allows the environment to perform any events in $E$,
$\cestep{E} = \Nondet_{e \in E} \sestep{e}$.
These commands can be used as primitives in representations of process algebras
(Section~\ref{S-process-algebras}).
Returning to Aczel's model, the primitive events are pairs of before and after states, $(\sigma,\sigma')$,
so that $\pi(\sigma,\sigma')$ forms a primitive atomic program step and
$\epsilon(\sigma,\sigma')$ forms a primitive atomic environment step.
In this case a set of events becomes a set of pairs of states,
i.e.\ a binary relation $r$ on states.
This gives the commands $\cpstep{r}$ and $\cestep{r}$ that we use in Section~\ref{S-rely-guarantee}
to build our rely/guarantee theory.
This simple approach of treating a pair of states as an event
shows part of the relationship between process algebras and shared memory concurrency.
There is a further constraint in shared memory concurrency
that the after state of one step equals the before state of the next.

\paragraph{Overview of paper}
Section~\ref{S-general-algebra} introduces a Demonic Refinement Algebra (DRA)
that includes a sub-algebra of tests
(Section~\ref{S-tests}),
and a sub-algebra of atomic steps (Section~\ref{S-atomic}).
Section~\ref{S-sync-algebra} extends this with our abstract synchronisation operator ($\sync$)
along with a range of laws applicable at this level of abstraction,
in particular, those involving iterations of atomic steps (see Section~\ref{S-atomic-iteration}).
Section~\ref{S-instantiations} then develops two instantiations of the abstract synchronisation operator 
as a synchronous parallel operator and as weak conjunction.
Section~\ref{S-pi-env} examines introducing sub-algebras of program and environment steps
and using them as a basis for all atomic commands.
Section~\ref{S-rely-guarantee} develops abstract specification, rely and guarantee commands that can be used to specify rely/guarantee quintuples.
Section~\ref{S-rg-logic} provides an interpretation of these abstract commands to support
rely-guarantee reasoning about shared memory concurrency.
It contains a formal development of an abstract parallel introduction law,
a fundamental law in rely-guarantee concurrency.
Section~\ref{S-process-algebras} provides an interpretation of the algebra that provides process algebraic abstract communication via
events.  We show how the binary synchronisation of CCS and the multi-way synchronisation of CSP are defined in the abstract algebra.  We
also sketch a simple interpretation of the algebra that combines both state-based and event-based communication.
Section~\ref{S-related-work} considers related work.

\begin{figure}
 \centering
   \includegraphics[viewport=31 606 332 828,width=0.48\linewidth]{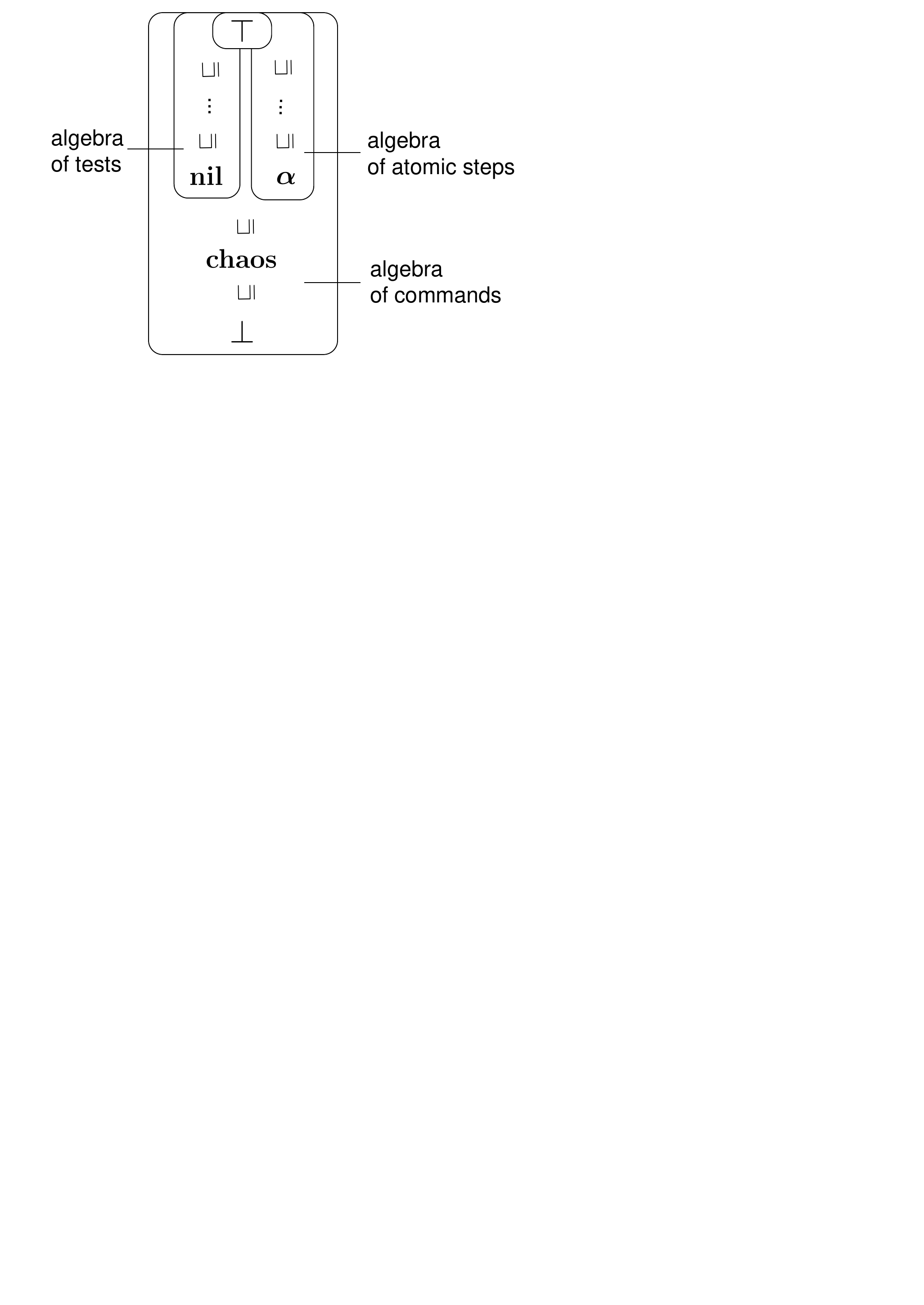}
  \caption{The Synchronous Refinement Algebra and its sub-algebras}
  \label{lattices}
\end{figure}

\section{Demonic Refinement Algebra}\label{S-general-algebra}

The basis for our program algebra is similar to von Wright's Demonic Refinement Algebra  (DRA) 
\cite{Wright04},
which is designed to support algebraic reasoning in a refinement calculus style
\cite{Bac81a,BackWright98,Morgan94,ATBfSRatPC}.
We define the following structure.
\vspace*{-0.5ex}
\begin{eqnarray*}
(\Commands, \Nondet, \bigsqcup, ~\Seq\, , \Nil)
\vspace*{-0.5ex}
\end{eqnarray*}
where the carrier set $\cal{C}$ is the set of \emph{commands}
containing distinguished element $\Nil$, the identity of the
sequential composition operator, ``$\,\Seq\,$''.
The operators $\Nondet$ and $\bigsqcup$ take sets of commands and return a command.
We use the notation $c \nondet d$ to stand for $\Nondet \{c,d\}$ and
$c \sqcup d$ for $\bigsqcup \{c,d\}$.
Unary operators have higher precedence than all binary operators.
Sequential composition has higher precedence than all other binary operators 
and non-deterministic choice has lower precedence than all other operators,
but otherwise we make no assumptions about precedence.

Commands form a complete distributive lattice 
$(\cal{C}, \Nondet, \bigsqcup, \bot, \top)$ 
with \emph{nondeterministic choice} as the lattice meet ($c\nondet d$),
and \emph{conjunction} of commands as the lattice join ($c \sqcup d$).  
The top of the lattice $\top \sdefs \Nondet \emptyset$ is the infeasible command 
(called ``magic'' in the refinement calculus) 
and the bottom of the lattice $\bot \sdefs \bigsqcup \emptyset$ is the command that aborts.  
The partial order defined on commands is the refinement relation $c \refsto d$
meaning $c$ is refined (or implemented by) $d$. 
For any commands $c,d \in \cal{C}$, 
$c \refsto d \sdefs (c \nondet d) = c$,
and hence $\bot \refsto c \refsto \top$. 
We refer to this as the \emph{refinement lattice} (see Figure~\ref{lattices}).
Note that because DRA is a \emph{refinement} algebra it uses $\refsto$ as its partial order 
instead of Kozen's $\geq$ and hence our lattice of commands is the dual of Kozen's lattice
for Kleene Algebra with Tests (KAT) \cite{kozen97kleene}, 
i.e., $\sqcap$ in DRA matches $\sqcup$ in KAT, and $\sqcup$ in DRA matches $\sqcap$ in KAT.
Given commands form a complete lattice, for any monotone function
least/greatest fixed points are well defined.  In particular, fixed
points are used to define iteration operators below.

\begin{figure}
\figurerule\\
Let $c$ and $d$ be commands and $C$ and $D$ be set of commands.
\\[2ex]
\begin{minipage}[t]{0.5\textwidth}
\textbf{Sequential}
\begin{eqnarray}
  c_0 \Seq (c_1 \Seq c_2) & = & (c_0 \Seq c_1) \Seq c_2 \label{A-seq-asso} \\
  c \Seq \Nil ~= &c&  =~  \Nil \Seq c \label{A-seq-identity} \\
   \bot \Seq c  &=& \bot \label{A-seq-annihilation-left}
\end{eqnarray}
\end{minipage}%
\begin{minipage}[t]{0.5\textwidth}
\begin{eqnarray}
  (\Nondet C) \Seq d & = & \Nondet_{c \in C} (c \Seq d)  \label{A-seq-distr-right}\\
  D \neq \emptyset ~\implies~ c \Seq (\Nondet D) & = & \Nondet_{d \in D}(c \Seq d) \label{A-seq-distr-left} 
\end{eqnarray}
\end{minipage}
\figurerule

\caption{Axioms for sequential composition}\label{figure:axioms}
\end{figure}

The axioms for sequential composition are given in Figure~\ref{figure:axioms}.
Sequential composition of commands ($c \,\Seq\, d$) is associative (\ref{A-seq-asso}) and has
identity $\Nil$ (\ref{A-seq-identity}).  
The least command, $\bot$,  is a left (but not right) annihilator (\ref{A-seq-annihilation-left}), and 
sequential composition distributes over arbitrary choices from the right (\ref{A-seq-distr-right}),
hence $\top$ is also an annihilator from the left,
i.e.\ $\top \Seq c = \top$.
Sequential composition distributes over non-empty choices from the left (\ref{A-seq-distr-left}), 
however, because $\Nondet \emptyset = \top$ and $\bot \Seq \top = \bot$ by (\ref{A-seq-annihilation-left}),
it does not distribute from the left over the empty choice, and so it is conjunctive, but not universally conjunctive. 
\footnote{Conjunctivity holds, for example, for our relational model in Section~\ref{S-rg-logic}.}

\subsection{Iteration of commands}\label{S-iteration}

The iteration of a command is inductively defined as 
$c^0 = \Nil$ and $c^{i+1} = c \Seq c^i$.
More general iteration operators are captured via greatest ($\nu$) and least
($\mu$) fixed points of the complete lattice: 
\begin{eqnarray}
\Fin{c} &\sdef& (\nu x .\Nil \nondet c \Seq x) \label{fin-iter-def}\\
\Om{c} &\sdef& (\mu x . \Nil \nondet c \Seq x) \label{om-iter-def}\\
\Inf{c} &\sdef& \Om{c} \Seq \top                    \label{inf-iter-def}
\end{eqnarray}
Finite iteration zero or more times is defined by (\ref{fin-iter-def}), and 
(\ref{om-iter-def}) defines finite or possibly infinite iteration.
Infinite iteration is defined through the possibly infinite iteration 
followed by $\top$ and hence excludes any finite number of iterations (\ref{inf-iter-def}). 
A number of useful laws can be derived.
The unfolding laws (\ref{L-omega-unfold})
and (\ref{L-finite-unfold})
and the induction laws (\ref{L-omega-induction}) 
and (\ref{L-finite-induction})
result from the fixed point definitions for iterations.
Law~(\ref{L-infinite-unfold}) follows from (\ref{L-omega-unfold}) and the definition of $\Inf{c}$
(\ref{inf-iter-def}),
which also justifies (\ref{L-infinite-annihilates}).
Law~(\ref{L-infinite-unfold-power}) follows from (\ref{L-infinite-unfold}) by induction on $i$.
Iterations are also idempotent under sequential composition (\ref{L-infinite-twice}) an (\ref{L-finite-twice}). \\
\noindent
\begin{minipage}[t]{0.5\textwidth}
\begin{eqnarray}
  \Om{c} &=& \Nil \nondet c \Seq \Om{c} \label{L-omega-unfold} \\
  \Fin{c} &=& \Nil \nondet c \Seq \Fin{c} \label{L-finite-unfold} \\
  \Inf{c} &=& c \Seq \Inf{c} \label{L-infinite-unfold}\\
  \Inf{c} &=& c^i \Seq \Inf{c} \label{L-infinite-unfold-power}\\
  \Inf{c} &=& \Inf{c}\Seq d \label{L-infinite-annihilates}
\end{eqnarray}
\end{minipage}%
\begin{minipage}[t]{0.5\textwidth}
\begin{eqnarray}
  \Nil \nondet c \Seq x \refsto x &\implies& \Om{c} \refsto x \label{L-omega-induction} \\
  x \refsto \Nil \nondet c \Seq x &\implies& x \refsto \Fin{c} \label{L-finite-induction} \\
  \Om{c}\Seq \Om{c} &=& \Om{c} \label{L-infinite-twice}\\
  \Fin{c}\Seq \Fin{c} &=& \Fin{c} \label{L-finite-twice}
\end{eqnarray}
\end{minipage}\vspace*{2ex}\\
Conjunctivity (\ref{A-seq-distr-left}) is required to show Laws~(\ref{L-isolation}) and (\ref{L-finite-iteration}).
Law (\ref{L-iteration1}) 
follows from (\ref{L-isolation}) and (\ref{L-infinite-annihilates}).
\\
\begin{minipage}[t]{0.5\textwidth}
\begin{eqnarray}
  \Om{c} &=& \Fin{c} \nondet \Inf{c} \label{L-isolation} \\
  \Fin{c} &=& \textstyle\Nondet_{i \in \nat} c^i\label{L-finite-iteration}
\end{eqnarray}
\end{minipage}
\begin{minipage}[t]{0.5\textwidth}
\begin{eqnarray}
  \Om{c} \Seq d &=& \Fin{c} \Seq d \nondet \Inf{c} \label{L-iteration1} 
\end{eqnarray}
\end{minipage}
\vspace*{1ex}

\subsection{The sub-algebra of tests}\label{S-tests}

Tests are special commands that are used to model conditionals and while-loops and hence form an essential construct when reasoning about programs in any state-based formalism.  Assume $t$ is a test, $\tnegate{t}$ is its negation, and $c$ and $d$ are commands,  an abstract algebraic representation of conditionals and while-loops for sequential programs is given by
\vspace*{-1ex}
\begin{eqnarray*}
  \text{\bf if }  t ~\text{\bf then } c ~\text{\bf else } d \sdefs t \Seq c \nondet \tnegate{t} \Seq d
  ~~~\text{ and }~~~
  \text{\bf while } t ~\text{\bf do } c \sdefs \Om{(t \Seq c)} \Seq \tnegate{t}~.
\vspace*{-1ex}
\end{eqnarray*}
Blikle \cite{Blikle78} used this style of representation of programs in a relational algebra  and Gardner and Morgan \cite{gardinerfacj93} and von Wright \cite{Wright04} in the refinement calculus. Kozen \cite{kozen97kleene} provided \emph{Kleene Algebra with Tests} (KAT) as an abstract-algebraic framework for reasoning about programs with tests. 
In Kozen's approach, tests form a Boolean sub-algebra within Kleene algebra.
We follow his construction here. To introduce a sub-lattice of test commands (see Figure~\ref{lattices}) to a DRA
$(\Commands, \Nondet, \bigsqcup,  ~\Seq\, , \Nil)$, 
we augment it with an additional carrier set $\Tests \subseteq \Commands$ of \emph{test commands} and a unary operation of complementation on tests, $\tnegate$, such that
\vspace*{-1ex}
\begin{eqnarray*}
  (\Tests, \nondet, \sqcup, ~\tnegate\, , \top, \Nil)
\vspace*{-1ex}
\end{eqnarray*}
is a Boolean algebra with least element $\Nil$ (a succeeding test) 
and greatest element $\top$ (a failing test). 
Additionally, we require tests $t, t' \in \Tests$ to satisfy the following interchange axiom:
\begin{eqnarray}
(t \Seq c) \sqcup (t' \Seq d) &=&  (t \sqcup t') \Seq (c \sqcup d) \label{A-test-sup-interchange}
\end{eqnarray}
from which we can show that
\begin{eqnarray}
     t \Seq t'
 =  (t \sqcup \Nil) \Seq (\Nil \sqcup t')
 =  (t \Seq \Nil) \sqcup (\Nil \Seq t')
&=&  t \sqcup t'      \label{test-seq-test}
\end{eqnarray}
giving us closure of tests over sequential composition.
Given a program intuition, the greatest element in the sub-lattice of tests, $\top$, corresponds to the test that always evaluates to false, and the least element, $\Nil$, corresponds to the test that always evaluates to true.
The smallest non-trivial Boolean sub-algebra of tests is $\{\top, \Nil\}$, and is the interpretation of tests for event-based formalisms without an explicit notion of state. 

\subsubsection{Assertions}\label{S-assertions}

Tests also give rise to the concept of \emph{assertions} (preconditions) \cite{Wright04,Solin07}. The assertion corresponding to a test $t$ is a command which terminates if the test holds and aborts if the test does not hold,
i.e., 
\vspace*{-1ex}
\begin{eqnarray*}
\Assert{t} = t \nondet \tnegate{t} \Seq \bot~. 
\vspace*{-1ex}
\end{eqnarray*}

Note that KAT is not rich enough to model assertions as it does not include abort.
Von Wright \cite{Wright04} has shown that assertions and tests form a Galois connection as follows,
a law that can be proved from our axiomatisation:
\vspace{-1ex}
\begin{eqnarray*}
  (\Assert{t}) \Seq c \refsto d & \iff & c \refsto t \Seq d~.
\vspace*{-1ex}
\end{eqnarray*}

\subsubsection{Encoding in Isabelle}\label{S-Boolean-encoding}

Armstrong et al.\ \cite{Armstrong14} present three reference formalisations of Kozen's KAT in the theorem prover Isabelle, of which we follow the two-sorted implementation.  
Tests are embedded into the lattice of commands via a mapping, an injective homomorphism, from a Boolean algebra $(\BooleanAlg, \union, \inter, \bar{~~}, \bot, \top)$, into the Boolean sub-algebra of test commands:
\begin{eqnarray*}
 \tau : \BooleanAlg \rightarrow \Commands
\end{eqnarray*}
giving us the following correspondence between the operations of the Boolean algebra ($\BooleanAlg$) and tests ($\Tests$):\\
\begin{minipage}{0.5\textwidth}
\begin{eqnarray}
  \tau(p \union q) &=& \tau(p) \nondet \tau(q)        \label{tau-inf-tau}\\
  \tau(p \inter q) &=& \tau(p) \sqcup \tau(q)         \label{tau-sup-tau} \\
  \tau(\bar{p}) &=& \tnegate\, \tau(p)                \label{tau-negate}\\
  p \subseteq q  &\iff & \tau(q) \refsto \tau(p)        \label{tau-ordering}
\end{eqnarray}
\end{minipage}%
\begin{minipage}{0.5\textwidth}
\begin{eqnarray}
  \tau(\bot)  &=& \top                             \label{tau-bot-top}\\
  \tau(\top)  &=& \Nil                             \label{tau-top-nil}\\
  \tau(p \inter q) &=& \tau(p) \Seq \tau(q)        \label{tau-seq-tau}
\end{eqnarray}
\end{minipage}\vspace{1ex}\\
Since the natural ordering on the Boolean algebra, $p \subseteq q \iff p \union q = q$, is the inverse of our ordering on commands, $c \refsto d \iff c \nondet d = c$, we have that the partial ordering on the Boolean algebra maps to the inverse of the refinement ordering on tests (\ref{tau-ordering}), 
and the bottom element of the Boolean algebra ($\bot$) is mapped to the top element of tests ($\top$), etc. 
Property (\ref{tau-seq-tau}) follows from $(\ref{test-seq-test})$ and $(\ref{tau-sup-tau})$.
With this encoding, Isabelle's Boolean algebra theory can readily be applied to test commands.

\subsection{The sub-algebra of atomic steps}\label{S-atomic}

This section introduces the sub-algebra of commands that correspond to atomic steps, $\AtomicSteps \subseteq \cal{C}$. These atomic steps will be later shown to have both an interpretation that corresponds to Aczel's program and environment steps (Section~\ref{S-rg-logic}), and to events from CCS an CSP (Section~\ref{S-process-algebras}). In the following, the term \emph{step} is used exclusively for an atomic step. 

In the same manner that tests form a sub-lattice of commands,  the set of atomic steps forms a sub-lattice of commands which is a Boolean algebra (see Figure~\ref{lattices}). In particular, we can extend a DRA,
$(\Commands, \Nondet, \bigsqcup,  ~\Seq\, , \Nil)$,
with another carrier set $\AtomicSteps \subseteq \Commands$, negation operator, $\anegate$, and distinguished element $\cstepd$ such that 
\begin{eqnarray*}
  (\AtomicSteps, \nondet, \sqcup,\, \anegate\, , \top, \, \cstepd)
\end{eqnarray*}
forms a Boolean algebra, and atomic steps $a,b\in \AtomicSteps$ satisfy the interchange axiom:
\begin{eqnarray}
(a \Seq c) \sqcup (b \Seq d) &=&  (a \sqcup b) \Seq (c \sqcup d) \label{A-step-sup-interchange}
\end{eqnarray}
However, unlike for tests, this does not imply that atomic steps are closed under sequential composition, because $\cstepd$ is not defined to be $\Nil$, the identity of sequential composition.

As for commands, the meet corresponds to non-deterministic choice, $a \nondet b$, and can behave as either $a$ or $b$. The join of two steps, $a \sqcup b$, can be thought of as a step that both $a$ and $b$ agree to do.
Distinguished element $\cstepd$ is the least step in the refinement lattice, i.e. the step such that $a \nondet \cstepd = \cstepd$ for any atomic step $a$. It corresponds to the non-deterministic choice over all possible atomic steps. The greatest step in the refinement lattice, $\top$ (also the greatest command in $\Commands$), can be thought of as a step that cannot be taken because it is infeasible. From the axioms of Boolean algebra we have that
\begin{eqnarray*}
a \sqcup \anegate a  =  \top &\textrm{~and~}& a \nondet \anegate a = \cstepd 
\end{eqnarray*}
which represents the fact that steps $a$ and $\anegate{a}$ have no common behaviour, and that $\anegate{a}$ has all the step behaviours that $a$ does not have.
Negation for tests ($\tnegate$) differs from negation for atomic steps ($\anegate$) 
because we have $\tnegate \top = \Nil$ but $\anegate \top = \cstepd$.

\subsubsection{Assumptions}\label{S-assumptions}

The inclusion of a negation operator on steps allows one to define an equivalent of an assertion (see Section~\ref{S-tests}) for atomic steps on the abstract level.
For any step $a$ define,
\begin{eqnarray}
\Assume{a} \sdef a \nondet \anegate a \Seq \bot~. \label{def-assume}
\end{eqnarray}
The command $\Assume{a}$ behaves as $a$ and terminates, or as $\anegate{a}$ and aborts.
It represents an assumption that step $a$ occurs in the sense that any other step violates the assumption, leading to a state in which any other behaviour is then possible.
It becomes a useful tool when reasoning with rely conditions 
which specify assumptions about the environment's behaviour (see Section~\ref{SS-rely-guar}).  
We have that conjunction of two assumptions may be simplified as follows.
\begin{lemma}[conjoin-assumptions]\label{L-conjoin-assume}
For $a,b \in \AtomicSteps$, we have that
$\Assume{a} \sqcup \Assume{b}  =  \Assume{(a \nondet b)}$~.
\end{lemma}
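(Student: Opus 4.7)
The plan is to expand both sides using the definition of $\Assume{-}$ and reduce them to a common normal form, relying on (i) distributivity of $\sqcup$ over $\nondet$ in the command lattice, (ii) the atomic-step interchange axiom (\ref{A-step-sup-interchange}) to absorb cross terms of the form $a \sqcup \anegate{b} \Seq \bot$, and (iii) the Boolean algebra structure of $\AtomicSteps$ together with De~Morgan to finish.

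First I would expand the left-hand side:
\begin{eqnarray*}
\Assume{a} \sqcup \Assume{b} &=& (a \nondet \anegate{a} \Seq \bot) \sqcup (b \nondet \anegate{b} \Seq \bot),
\end{eqnarray*}
and distribute $\sqcup$ over $\nondet$ to get four disjuncts:
$(a \sqcup b) \nondet (a \sqcup \anegate{b} \Seq \bot) \nondet (\anegate{a} \Seq \bot \sqcup b) \nondet (\anegate{a} \Seq \bot \sqcup \anegate{b} \Seq \bot)$.

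Next I would rewrite each atomic step as $a = a \Seq \Nil$ and apply the interchange axiom (\ref{A-step-sup-interchange}) to each cross term. Using $\Nil \sqcup \bot = \Nil$ (since $\bot$ is the lattice bottom), this gives $a \sqcup \anegate{b} \Seq \bot = a \sqcup \anegate{b}$, and similarly $\anegate{a} \Seq \bot \sqcup b = \anegate{a} \sqcup b$, while the last term collapses to $(\anegate{a} \sqcup \anegate{b}) \Seq \bot$. Thus the LHS becomes
\begin{eqnarray*}
(a \sqcup b) \nondet (a \sqcup \anegate{b}) \nondet (\anegate{a} \sqcup b) \nondet (\anegate{a} \sqcup \anegate{b}) \Seq \bot.
\end{eqnarray*}

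Now the first three conjuncts live entirely in the Boolean sub-algebra of $\AtomicSteps$, so I can use its laws. By distributivity and $b \nondet \anegate{b} = \cstepd$ with $a \sqcup \cstepd = a$, the first two combine to $a$; then $a \nondet (\anegate{a} \sqcup b) = (a \nondet \anegate{a}) \sqcup (a \nondet b) = \cstepd \sqcup (a \nondet b) = a \nondet b$. By De~Morgan in the atomic Boolean algebra, $\anegate{a} \sqcup \anegate{b} = \anegate{(a \nondet b)}$, so the LHS reduces to $(a \nondet b) \nondet \anegate{(a \nondet b)} \Seq \bot$, which is exactly $\Assume{(a \nondet b)}$ by (\ref{def-assume}).

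The main obstacle I anticipate is keeping track of which operations live in which sub-algebra: the interchange axiom requires genuine atomic steps on both sides of the $\sqcup$, so the trick of writing $a$ as $a \Seq \Nil$ and exploiting $\Nil \sqcup \bot = \Nil$ has to be done carefully before any Boolean simplification. Once the cross terms have been absorbed, the remainder is routine Boolean-algebra calculation in $\AtomicSteps$.
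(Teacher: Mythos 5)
Your proposal is correct and follows essentially the same route as the paper's proof: expand both assumptions, distribute $\sqcup$ over $\nondet$, write each bare atomic step as $a \Seq \Nil$ so the interchange axiom (\ref{A-step-sup-interchange}) applies to the cross terms, collapse $\Nil \sqcup \bot$ to $\Nil$, and finish with distributivity and De~Morgan in the Boolean algebra of atomic steps. The only difference is that you spell out the final Boolean-algebra simplification, which the paper compresses into a single step.
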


\begin{proof}
\begin{displaymath}
\Assume{a} \sqcup \Assume{b}
\Equals
(a \nondet \anegate a \Seq \bot) \sqcup (b \nondet \anegate b\Seq \bot)
\Equals*[distribute conjunction over choices, $\Nil$ is the identity of sequential composition (\ref{A-seq-identity})]
(a \sqcup b) \nondet
(a\Seq \Nil \sqcup \anegate b \Seq \bot) \nondet
(\anegate a \Seq \bot \sqcup b \Seq \Nil) \nondet
(\anegate a\Seq \bot \sqcup \anegate b\Seq \bot)
\Equals*[interchange atomic steps over conjunction (\ref{A-step-sup-interchange})]
(a \sqcup b) \nondet
(a \sqcup \anegate b)\Seq (\Nil \sqcup \bot) \nondet
(\anegate a \sqcup b)\Seq (\bot \sqcup \Nil) \nondet
(\anegate a \sqcup \anegate b)\Seq \bot
\Equals*[use $\Nil\sqcup\bot = \Nil$, and $\Nil$ is the identity of sequential composition (\ref{A-seq-identity})]
(a \sqcup b) \nondet
(a \sqcup \anegate b) \nondet
(\anegate a \sqcup b) \nondet
(\anegate a \sqcup \anegate b)\Seq \bot
\Equals*[simplify using the Boolean algebra of atomic steps]
(a \nondet b) \nondet
\anegate (a\nondet b) \Seq \bot
\Equals
\Assume{(a \nondet b)}
\end{displaymath}
\end{proof}

From which we have that assumptions are anti-monotonic in their argument, corresponding to the intuition that we can weaken an assumption in a refinement step. 
\begin{lemma}[weaken-assume]\label{L-weaken-assume}
For $a,b \in \AtomicSteps$, if $b \refsto a$ then $\Assume{a} \refsto \Assume{b}$.
\end{lemma}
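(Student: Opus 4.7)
The plan is to obtain this as a short corollary of \reflemma{conjoin-assume} together with standard lattice manipulation, avoiding any fresh unfolding of the definition of $\Assume{\cdot}$.

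First, I would reformulate the hypothesis $b \refsto a$ in a form compatible with \reflemma{conjoin-assume}. Since refinement is defined by $c \refsto d \iff c \nondet d = c$, the assumption $b \refsto a$ gives $b \nondet a = b$, and by commutativity of $\nondet$ also $a \nondet b = b$. This puts the hypothesis into exactly the shape needed to feed into the previous lemma.

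Next, I would apply \reflemma{conjoin-assume} to rewrite
\[
  \Assume{a} \sqcup \Assume{b} \;=\; \Assume{(a \nondet b)} \;=\; \Assume{b},
\]
using the rewriting $a \nondet b = b$ from the previous step. Finally, I would invoke the standard lattice characterisation of the order in terms of the join: in any lattice, $x \le y \iff x \sqcup y = y$. Since the refinement order is the meet-order induced by $\nondet$ on the lattice $(\Commands,\nondet,\sqcup)$, this characterisation reads $x \refsto y \iff x \sqcup y = y$. Hence the equation $\Assume{a} \sqcup \Assume{b} = \Assume{b}$ is precisely $\Assume{a} \refsto \Assume{b}$, as required.

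There is no real obstacle here; the only delicate point to get right is the direction of the order, i.e.\ making sure that (anti-)monotonicity really does fall out correctly from $\Assume{\cdot}$ sending $\nondet$ to $\sqcup$. Concretely, because $\Assume{\cdot}$ converts choice to conjunction, a smaller (more refined) argument produces a larger (less refined) assumption, which is exactly the anti-monotonicity statement of the lemma. Once that directionality is pinned down, the chain of equalities above is a one-line calculation.
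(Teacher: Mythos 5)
Your proof is correct and follows essentially the same route as the paper: both reduce the claim to Lemma~\ref{L-conjoin-assume} via $a \nondet b = b$ and the identity $\Assume{a} \sqcup \Assume{b} = \Assume{(a \nondet b)} = \Assume{b}$. The only cosmetic difference is that you invoke the join characterisation $x \refsto y \iff x \sqcup y = y$ explicitly, whereas the paper writes the equivalent chain $\Assume{a} \refsto \Assume{a} \sqcup \Assume{b} = \Assume{b}$; these are the same lattice fact.
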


\begin{proof}
From the assumption, $a \nondet b = b$, hence
\(
  \Assume{a} \refsto \Assume{a} \sqcup \Assume{b} = \Assume{(a \nondet b)} = \Assume{b}~
\)
\end{proof}

\begin{lemma}[iterated-assumption]\label{L-iterated-assumption}
For $a \in \AtomicSteps$,
\(
\Om{(\Assume{a})}  =  \Om{a} \Seq (\Nil \nondet \anegate a \Seq \bot)~.  
\)
\end{lemma}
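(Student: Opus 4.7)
My plan is to establish the equality by mutual refinement: I will show $\Om{(\Assume{a})} \refsto R$ and $R \refsto \Om{(\Assume{a})}$, writing $R$ for the right-hand side $\Om{a} \Seq (\Nil \nondet \anegate{a} \Seq \bot)$.

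For the easier direction $\Om{(\Assume{a})} \refsto R$, I apply the omega-induction rule (\ref{L-omega-induction}) with $c = \Assume{a}$ and $x = R$, so it suffices to verify that $R$ is a fixed point of the map $y \mapsto \Nil \nondet \Assume{a} \Seq y$. Expanding $\Assume{a}$ by (\ref{def-assume}), distributing sequential composition over choice, and collapsing $\bot \Seq R$ via the left-annihilation law (\ref{A-seq-annihilation-left}), the target expression simplifies to $\Nil \nondet a \Seq R \nondet \anegate{a} \Seq \bot$. Unfolding the $\Om{a}$ inside $R$ via (\ref{L-omega-unfold}) and distributing over the trailing $(\Nil \nondet \anegate{a} \Seq \bot)$ produces exactly this expression, so the hypothesis of (\ref{L-omega-induction}) holds and this direction follows.

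For the converse direction $R \refsto \Om{(\Assume{a})}$, I rewrite both sides using the isolation laws: by (\ref{L-iteration1}), $R = \Fin{a} \Seq (\Nil \nondet \anegate{a} \Seq \bot) \nondet \Inf{a}$, and by (\ref{L-isolation}), $\Om{(\Assume{a})} = \Fin{(\Assume{a})} \nondet \Inf{(\Assume{a})}$. The key intermediate identity is
\[
\Fin{(\Assume{a})} = \Fin{a} \Seq (\Nil \nondet \anegate{a} \Seq \bot),
\]
which I will prove by inducting on $i$ to show $\Assume{a}^i = a^i \nondet \Nondet_{j < i}\, a^j \Seq \anegate{a} \Seq \bot$ (using distributivity and the absorption $\anegate{a} \Seq \bot \Seq X = \anegate{a} \Seq \bot$ at each step) and then taking the non-deterministic choice over $i$ via (\ref{L-finite-iteration}). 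After substitution the two decompositions share their finite component, so the equality reduces to showing that the extra aborting traces in $\Inf{(\Assume{a})}$ are already captured by the $\Fin{a} \Seq \anegate{a} \Seq \bot$ summand of $\Fin{(\Assume{a})}$, while the purely infinite $a$-traces coincide with $\Inf{a}$. I will obtain this from monotonicity of $\Inf$ (using $\Assume{a} \refsto a$ to get $\Inf{(\Assume{a})} \refsto \Inf{a}$) combined with the calculation $R \Seq \top = \Om{a} \Seq \anegate{a} \Seq \bot = \Fin{a} \Seq \anegate{a} \Seq \bot \nondet \Inf{a}$, using $(\Nil \nondet \anegate{a} \Seq \bot) \Seq \top = \anegate{a} \Seq \bot$ together with $\Nil \Seq \top = \top$ and $\bot \Seq \top = \bot$.

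The hard step will be the final absorption argument in the second direction. While the semantic picture is clear -- every trace of $\Inf{(\Assume{a})}$ is either purely infinite $a$-iteration (captured by $\Inf{a}$) or an aborting trace of the form $a^n \Seq \anegate{a} \Seq \bot$ (captured by the $\Fin{a} \Seq \anegate{a} \Seq \bot$ summand of $\Fin{(\Assume{a})}$) -- carrying it out purely algebraically is subtle because the axioms supply induction rules for $\Fin$ and $\Om$ but not directly for $\Inf$, so one must exploit the absorbing nature of $\anegate{a} \Seq \bot$ together with the representation $\Inf{c} = \Om{c} \Seq \top$ from (\ref{inf-iter-def}).
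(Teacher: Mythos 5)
Your first direction is fine: $R = \Om{a} \Seq (\Nil \nondet \anegate{a} \Seq \bot)$ is indeed a fixed point of $y \mapsto \Nil \nondet \Assume{a} \Seq y$ (the computation you sketch goes through using (\ref{L-omega-unfold}), distribution and $\bot \Seq R = \bot$), so (\ref{L-omega-induction}) gives $\Om{(\Assume{a})} \refsto R$. The intermediate identity $\Fin{(\Assume{a})} = \Fin{a} \Seq (\Nil \nondet \anegate{a} \Seq \bot)$ is also correct and provable as you describe. The gap is in the converse direction, exactly where you flag it. After your decompositions, what remains is $R \refsto \Inf{(\Assume{a})}$, i.e.\ that every behaviour of $\Inf{(\Assume{a})}$ is already a behaviour of $\Fin{a} \Seq \anegate{a} \Seq \bot \nondet \Inf{a}$. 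Neither of the two ingredients you propose can deliver this: monotonicity gives $\Inf{(\Assume{a})} \refsto \Inf{a}$, and composing the already-proved first direction with $\Seq\,\top$ gives $\Inf{(\Assume{a})} = \Om{(\Assume{a})} \Seq \top \refsto R \Seq \top$ --- both are refinements \emph{of} $\Inf{(\Assume{a})}$, i.e.\ inequalities pointing the same way as your first direction, whereas you need an inequality pointing the other way. Being a fixed point only yields $\Om{(\Assume{a})} \refsto R$, since $\Om{}$ is a \emph{least} fixed point; to get $R \refsto \Om{(\Assume{a})}$ you must show $R$ is the least one, and the listed laws supply no induction principle for $\Inf{}$ or co-induction for $\Om{}$ that would let you do this by unfolding and absorption alone.

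The paper closes this in one step by invoking the omega decomposition law of DRA, $\Om{(c \nondet d)} = \Om{c} \Seq \Om{(d \Seq \Om{c})}$, instantiated with $c = a$ and $d = \anegate{a} \Seq \bot$; the whole lemma then reduces to $\anegate{a} \Seq \bot \Seq \Om{a} = \anegate{a} \Seq \bot$ (left annihilation) and one unfold of $\Om{(\anegate{a} \Seq \bot)}$. The hard direction of that decomposition law is precisely the containment your plan leaves open, so you are not avoiding the difficulty --- you are reproving a special case of it. Either cite and use the decomposition law (it is a standard DRA result, provable by a fusion/rolling argument from the fixed-point axioms), or supply an explicit proof of $\Fin{a} \Seq \anegate{a} \Seq \bot \nondet \Inf{a} \refsto \Inf{(\Assume{a})}$; as it stands this step is asserted, not proved.
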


\begin{proof}
The iteration can be simplified using the decomposition lemma of DRA, 
i.e. $\Om{(c \sqcap d)} = \Om{c} \Seq \Om{(d \Seq \Om{c})}$.
\[
  \Om{(\Assume{a})}
 =
  \Om{(a \nondet \anegate a \Seq \bot)}
 = 
  \Om{a} \Seq \Om{(\anegate a \Seq \bot \Seq \Om{a})}
 =
  \Om{a} \Seq \Om{(\anegate a \Seq \bot)}
 =
  \Om{a} \Seq (\Nil \nondet \anegate a \Seq \bot)
\]
\end{proof}

\subsubsection{Encoding in Isabelle}

Our encoding of the sub-algebra of atomic steps in Isabelle follows the same format as that of the Boolean Algebra of tests presented in Section~\ref{S-Boolean-encoding}, and is achieved using an injective homomorphism from some Boolean algebra, $(\BooleanAlg, \union, \inter, \bar{~~}, \bot, \top)$, (typically not the same Boolean algebra used for tests) into the Boolean sub-algebra of atomic steps, $\cstepd: \BooleanAlg \rightarrow \Commands$.

\subsection{Atomic steps and tests}\label{S-atomictests}

Extending a DRA with both atomic steps and tests gives us a three-sorted algebra
\begin{eqnarray*}
(\Commands, \Tests, \AtomicSteps, \Nondet, \bigsqcup,  ~\Seq\, , ~\tnegate\, , \anegate\, , \Nil, \cstepd)
\end{eqnarray*}
to which we add additional axioms to describe the interactions of steps and tests. As in the Figure~\ref{lattices},
we assume that tests and atomic steps share only one element ($\top$) and hence include
\vspace*{-1ex}
\begin{eqnarray}
 \cstepd \sqcup \Nil = \top \label{A-test-atomic-sup}
\vspace*{-3ex}
\end{eqnarray}
in our axiomatisation. We also take as an axiom that an atomic step preceded by any test $t$ is also a step:
\vspace*{-1ex}
\begin{eqnarray}
t \Seq a &\in & \AtomicSteps \label{A-test-atomic-pre}
\vspace*{-3ex}
\end{eqnarray}
although we make no assumption about the succession of a step by a test.

\section{Synchronous Refinement Algebra}\label{S-sync-algebra}

\begin{figure}
\figurerule\\
For any commands $c, d \in \Commands$, set of commands $D \subseteq \Commands$, 
atomic steps $a, b \in \AtomicSteps$, and test $t \in \Tests$,
\\[-2ex]
\begin{minipage}[t]{0.50\textwidth}
\begin{eqnarray}
c_0 \sync (c_1 \sync c_2) &=& (c_0 \sync c_1) \sync c_2 \label{A-sync-assoc}\\
c \sync d &=& d \sync c \label{A-sync-comm}\\
c \sync \syncidcommand &=& c \label{A-sync-id-command} \\
D \neq \emptyset ~\implies~ c \sync (\Nondet D) &=& (\Nondet_{d \in D} c \sync d) \label{A-sync-Inf-distrib}\\
a \sync b &\in& \AtomicSteps \label{A-sync-closed}\\
a \sync ~{\syncid}   &=& a \label{A-sync-id} 
\end{eqnarray}
\end{minipage}%
\begin{minipage}[t]{0.50\textwidth}
\begin{eqnarray}
c_0 \Seq c_1 \sync d_0 \Seq d_1 &\refsto& (c_0 \sync d_0) \Seq (c_1 \sync d_1) \label{A-sync-weak-interchange-seq} \\ 
a \Seq c \sync b \Seq d &=& (a \sync b) \Seq (c \sync d) \label{A-atomic-sync-interchange}\\
\Inf{a} \sync \Inf{b} & = & \Inf{(a \sync b)}  \label{A-atomic-infiter-sync}\\
\Nil \sync \Nil &=& \Nil  \label{A-nil-sync-absorb}\\
a \Seq c \sync \Nil &=& \top \label{A-nil-sync-atomic} \\
t \Seq c \sync t \Seq d &=& t \Seq (c \sync d) \label{A-test-sync-interchange}
\end{eqnarray}

\end{minipage}\\[1ex]
\figurerule
\caption{Axioms for the synchronisation operator}\label{figure:axioms-sync}
\end{figure}

This section adds an abstract synchronisation operator, $\sync$, to the algebra
to form a Synchronous Refinement Algebra (SRA). 
We take the DRA with atomic steps and tests (Section~\ref{S-general-algebra}) 
and introduce a synchronisation operation $\sync$ 
and distinguished elements $Id_{}\in \Commands$ and $\syncid \in \AtomicSteps$. 
The axiomatisation is sufficiently general so as to allow multiple interpretations of the operation, 
e.g. as parallel composition and weak conjunction (see Section~\ref{S-instantiations}).

Figure~\ref{figure:axioms-sync} gives the axioms for the abstract synchronisation operator. 
First, we have that $(\Commands, \sync, \syncidcommand)$ is a commutative monoid over commands. 
That is, $\sync$ is associative (\ref{A-sync-assoc}), commutative (\ref{A-sync-comm}) and 
has identity $\syncidcommand$ (\ref{A-sync-id-command}). 
Note that the identity typically differs for the different instantiations of $\sync$, 
e.g.\ as parallel or weak conjunction in Section~\ref{S-instantiations}.
The operator $\sync$ distributes over non-deterministic choices over non-empty sets of commands (\ref{A-sync-Inf-distrib}), 
and so it is monotonic. 
For an empty set of commands $\Nondet \emptyset = \top$, 
but we do not have a law of the form $c \sync \top = \top$ 
because operators like parallel are abort strict, i.e.~$\bot \parallel d = \bot$, 
where $\bot$ represents the command abort, 
the bottom of the refinement lattice and hence $\bot \parallel \top = \bot$.

Atomic steps are closed under the synchronisation operator (\ref{A-sync-closed}) and have atomic-step identity $\syncid \in \AtomicSteps$ (\ref{A-sync-id}). From these properties, we have for example that for any atomic step $a \in \AtomicSteps$
\begin{eqnarray}
a \sync \cstepd
= a \sync (\syncid \sqcap \anegate \syncid)
= a \sync \syncid \sqcap a \sync \anegate \syncid 
& \refsto &  a \sync \syncid  =  a ~,\label{A-sync-cstepd}
\end{eqnarray}
and so taking $a$ to be $\cstepd$ gives $\cstepd \sync \cstepd = \cstepd$
because $\cstepd$ is the least atomic step command. 

For arbitrary commands, operator $\sync$ and sequential composition 
satisfy a weak interchange axiom (\ref{A-sync-weak-interchange-seq}).
A sequence of commands $c_0 \Seq c_1$ may synchronise with a sequence $d_0 \Seq d_1$
by synchronising $c_0$ with $d_0$ and then synchronise $c_1$ with $d_1$.
The axiom is only a refinement because synchronising $c_0$ and $d_0$
(or $c_1$ and $d_1$) may be infeasible,
whereas on the left $c_0$ may synchronise with the whole of $d_0$ and part of $d_1$
and $c_1$ with the rest of $d_1$, or vice versa.
Axiom (\ref{A-atomic-sync-interchange}) describes 
how the synchronisation of complex commands decomposes into 
the synchronisation of the atomic steps that constitute it:  
two commands that both have leading atomic steps are synchronised by 
synchronising the leading atomic steps, 
followed by the synchronisation of the remainder of the commands. 
If $a$ and $b$ cannot synchronise then $a \sync b$ is infeasible ($\top$). 
We take synchronisation of infinite iterations of atomic steps as an axiom (\ref{A-atomic-infiter-sync}), although independence of that axiom from the others is an open question.

The distinguished test $\Nil$, the identity of sequential composition, terminates immediately without performing any atomic steps at all, and so it synchronises with itself (\ref{A-nil-sync-absorb}), but when synchronised with a process that must perform an atomic step before terminating,  their composition is infeasible (\ref{A-nil-sync-atomic}). Since synchronisation is monotonic, we have from (\ref{A-nil-sync-atomic}) that for any arbitrary test $t\in \Tests$,
\begin{eqnarray}
a\Seq c \sync t & = & \top \label{A-sync-test} ~,
\end{eqnarray}
and more specifically
\begin{eqnarray}
a\Seq c \sync \top & = & \top \label{A-sync-top}
\end{eqnarray}
because $\Nil \refsto t \refsto \top$.
Axiom (\ref{A-test-sync-interchange}) defines how tests distribute over synchronisation for arbitrary commands. Although we \emph{do not} in general have the stronger axiom $c \sync t\Seq d = t \Seq (c \sync d)$, we do have that the property holds, for example, when
(i) $c$ is a test,
(ii) $c$ is preceded by an atomic step, or
(iii) $c$ is a refinement of $\syncidcommand$.
\begin{lemma}[test-command-sync-command] \label{L-test-command-sync-command}
For any atomic command $a \in \AtomicSteps$, commands $c, c', d \in \Commands$,  and tests $t,t' \in \Tests$, we have that if either $c = t'$, or $c = a\Seq c'$,  or $\syncidcommand \refsto c$, then
\begin{eqnarray*}
c \sync t\Seq d  &=& t \Seq (c \sync d)
\end{eqnarray*}
holds.
\end{lemma}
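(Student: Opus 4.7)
The plan is to unify the three cases via a single test-decomposition manoeuvre, reducing the lemma to showing that a case-specific residual term is absorbed by $\nondet$. The key step is to insert $\Nil = t \nondet \tnegate t$ on the left of $c$, which lets us invoke axiom (\ref{A-test-sync-interchange}) on the $t$-prefixed branch. Using (\ref{A-seq-identity}), (\ref{A-seq-distr-right}), (\ref{A-sync-Inf-distrib}), and (\ref{A-test-sync-interchange}):
\begin{align*}
c \sync t \Seq d
 & = (t \Seq c \nondet \tnegate t \Seq c) \sync t \Seq d \\
 & = (t \Seq c \sync t \Seq d) \nondet (\tnegate t \Seq c \sync t \Seq d) \\
 & = t \Seq (c \sync d) \nondet (\tnegate t \Seq c \sync t \Seq d).
\end{align*}
It therefore suffices, in each case, to show that the residual $\tnegate t \Seq c \sync t \Seq d$ is refined by $t \Seq (c \sync d)$; in practice I would show it equals $\top$.

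For case (iii), $\syncidcommand \refsto c$, the sub-case $c = \syncidcommand$ is immediate from axiom (\ref{A-sync-id-command}) since both sides reduce to $t \Seq d$. For a general $c$ refining $\syncidcommand$, monotonicity of $\sync$ (from \ref{A-sync-Inf-distrib}) together with the decomposition above gives both inequalities: the residual for $c$ refines the residual for $\syncidcommand$, which is itself already absorbed, so the refinement lifts.

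For case (ii), $c = a \Seq c'$, axiom (\ref{A-test-atomic-pre}) tells us that $\tnegate t \Seq a$ is atomic, so the residual has an atomic-led left side $(\tnegate t \Seq a) \Seq c'$. I would then apply the symmetric decomposition on $t \Seq d$ and invoke axiom (\ref{A-atomic-sync-interchange}) at the atomic frontier together with the derived fact $a \Seq e \sync t = \top$ from (\ref{A-sync-test}); the mismatched leading tests $\tnegate t$ and $t$ then collapse via the Boolean law $t \sqcup \tnegate t = \top$, forcing the residual to $\top$. Case (i), $c = t'$, is handled analogously but now in the pure-test world: tests commute under $\Seq$ by (\ref{test-seq-test}), and iterating the decomposition on $t \Seq d$ together with repeated use of (\ref{A-test-sync-interchange}) exposes a prefix of the form $\tnegate t \Seq t = \top$, again collapsing the residual to $\top$.

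The main obstacle will be case (i). Without an atomic step to carry the test conflict down to an obviously infeasible step (as in case (ii), where (\ref{A-test-atomic-pre}) and (\ref{A-sync-test}) settle things quickly), the iterated test decomposition on both sides risks being circular --- each application of $\Nil = t \nondet \tnegate t$ reproduces a similar residual, and only the Boolean identities $t \sqcup \tnegate t = \top$ and $\tnegate t \Seq t = \top$ together with the test interchange (\ref{A-test-sup-interchange}) and (\ref{A-test-sync-interchange}) force termination at $\top$. Case (iii) by contrast is essentially a routine consequence of the sync identity (\ref{A-sync-id-command}).
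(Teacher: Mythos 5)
Your opening decomposition is close to the paper's, but you apply it in the wrong place, and that is where the gap lies. You split $c$ into $t \Seq c \nondet \tnegate t \Seq c$ and distribute, which leaves the residual $\tnegate t \Seq c \sync t \Seq d$ with \emph{mismatched} test prefixes on the two sides of $\sync$. The only axiom coupling tests to $\sync$ is (\ref{A-test-sync-interchange}), and it requires the \emph{same} test on both arguments, so none of your tools can touch that residual. The paper instead prefixes the whole composition: it computes $\tnegate t \Seq (c \sync t \Seq d)$ and uses (\ref{A-test-sync-interchange}) \emph{right-to-left} to duplicate $\tnegate t$ into both arguments, producing $\tnegate t \Seq c \sync \tnegate t \Seq t \Seq d = \tnegate t \Seq c \sync \top$ via $\tnegate t \Seq t = \top$ from (\ref{test-seq-test}). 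That residual, with $\top$ rather than $t \Seq d$ on the right, is exactly what the three hypotheses are tailored to kill: (\ref{A-sync-test}) when $c$ is a test, (\ref{A-sync-top}) when $c$ is atomic-prefixed, and $\syncidcommand \sync \top = \top$ plus monotonicity when $\syncidcommand \refsto c$. The paper then concludes with $(t \nondet \tnegate t) \Seq (c \sync t \Seq d)$ and (\ref{A-seq-distr-right}), where your step~1 (the $t$-branch) does appear, proved by pushing $t$ into both arguments and back.

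Your individual case arguments inherit this problem and have further defects. In case (ii) you propose a ``symmetric decomposition on $t \Seq d$'' and an application of (\ref{A-atomic-sync-interchange}) ``at the atomic frontier'', but $d$ is an arbitrary command with no atomic-prefix structure, so there is no frontier on the right to interchange with, and (\ref{A-sync-test}) needs the right-hand argument to be a bare test, not $t \Seq d$. In case (iii) your monotonicity argument runs the wrong way: from $\syncidcommand \refsto c$ you only get that the residual for $\syncidcommand$ refines the residual for $c$, and the most you can extract about the former from (\ref{A-sync-id-command}) is $t \Seq d \refsto \tnegate t \Seq \syncidcommand \sync t \Seq d$; chaining gives $t \Seq d \refsto \tnegate t \Seq c \sync t \Seq d$, which is strictly weaker than the required $t \Seq (c \sync d) \refsto \tnegate t \Seq c \sync t \Seq d$ because $t \Seq d \refsto t \Seq (c \sync d)$ and not conversely. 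Case (i) you rightly flag as circular; with your residual it is. The fix is not more case analysis but the single manoeuvre above: decompose $\Nil = t \nondet \tnegate t$ in front of the \emph{entire} term $c \sync t \Seq d$ so that (\ref{A-test-sync-interchange}) can carry $\tnegate t$ into both operands at once.
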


\begin{proof}
First we have that
\begin{displaymath}
t \Seq ( c \sync t\Seq d) 
=  (t\Seq c \sync t\Seq t \Seq d)
=  ( t \Seq c \sync t\Seq d) 
= t \Seq (c \sync d) 
\end{displaymath}
by distributing tests over synchronisation (\ref{A-test-sync-interchange}), simplifying sequential composition of tests using (\ref{test-seq-test}), and distributing the test back over the synchronisation. 
Next, we have
\begin{displaymath}
\tnegate t \Seq ( c \sync t\Seq d) 
\Equals*[distribute test over synchronisation (\ref{A-test-sync-interchange})]
(\tnegate t \Seq c \sync \tnegate t \Seq t \Seq d) 
\Equals*[$\tnegate t \Seq t = \top$ from (\ref{test-seq-test}) and $\top\Seq d = \top$ from (\ref{A-seq-distr-right})]
\tnegate t \Seq c \sync \top
\end{displaymath}
which simplifies to $\top$ using either
(\ref{A-sync-test}) if $c = t'$,
(\ref{A-sync-top}) if $c = a \Seq d'$,
and  $\syncidcommand \sync \top  = \top$ (\ref{A-sync-id}) and monotonicity of synchronisation (\ref{A-sync-Inf-distrib}) if $\syncidcommand \refsto c$.

Using $\Nil$ is the identity of sequential composition (\ref{A-seq-identity}), distribution of choice (\ref{A-seq-distr-right}) and the above results we finally have:
\begin{displaymath}
  c \sync t \Seq d
= \Nil \Seq (c \sync t \Seq d)
= (t \nondet \tnegate t) \Seq (c \sync t \Seq d)
= t \Seq (c \sync t\Seq d) \nondet \tnegate t \Seq (c \sync t \Seq d)
= t \Seq (c \sync d) ~.
\end{displaymath}
\end{proof}

From this, we can also show that the synchronisation operator behaves like a conjunction on tests. 
Hence tests, like atomic steps, are closed under synchronisation.
\begin{lemma}[test-sync-test] \label{L-test-sync-test}
For tests $t, t' \in \Tests$, we have
$t \sync t' = t \sqcup t'$.
\end{lemma}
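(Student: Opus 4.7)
The plan is to reduce the synchronisation of two tests to the sequential composition of those tests and then invoke (\ref{test-seq-test}), which already gives $t \Seq t' = t \sqcup t'$. The key leverage point is the preceding Lemma~\ref{L-test-command-sync-command}, instantiated with tests in the role of $c$ (tests satisfy the first premise $c = t'$ of the lemma).

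First I would establish the auxiliary fact $\Nil \sync t = t$ (and hence, by commutativity (\ref{A-sync-comm}), $t \sync \Nil = t$). Applying Lemma~\ref{L-test-command-sync-command} with $c := \Nil$ (a test), leading test $t$ on the right, and $d := \Nil$, we get
\[
  \Nil \sync t \Seq \Nil \;=\; t \Seq (\Nil \sync \Nil),
\]
which collapses to $\Nil \sync t = t$ after using the identity law (\ref{A-seq-identity}) on each side and the absorption axiom (\ref{A-nil-sync-absorb}).

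Next I apply Lemma~\ref{L-test-command-sync-command} again, this time with $c := t$ (a test) and the right-hand side rewritten as $t' \Seq \Nil$:
\[
  t \sync t' \;=\; t \sync t' \Seq \Nil \;=\; t' \Seq (t \sync \Nil) \;=\; t' \Seq t.
\]
Finally, (\ref{test-seq-test}) says $t' \Seq t = t' \sqcup t$, which equals $t \sqcup t'$ by commutativity of join in the Boolean algebra of tests, completing the argument.

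There is no real obstacle here since Lemma~\ref{L-test-command-sync-command} has already done the heavy lifting; the only care required is to remember that $\syncidcommand$ is not in general $\Nil$, so one cannot use (\ref{A-sync-id-command}) to dispose of $t \sync \Nil$ directly — the auxiliary computation via the lemma and (\ref{A-nil-sync-absorb}) is what makes that step go through.
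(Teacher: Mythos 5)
Your proof is correct and follows essentially the same route as the paper's: both apply Lemma~\ref{L-test-command-sync-command} twice to pull the tests out of the synchronisation, reduce the residue to $\Nil \sync \Nil = \Nil$ via (\ref{A-nil-sync-absorb}), and finish with (\ref{test-seq-test}). The only difference is bookkeeping --- you first isolate the auxiliary fact $t \sync \Nil = t$, whereas the paper extracts both tests in one chain --- and your closing caution about not confusing $\Nil$ with $\syncidcommand$ is well placed.
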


\begin{proof}  
Since $t \in \Tests$, we can use Lemma~\ref{L-test-command-sync-command} to distribute the tests over the synchronisation to the left. Applying (\ref{A-seq-identity}), Lemma~\ref{L-test-command-sync-command} twice, (\ref{A-nil-sync-absorb}) and (\ref{test-seq-test}), gives
\begin{displaymath}
   t \sync t'
=  (t\Seq \Nil) \sync (t' \Seq \Nil)
=  t\Seq t' \Seq (\Nil \sync \Nil)
= (t \Seq t) \Seq \Nil
=  t \sqcup t' ~.
\end{displaymath}
\end{proof}

The synchronisation operator does not in general distribute over sequential composition
but the weak interchange axiom (\ref{A-sync-weak-interchange-seq}) can be used to show
a form of distribution.
\begin{lemma}[sync-distribute-seq]\label{L-sync-distribute-seq}
If $c \refsto c \Seq c$, then
$c \sync (d_0 \Seq d_1) \refsto (c \sync d_0) \Seq (c \sync d_1)$.
\end{lemma}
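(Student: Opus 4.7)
The plan is to chain two refinements together: first use the hypothesis $c \refsto c \Seq c$ with monotonicity of $\sync$ to replace the single $c$ on the left-hand side by $c \Seq c$, and then apply the weak sequential interchange axiom (\ref{A-sync-weak-interchange-seq}) to split the synchronisation across the two sequential halves.

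Concretely, I would proceed as follows. First, from the hypothesis $c \refsto c \Seq c$, monotonicity of $\sync$ (a consequence of distribution over non-empty choices, axiom (\ref{A-sync-Inf-distrib})) yields
\begin{equation*}
  c \sync (d_0 \Seq d_1) \refsto (c \Seq c) \sync (d_0 \Seq d_1)~.
\end{equation*}
Second, instantiating the weak interchange axiom (\ref{A-sync-weak-interchange-seq}) with $c_0 = c_1 = c$ gives
\begin{equation*}
  (c \Seq c) \sync (d_0 \Seq d_1) \refsto (c \sync d_0) \Seq (c \sync d_1)~.
\end{equation*}
Transitivity of $\refsto$ then delivers the desired inequality.

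There is no real obstacle here: the lemma is essentially a two-line consequence of the weak interchange axiom once the idempotency hypothesis is used to duplicate $c$. The only subtle point is to note that monotonicity of $\sync$ in each argument follows from axiom (\ref{A-sync-Inf-distrib}) (via the usual characterisation $x \refsto y \iff x \nondet y = x$ together with distribution over the non-empty set $\{x, y\}$), so invoking it is justified at this level of the algebra.
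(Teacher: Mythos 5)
Your proposal is correct and matches the paper's proof exactly: the paper also refines $c \sync (d_0 \Seq d_1)$ to $(c \Seq c) \sync (d_0 \Seq d_1)$ via the hypothesis and monotonicity, and then applies the weak interchange axiom (\ref{A-sync-weak-interchange-seq}). Your additional remark justifying monotonicity from (\ref{A-sync-Inf-distrib}) is a sound elaboration of a step the paper leaves implicit.
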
 
\begin{proof}
Using (\ref{A-sync-weak-interchange-seq}),
$c \sync (d_0 \Seq d_1) \refsto (c \Seq c) \sync (d_0 \Seq d_1) \refsto (c \sync d_0) \Seq (c \sync d_1)$.
\end{proof}

\section{Properties of iterations of atomic steps}\label{S-atomic-iteration}

In addition to defining programming statements such as while loops, 
iterators are used to build specifications from atomic steps.  
For instance, commands corresponding to Jones' rely and guarantee concepts are 
constructed as iterations of relatively straightforward commands 
that make assumptions about the steps of the environment and
constrain the steps of the program, respectively
(see Section~\ref{S-rely-guarantee}).  Below we provide
some useful properties of synchronisation over atomic iterations.

Isabelle/HOL proofs of these lemmas have been completed and 
they may be also found in Appendix~\ref{S-proofs} for review.
Note that all properties in this section are proven on the level of the 
(abstract) synchronisation operator and hence hold for any instantiations,
e.g.\ $\parallel$ and $\together$. 
The use of the abstract operator $\sync$ helps to highlight
these as properties that are shared by all synchronisation operators.

Because $\Nil$ performs no steps, if it synchronises with a (possibly) finite iteration,
the composition cannot perform any steps but can terminate and hence equals $\Nil$.
If $\Nil$ synchronises with an infinite iteration, 
the combination cannot perform any steps but cannot terminate,
and hence equals the infeasible command $\top$.
\begin{lemma}[atomic-iteration-nil]\label{L-atomic-iteration-nil}
Let $a$ be an atomic command.
\begin{eqnarray*}
  \Fin{a}  \sync \Nil = \Nil
\hspace*{1cm}
  \Om{a} \sync \Nil = \Nil
\hspace*{1cm}
  \Inf{a} \sync \Nil = \top
\end{eqnarray*}
\end{lemma}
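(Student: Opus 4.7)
The plan is to dispatch each of the three equalities by a single unfolding of the relevant iteration, followed by one or two of the base axioms for synchronisation with $\Nil$. The common ingredients are the unfolding laws (\ref{L-omega-unfold}), (\ref{L-finite-unfold}), (\ref{L-infinite-unfold}), the distributivity of $\sync$ over non-empty choice (\ref{A-sync-Inf-distrib}), and the two base axioms $\Nil \sync \Nil = \Nil$ from (\ref{A-nil-sync-absorb}) and $a \Seq c \sync \Nil = \top$ from (\ref{A-nil-sync-atomic}).

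For $\Inf{a} \sync \Nil = \top$, the shortest route is to rewrite $\Inf{a}$ as $a \Seq \Inf{a}$ using (\ref{L-infinite-unfold}); the right-hand side then matches the shape required by (\ref{A-nil-sync-atomic}) and collapses immediately to $\top$.

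For $\Fin{a} \sync \Nil = \Nil$, I would unfold $\Fin{a}$ once using (\ref{L-finite-unfold}) to obtain $\Nil \nondet a \Seq \Fin{a}$, and then distribute $\sync \Nil$ over this (non-empty) choice via (\ref{A-sync-Inf-distrib}). The first branch simplifies to $\Nil$ by (\ref{A-nil-sync-absorb}) and the second to $\top$ by (\ref{A-nil-sync-atomic}), leaving $\Nil \nondet \top = \Nil$ since $\top$ is the greatest element of the refinement lattice and hence the identity of meet. The argument for $\Om{a} \sync \Nil = \Nil$ is word-for-word the same, using (\ref{L-omega-unfold}) in place of (\ref{L-finite-unfold}).

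There is no real obstacle here: all three identities are essentially immediate calculations from the axioms, and the only point to notice is that the distributivity axiom (\ref{A-sync-Inf-distrib}) is restricted to non-empty sets of commands, which is trivially satisfied in the two applications above. As an alternative packaging, the middle equality could be derived as a corollary of the other two via the decomposition $\Om{a} = \Fin{a} \nondet \Inf{a}$ from (\ref{L-isolation}), but the direct single-unfold proof is marginally shorter.
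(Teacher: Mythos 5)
Your proof is correct and follows essentially the same route as the paper's: unfold each iteration once via (\ref{L-finite-unfold}), (\ref{L-omega-unfold}), (\ref{L-infinite-unfold}) and then apply the base axioms (\ref{A-nil-sync-absorb}) and (\ref{A-nil-sync-atomic}), with distribution over the non-empty choice handling the two branches. The paper's proof is just a terser statement of exactly this calculation, so no further comment is needed.
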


\begin{proof}
The properties follow from axioms (\ref{A-nil-sync-absorb}) and (\ref{A-nil-sync-atomic}) 
using unfolding of the iterations:
$\Fin{a} = \Nil \nondet a \Seq \Fin{a}$
and
$\Om{a} = \Nil \nondet a \Seq \Om{a}$
and
$\Inf{a} = a \Seq \Inf{a}$.
\end{proof}

For the following lemmas, 
let $a$ and $b$ be atomic steps, and $c$ and $d$ any commands.
Axiom (\ref{A-atomic-sync-interchange}) can be extended to iteration $i$ times 
as given in the following lemma, which is proven by induction on $i$.
\begin{lemma}[atomic-iteration-power]\label{L-atomic-iteration-power}
\(
 a^i \Seq c \sync b^i \Seq d  ~=~  (a\sync b)^i \Seq (c \sync d)
\)
\end{lemma}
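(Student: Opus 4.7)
The plan is to proceed by induction on $i$, using the interchange axiom (\ref{A-atomic-sync-interchange}) at the inductive step. This is a natural choice because the statement has a recursive flavour matching the inductive definition $a^{i+1} = a \Seq a^i$, and the interchange axiom is precisely the tool for peeling off leading atomic steps from both sides of $\sync$.

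For the base case $i = 0$, I would unfold $a^0 = \Nil = b^0 = (a \sync b)^0$, so $a^0 \Seq c = c$ by (\ref{A-seq-identity}), and similarly on the other factors. Both sides then reduce to $c \sync d$, which is trivially an equality.

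For the inductive step, assuming $a^i \Seq c \sync b^i \Seq d = (a \sync b)^i \Seq (c \sync d)$, I would rewrite the left-hand side at $i+1$ as
\begin{displaymath}
a^{i+1} \Seq c \sync b^{i+1} \Seq d
= a \Seq (a^i \Seq c) \sync b \Seq (b^i \Seq d)
\end{displaymath}
using the definition of power and associativity of sequential composition (\ref{A-seq-asso}). Since $a$ and $b$ are atomic, axiom (\ref{A-atomic-sync-interchange}) applies, giving $(a \sync b) \Seq (a^i \Seq c \sync b^i \Seq d)$. Applying the inductive hypothesis inside the parentheses and then folding using the definition of power yields $(a \sync b)^{i+1} \Seq (c \sync d)$, as required. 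Note that (\ref{A-sync-closed}) ensures $a \sync b \in \AtomicSteps$ so that $(a \sync b)^{i+1}$ is meaningful in the same sense as $a^{i+1}$.

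The only subtle point is ensuring that the interchange axiom (\ref{A-atomic-sync-interchange}) can be applied at each step: it is formulated for leading atomic steps, so the rebracketing via (\ref{A-seq-asso}) is essential to expose $a$ and $b$ as the leading atomic commands of the sequences. Beyond that, the argument is routine equational reasoning, and I do not anticipate any genuine obstacle.
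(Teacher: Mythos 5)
Your proof is correct and follows exactly the route the paper indicates: induction on $i$, with the base case handled by $a^0 = \Nil$ and the identity law (\ref{A-seq-identity}), and the inductive step by re-associating via (\ref{A-seq-asso}) to expose the leading atomic steps and then applying the interchange axiom (\ref{A-atomic-sync-interchange}) followed by the inductive hypothesis. The paper gives no further detail than ``proven by induction on $i$,'' so your write-up is a faithful (and slightly more explicit) rendering of the intended argument.
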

Choosing $c$ and $d$ to both be $\Nil$ gives the corollary that $a^i \sync b^i = (a \sync b)^i$.

In the following lemmas we then generalize Lemma~\ref{L-atomic-iteration-power} further to account for cases where atomic steps $a$ and $b$ are iterated an arbitrary number of times before pre-composing them with $c$ and $d$, respectively. For these cases, we take into consideration situations
where there may be more iterations of $a$ than $b$ (and hence the additional iterations of $a$ are in parallel with the start of $d$),
or the symmetric case when there may be more occurrences of $b$ than $a$. 
The proofs of these lemmas rely on the conjunctivity axiom (\ref{A-seq-distr-left}), and the properties (\ref{L-isolation}) and (\ref{L-finite-iteration}) that are derived from it. 
First we consider the case where $a$ and $b$ are iterated an arbitrary finite number of times.
\begin{lemma}[atomic-iteration-finite]\label{L-atomic-iteration-finite}
\(
 \Fin{a} \Seq c \sync \Fin{b} \Seq d 
   = \Fin{(a \sync b)} \Seq ((c \sync \Fin{b} \Seq d) \nondet (\Fin{a} \Seq c \sync d))
\)
\end{lemma}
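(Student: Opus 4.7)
The plan is to prove the identity by fully expanding both finite iterations as indexed non-deterministic choices and then reorganising the resulting double choice into the right-hand shape via Lemma~\ref{L-atomic-iteration-power}. More precisely, I would start by using (\ref{L-finite-iteration}) to rewrite $\Fin{a} = \Nondet_{i \in \nat} a^i$ and $\Fin{b} = \Nondet_{j \in \nat} b^j$, then push the choices out of sequential composition using right-distribution (\ref{A-seq-distr-right}) and out of the synchronisation operator by two applications of (\ref{A-sync-Inf-distrib}) (both sides of $\sync$ are indexed over $\mathbb{N}$, which is non-empty). This reduces the left-hand side to
\[
  \Nondet_{i,j \in \nat} \bigl(a^i \Seq c \sync b^j \Seq d\bigr).
\]

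Next I would split the index set $\{(i,j) : i,j \in \nat\}$ as $\{i \le j\} \cup \{i \ge j\}$; the overlap on the diagonal is harmless because it only duplicates terms in a meet. For the first piece, write $b^j = b^i \Seq b^{j-i}$, apply Lemma~\ref{L-atomic-iteration-power} to get $a^i \Seq c \sync b^i \Seq (b^{j-i} \Seq d) = (a \sync b)^i \Seq (c \sync b^{j-i} \Seq d)$, and then re-parametrise with $k = j - i \ge 0$ so the inner index ranges independently over $\mathbb{N}$. Collecting the choice over $k$ back inside the sync (using (\ref{A-sync-Inf-distrib}) in the reverse direction and (\ref{A-seq-distr-right})) reassembles $c \sync \Fin{b} \Seq d$, and collecting the choice over $i$ reassembles $\Fin{(a \sync b)}$ through (\ref{A-seq-distr-left}) applied to the non-empty family indexed by $i$. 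The second piece is symmetric: swap the roles of $(a,c)$ and $(b,d)$ and use the same argument to obtain $\Fin{(a \sync b)} \Seq (\Fin{a} \Seq c \sync d)$, invoking commutativity (\ref{A-sync-comm}) to line up the synchronisation.

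Combining the two pieces yields
\[
  \Fin{(a \sync b)} \Seq (c \sync \Fin{b} \Seq d) \nondet \Fin{(a \sync b)} \Seq (\Fin{a} \Seq c \sync d),
\]
and a final use of (\ref{A-seq-distr-left}) on the two-element (non-empty) set factors $\Fin{(a \sync b)}$ out on the left, producing exactly the right-hand side.

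The main obstacle I expect is the bookkeeping around the distributivity side-conditions: (\ref{A-seq-distr-left}) and (\ref{A-sync-Inf-distrib}) both require non-empty index sets, so every time I push a choice in or out I need to check the family is non-empty (each of $\{i \in \nat\}$, $\{k \in \nat\}$, and the final two-element set qualifies). A secondary subtlety is the index-reparametrisation step: after splitting into $i\le j$ and $i\ge j$ and substituting $k = j-i$ (respectively $k = i-j$), the two indices $i$ and $k$ must be seen to range freely and independently over $\mathbb{N}$ so that the inner collection reassembles into the full $\Fin{b}$ (or $\Fin{a}$). Everything else is routine application of the axioms already established in Figures~\ref{figure:axioms} and~\ref{figure:axioms-sync}.
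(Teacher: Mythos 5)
Your proposal is correct and follows essentially the same route as the paper's own proof in Appendix~\ref{S-proofs}: expanding both iterations via (\ref{L-finite-iteration}), distributing to obtain $\Nondet_{i,j \in \nat} (a^i \Seq c \sync b^j \Seq d)$, splitting the index set into $i \leqslant j$ and $i \geqslant j$, applying Lemma~\ref{L-atomic-iteration-power} after reparametrising with $k = j - i$ (resp.\ $k = i - j$), and reassembling using the distribution axioms. The side conditions you flag (non-emptiness for (\ref{A-seq-distr-left}) and (\ref{A-sync-Inf-distrib}), and the free ranging of the reparametrised indices) are exactly the points the paper's calculation relies on implicitly.
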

The proof of this lemma can be found in Appendix~\ref{S-proofs}, on page \pageref{P-atomic-iteration-finite}.
Unfolding $\Fin{b}$ and $\Fin{a}$ using (\ref{L-finite-unfold}) gives the following corollary.
\begin{corollary}[atomic-iteration-finite]\label{C-atomic-iteration-finite}
\(
 \Fin{a} \Seq c \sync \Fin{b} \Seq d 
  = \Fin{(a \sync b)} \Seq ((c \sync d) \nondet (c \sync b \Seq \Fin{b} \Seq d) \nondet (a \Seq \Fin{a} \Seq c \sync d))
\)
\end{corollary}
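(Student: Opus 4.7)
The plan is to start from Lemma~\ref{L-atomic-iteration-finite} and simply unfold the two finite iterations that still appear (non-atomised) on its right-hand side, namely the $\Fin{b}$ inside $(c \sync \Fin{b} \Seq d)$ and the $\Fin{a}$ inside $(\Fin{a} \Seq c \sync d)$. The unfolding law (\ref{L-finite-unfold}) rewrites each as a binary choice with $\Nil$, and then right-distributivity of sequential composition (\ref{A-seq-distr-right}) together with distributivity of $\sync$ over non-empty choices (\ref{A-sync-Inf-distrib}) turns both disjuncts into a binary choice under $\sync$. Idempotence of $\nondet$ removes the resulting duplication.

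Concretely, for the first disjunct I would write
\[
 c \sync \Fin{b} \Seq d
 = c \sync (\Nil \nondet b \Seq \Fin{b}) \Seq d
 = c \sync (d \nondet b \Seq \Fin{b} \Seq d)
 = (c \sync d) \nondet (c \sync b \Seq \Fin{b} \Seq d),
\]
using (\ref{L-finite-unfold}), (\ref{A-seq-distr-right}), and then (\ref{A-sync-Inf-distrib}) (applicable because the choice is non-empty). By the symmetric calculation, unfolding $\Fin{a}$ in the second disjunct yields
\[
 \Fin{a} \Seq c \sync d
 = (c \nondet a \Seq \Fin{a} \Seq c) \sync d
 = (c \sync d) \nondet (a \Seq \Fin{a} \Seq c \sync d).
\]

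Substituting both expansions into the statement of Lemma~\ref{L-atomic-iteration-finite} gives
\[
 \Fin{(a \sync b)} \Seq \bigl((c \sync d) \nondet (c \sync b \Seq \Fin{b} \Seq d) \nondet (c \sync d) \nondet (a \Seq \Fin{a} \Seq c \sync d)\bigr),
\]
and idempotence of $\nondet$ collapses the two copies of $(c \sync d)$, yielding exactly the right-hand side of the corollary. There is no real obstacle here: the only care needed is to check that each choice to which (\ref{A-sync-Inf-distrib}) is applied is non-empty, which is immediate for the binary choices above.
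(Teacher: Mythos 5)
Your proof is correct and follows exactly the route the paper indicates: the paper derives this corollary from Lemma~\ref{L-atomic-iteration-finite} precisely by unfolding $\Fin{a}$ and $\Fin{b}$ with (\ref{L-finite-unfold}), distributing, and collapsing the duplicate $(c \sync d)$ by idempotence of $\nondet$. Nothing further is needed.
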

Choosing $c$ and $d$ to both be $\Nil$ gives
the following as a corollary using (\ref{A-nil-sync-absorb}) and (\ref{A-nil-sync-atomic}).
\begin{corollary}[atomic-finite-sync]\label{L-atomic-finite-sync}
$\Fin{a} \sync \Fin{b} = \Fin{(a \sync b)}$
\end{corollary}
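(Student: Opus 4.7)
The plan is to derive this corollary by specialising Corollary~\ref{C-atomic-iteration-finite} to the case $c = d = \Nil$, as the hint indicates. This converts the four-parameter identity involving both iterations and trailing commands into a two-parameter identity involving only the iterations themselves.

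First I would substitute $c \defs \Nil$ and $d \defs \Nil$ into Corollary~\ref{C-atomic-iteration-finite}. On the left-hand side, using the identity of sequential composition (\ref{A-seq-identity}) twice, $\Fin{a} \Seq \Nil \sync \Fin{b} \Seq \Nil$ simplifies to $\Fin{a} \sync \Fin{b}$, which is what we want on the left of the corollary.

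On the right-hand side, after substitution we obtain
\[
\Fin{(a \sync b)} \Seq \bigl((\Nil \sync \Nil) \nondet (\Nil \sync b \Seq \Fin{b} \Seq \Nil) \nondet (a \Seq \Fin{a} \Seq \Nil \sync \Nil)\bigr).
\]
Using (\ref{A-seq-identity}) to strip the trailing $\Nil$ from each factor, the three disjuncts become $\Nil \sync \Nil$, $\Nil \sync b \Seq \Fin{b}$, and $a \Seq \Fin{a} \sync \Nil$. By axiom (\ref{A-nil-sync-absorb}) the first disjunct is $\Nil$; by axiom (\ref{A-nil-sync-atomic}) (together with commutativity (\ref{A-sync-comm}) for the second), both remaining disjuncts equal $\top$. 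Since $\top$ is the top of the refinement lattice and hence the identity of $\nondet$, the bracketed term collapses to $\Nil$.

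Finally, applying (\ref{A-seq-identity}) one last time yields $\Fin{(a \sync b)} \Seq \Nil = \Fin{(a \sync b)}$, matching the right-hand side of the stated corollary. There is no real obstacle here; the entire argument is a mechanical specialisation, and the only thing to be careful about is ensuring that (\ref{A-nil-sync-atomic}) is applied in the correct orientation (with $\Nil$ on the right) by first invoking commutativity of $\sync$ in the $\Nil \sync b \Seq \Fin{b}$ disjunct.
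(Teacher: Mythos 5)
Your proposal is correct and follows exactly the paper's route: the paper likewise obtains this corollary by setting $c = d = \Nil$ in Corollary~\ref{C-atomic-iteration-finite} and simplifying with (\ref{A-nil-sync-absorb}) and (\ref{A-nil-sync-atomic}); your observation that $\top$ is the identity of $\nondet$ (the lattice meet) correctly collapses the bracketed choice to $\Nil$. The only difference is that you spell out the bookkeeping (sequential identities, commutativity of $\sync$) that the paper leaves implicit.
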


When $b$ is iterated an infinite number of times, there cannot be more iterations of $a$ than $b$, and so we get the following. 
\begin{lemma}[atomic-iteration-finite-infinite]\label{L-atomic-iteration-finite-infinite}
\( \Fin{a} \Seq c \sync \Inf{b} = \Fin{(a \sync b)} \Seq (c \sync \Inf{b})
\)
\end{lemma}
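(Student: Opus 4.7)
The plan is to prove this by unfolding $\Fin{a}$ as a non-deterministic choice over its finite powers using Law~\refeqn{L-finite-iteration}, distributing the synchronisation over that choice, and then applying Lemma~\ref{L-atomic-iteration-power} pointwise. The asymmetry of the statement (the finite side retains $c$; the infinite side has no tail) reflects that every $a^i$ can be matched by a prefix $b^i$ carved off $\Inf{b}$, leaving $\Inf{b}$ still to go.

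First I would rewrite the left-hand side using Law~\refeqn{L-finite-iteration}:
\begin{displaymath}
\Fin{a} \Seq c \sync \Inf{b}
\Equals
\Bigl(\Nondet_{i \in \nat} a^i\Bigr) \Seq c \sync \Inf{b}
\Equals
\Nondet_{i \in \nat} \bigl(a^i \Seq c \sync \Inf{b}\bigr),
\end{displaymath}
where the second step distributes sequential composition over choice from the right by \refeqn{A-seq-distr-right} and distributes $\sync$ over the (non-empty) choice of commands by \refeqn{A-sync-Inf-distrib}.

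Next I would use the power-unfolding of infinite iteration, Law~\refeqn{L-infinite-unfold-power}, to rewrite $\Inf{b}$ as $b^i \Seq \Inf{b}$ for each index $i$, and then apply Lemma~\ref{L-atomic-iteration-power} with second command $\Inf{b}$ to synchronise the matching $a^i$ and $b^i$ prefixes:
\begin{displaymath}
a^i \Seq c \sync \Inf{b}
\Equals
a^i \Seq c \sync b^i \Seq \Inf{b}
\Equals
(a \sync b)^i \Seq (c \sync \Inf{b}).
\end{displaymath}
Substituting this back, factoring the common suffix $(c \sync \Inf{b})$ out of the choice using \refeqn{A-seq-distr-right} in reverse, and re-folding via \refeqn{L-finite-iteration} gives $\Fin{(a \sync b)} \Seq (c \sync \Inf{b})$, as required.

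The only subtlety I foresee is the applicability of the distribution laws: \refeqn{A-sync-Inf-distrib} and \refeqn{A-seq-distr-left} both require non-empty sets, and here the index set $\nat$ is indeed non-empty, so there is no issue. The step that does the real work is Lemma~\ref{L-atomic-iteration-power}, which lets $a^i$ and $b^i$ synchronise step by step, and the trick that makes the whole argument go through is the power unfolding \refeqn{L-infinite-unfold-power} of $\Inf{b}$ into $b^i \Seq \Inf{b}$ for each $i$, since without it there would be no way to align a $b$-prefix against $a^i$ while preserving the remaining $\Inf{b}$ to pair with $c$.
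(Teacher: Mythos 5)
Your proposal is correct and follows essentially the same route as the paper's own proof: expand $\Fin{a}$ via \refeqn{L-finite-iteration}, distribute, unfold $\Inf{b}$ to $b^i \Seq \Inf{b}$ using \refeqn{L-infinite-unfold-power}, apply Lemma~\ref{L-atomic-iteration-power} pointwise, and re-fold. The justifications you cite for each step, including the non-emptiness side condition for the distribution laws, match the paper's argument.
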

Appendix~\ref{S-proofs} provides the proof on page \pageref{P-atomic-iteration-finite-infinite}.
Combining Lemmas~\ref{L-atomic-iteration-finite} and \ref{L-atomic-iteration-finite-infinite} 
and using (\ref{L-iteration1}) the following lemma can be derived.
\begin{lemma}[atomic-iteration-finite-omega]\label{L-atomic-iteration-finite-omega}
$\Fin{a} \Seq c \sync \Om{b} \Seq d  =  \Fin{(a \sync b)} \Seq ((c \sync \Om{b} \Seq d) \nondet \Fin{a} \Seq c \sync d))$ 
\end{lemma}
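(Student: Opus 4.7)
The plan is to decompose $\Om{b}\Seq d$ into its finite and infinite parts using Law~(\ref{L-iteration1}) and then apply the two preceding lemmas directly to each part. Concretely, I would begin with
\begin{displaymath}
  \Fin{a}\Seq c \sync \Om{b}\Seq d
 = \Fin{a}\Seq c \sync \bigl(\Fin{b}\Seq d \nondet \Inf{b}\bigr),
\end{displaymath}
which holds by~(\ref{L-iteration1}). Then I would distribute $\sync$ over the non-deterministic choice on the right using axiom~(\ref{A-sync-Inf-distrib}) (the set $\{\Fin{b}\Seq d,\Inf{b}\}$ is non-empty, so this is legitimate) to obtain the sum of two synchronisations.

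Next I would apply Lemma~\ref{L-atomic-iteration-finite} to the first summand and Lemma~\ref{L-atomic-iteration-finite-infinite} to the second, giving
\begin{displaymath}
  \Fin{(a\sync b)}\Seq\bigl((c\sync\Fin{b}\Seq d)\nondet(\Fin{a}\Seq c\sync d)\bigr)
  \nondet
  \Fin{(a\sync b)}\Seq(c\sync\Inf{b}).
\end{displaymath}
Both summands share the common prefix $\Fin{(a\sync b)}$, so using~(\ref{A-seq-distr-right}) to factor it out yields
\begin{displaymath}
  \Fin{(a\sync b)}\Seq\bigl((c\sync\Fin{b}\Seq d)\nondet(c\sync\Inf{b})\nondet(\Fin{a}\Seq c\sync d)\bigr).
\end{displaymath}

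Finally, I would collapse the two summands involving $c\sync\cdot$ by applying distributivity~(\ref{A-sync-Inf-distrib}) in reverse, turning $(c\sync\Fin{b}\Seq d)\nondet(c\sync\Inf{b})$ into $c\sync(\Fin{b}\Seq d\nondet\Inf{b})$, and then folding the inner choice back to $\Om{b}\Seq d$ by~(\ref{L-iteration1}). This yields the desired right-hand side
\begin{displaymath}
  \Fin{(a\sync b)}\Seq\bigl((c\sync\Om{b}\Seq d)\nondet(\Fin{a}\Seq c\sync d)\bigr).
\end{displaymath}
I do not anticipate a real obstacle: the argument is bookkeeping once the earlier lemmas are in hand. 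The only point requiring slight care is the side condition on~(\ref{A-sync-Inf-distrib})—non-emptiness of the choice sets—which is trivially satisfied both when we split $\Om{b}\Seq d$ and when we re-assemble it, so no special reasoning about $\top$ intrudes.
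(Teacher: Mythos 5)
Your proof is correct and follows exactly the route the paper's one-line justification indicates: split $\Om{b} \Seq d$ via (\ref{L-iteration1}), apply Lemma~\ref{L-atomic-iteration-finite} and Lemma~\ref{L-atomic-iteration-finite-infinite} to the two summands, and reassemble. One small citation slip: factoring the common prefix $\Fin{(a \sync b)}$ out of the non-deterministic choice (and likewise recombining $(c \sync \Fin{b}\Seq d) \nondet (c \sync \Inf{b})$ under a single sequential prefix) relies on the conjunctivity axiom (\ref{A-seq-distr-left}) — whose non-emptiness side condition is trivially met — rather than (\ref{A-seq-distr-right}), which distributes a choice occurring in the \emph{first} argument of sequential composition.
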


The case in which $a$ and $b$ may both be iterated a finite or infinite number of times has the same structure as Lemma~\ref{L-atomic-iteration-finite}.
\begin{lemma}[atomic-iteration-either]\label{L-atomic-iteration-either}
\(
 \Om{a} \Seq c \sync \Om{b} \Seq d =
    \Om{(a \sync b)} \Seq ((c \sync \Om{b} \Seq d) \nondet (\Om{a} \Seq c \sync d))
\)
\end{lemma}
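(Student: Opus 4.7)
The plan is to prove the equality by splitting both $\Om{a}$ and $\Om{b}$ into their finite and infinite parts using the isolation law (\ref{L-isolation}), reducing each case to a lemma already at hand, and then recombining via (\ref{L-iteration1}) to extract the leading $\Om{(a \sync b)}$ factor.

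First, using (\ref{L-isolation}) and (\ref{L-infinite-annihilates}), rewrite $\Om{a}\Seq c = \Fin{a}\Seq c \nondet \Inf{a}$. Distributing the synchronisation over this non-empty non-deterministic choice via (\ref{A-sync-Inf-distrib}) gives
\[
\Om{a}\Seq c \sync \Om{b}\Seq d \;=\; (\Fin{a}\Seq c \sync \Om{b}\Seq d) \nondet (\Inf{a} \sync \Om{b}\Seq d).
\]
The first disjunct is handled directly by \reflemma{atomic-iteration-finite-omega}. For the second disjunct, apply the same isolation trick on the right, writing $\Om{b}\Seq d = \Fin{b}\Seq d \nondet \Inf{b}$, and distribute once more so that it splits into $(\Inf{a} \sync \Fin{b}\Seq d) \nondet (\Inf{a} \sync \Inf{b})$. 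The first term reduces via the symmetric form of \reflemma{atomic-iteration-finite-infinite} (using commutativity (\ref{A-sync-comm})) to $\Fin{(a\sync b)}\Seq(\Inf{a} \sync d)$, while the second term is $\Inf{(a\sync b)}$ by axiom (\ref{A-atomic-infiter-sync}). Repackaging using (\ref{L-iteration1}) then yields $\Inf{a} \sync \Om{b}\Seq d = \Om{(a \sync b)} \Seq (\Inf{a} \sync d)$.

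At this point the left-hand side has been rewritten as
\[
\Fin{(a \sync b)} \Seq ((c \sync \Om{b}\Seq d) \nondet (\Fin{a}\Seq c \sync d)) \;\nondet\; \Om{(a\sync b)}\Seq(\Inf{a} \sync d).
\]
To match the claimed right-hand side $\Om{(a\sync b)} \Seq ((c \sync \Om{b}\Seq d) \nondet (\Om{a}\Seq c \sync d))$, expand its leading $\Om{(a\sync b)}$ by (\ref{L-iteration1}) into $\Fin{(a\sync b)}\Seq X \nondet \Inf{(a\sync b)}$, and expand $\Om{a}\Seq c \sync d$ inside $X$ in the same way it was expanded above, giving $(\Fin{a}\Seq c \sync d) \nondet (\Inf{a} \sync d)$. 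Distributing $\Fin{(a\sync b)}$ over the resulting three-way choice and regrouping the terms involving $\Inf{a} \sync d$ back into $\Om{(a\sync b)}\Seq(\Inf{a} \sync d)$ (again by (\ref{L-iteration1})) produces exactly the expression just derived for the left-hand side.

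The main obstacle is purely bookkeeping: one must apply (\ref{L-iteration1}) in both directions (contracting $\Fin{c}\Seq e \nondet \Inf{c}$ into $\Om{c}\Seq e$, and expanding again), while ensuring every distribution of $\sync$ over $\nondet$ is over a non-empty set, so that axiom (\ref{A-sync-Inf-distrib}) applies. No fresh axiom is required beyond (\ref{A-atomic-infiter-sync}), commutativity (\ref{A-sync-comm}), and the previously established Lemmas~\ref{L-atomic-iteration-finite-omega} and \ref{L-atomic-iteration-finite-infinite}.
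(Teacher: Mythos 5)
Your proof is correct and follows essentially the same route as the paper's: isolate $\Om{a}$ and $\Om{b}$ into finite and infinite parts, dispatch the resulting cases with Lemmas~\ref{L-atomic-iteration-finite}/\ref{L-atomic-iteration-finite-omega}, \ref{L-atomic-iteration-finite-infinite} and axiom (\ref{A-atomic-infiter-sync}), then recombine with (\ref{L-iteration1}). The only difference is organisational --- the paper does a symmetric four-way split and chains equalities forward, whereas you split the left side first and rewrite both sides to a common form --- but the decomposition and the lemmas invoked are the same.
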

The proof is listed in Appendix~\ref{S-proofs} on page \pageref{P-atomic-iteration-either}.
Unfolding $\Om{a}$ and $\Om{b}$ using (\ref{L-omega-unfold}) gives the following corollary.
\begin{corollary}[atomic-iteration-either]\label{C-atomic-iteration-either}
\(
 \Om{a} \Seq c \sync \Om{b} \Seq d = 
    \Om{(a \sync b)} \Seq ((c \sync d) \nondet (c \sync b \Seq \Om{b} \Seq d) \nondet (a \Seq \Om{a} \Seq c \sync d))
\)
\end{corollary}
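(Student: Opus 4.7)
The plan is to split both iterations into their finite and infinite parts using the isolation law~(\ref{L-isolation}), distribute the sequential composition and synchronisation over the resulting non-deterministic choices, then reduce each of the four cross terms with a previously established lemma, and finally recombine the pieces on the right-hand side using (\ref{L-iteration1}).

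Concretely, I would start from the left-hand side and use $\Om{a} = \Fin{a} \nondet \Inf{a}$ and $\Om{b} = \Fin{b} \nondet \Inf{b}$ together with (\ref{A-seq-distr-right}) and (\ref{A-sync-Inf-distrib}) to rewrite
\begin{displaymath}
 \Om{a}\Seq c \sync \Om{b}\Seq d \Equals
   (\Fin{a}\Seq c \sync \Fin{b}\Seq d) \nondet
   (\Fin{a}\Seq c \sync \Inf{b}\Seq d) \nondet
   (\Inf{a}\Seq c \sync \Fin{b}\Seq d) \nondet
   (\Inf{a}\Seq c \sync \Inf{b}\Seq d).
\end{displaymath}
In the mixed and doubly-infinite terms, (\ref{L-infinite-annihilates}) lets me drop the trailing $c$ or $d$, so these simplify to $\Fin{a}\Seq c \sync \Inf{b}$, $\Inf{a} \sync \Fin{b}\Seq d$, and $\Inf{a} \sync \Inf{b}$. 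Now I apply Lemma~\ref{L-atomic-iteration-finite} to the finite-finite term, Lemma~\ref{L-atomic-iteration-finite-infinite} (together with commutativity~(\ref{A-sync-comm}) to get its mirror image) to the two mixed terms, and axiom~(\ref{A-atomic-infiter-sync}) to the doubly-infinite term. The left-hand side then becomes
\begin{displaymath}
 \Fin{(a\sync b)}\Seq\bigl((c \sync \Fin{b}\Seq d) \nondet (\Fin{a}\Seq c \sync d) \nondet (c \sync \Inf{b}) \nondet (\Inf{a} \sync d)\bigr) \nondet \Inf{(a\sync b)}.
\end{displaymath}

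On the right-hand side I work in the opposite direction: distribute $\sync$ across the two occurrences of $\Om{}$ using isolation, giving
\begin{displaymath}
  c \sync \Om{b}\Seq d \Equals (c \sync \Fin{b}\Seq d) \nondet (c \sync \Inf{b})
  \qquad\textrm{and}\qquad
  \Om{a}\Seq c \sync d \Equals (\Fin{a}\Seq c \sync d) \nondet (\Inf{a} \sync d),
\end{displaymath}
again using (\ref{L-infinite-annihilates}) on the infinite parts. Substituting these into the right-hand side and applying (\ref{L-iteration1}) in the form $\Om{(a\sync b)}\Seq e = \Fin{(a\sync b)}\Seq e \nondet \Inf{(a\sync b)}$, followed by (\ref{A-seq-distr-left}) to push $\Fin{(a\sync b)}$ inside, produces exactly the same expression obtained from the left-hand side.

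The main obstacle is not conceptual but bookkeeping: correctly tracking the four cross terms and ensuring each simplification is justified by the right axiom, particularly the need for the symmetric instance of Lemma~\ref{L-atomic-iteration-finite-infinite} via commutativity of $\sync$ and the use of conjunctivity~(\ref{A-seq-distr-left}) to distribute $\Fin{(a\sync b)}$ over the non-empty non-deterministic choice that appears on the right. Once the two sides are each reduced to the five-term normal form above, equality is immediate.
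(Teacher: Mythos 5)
Your argument is correct in substance, but it takes a noticeably longer route than the paper, which obtains this corollary in a single step from Lemma~\ref{L-atomic-iteration-either} by unfolding $\Om{a}$ and $\Om{b}$ with (\ref{L-omega-unfold}): since $c \sync \Om{b}\Seq d = (c \sync d) \nondet (c \sync b \Seq \Om{b}\Seq d)$ and $\Om{a}\Seq c \sync d = (c \sync d) \nondet (a \Seq \Om{a}\Seq c \sync d)$ by (\ref{A-seq-distr-right}), (\ref{A-sync-Inf-distrib}) and idempotence of $\nondet$, the corollary's right-hand side is just a rewriting of the lemma's. What you have written is essentially a re-proof of Lemma~\ref{L-atomic-iteration-either} itself (it matches the appendix proof on page~\pageref{P-atomic-iteration-either}: isolation, the four cross terms, Lemmas~\ref{L-atomic-iteration-finite} and \ref{L-atomic-iteration-finite-infinite} with commutativity, axiom (\ref{A-atomic-infiter-sync}), then recombination via (\ref{L-iteration1}) and conjunctivity (\ref{A-seq-distr-left})). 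One small loose end: your ``substitution into the right-hand side'' presupposes that the right-hand side contains $c \sync \Om{b}\Seq d$ and $\Om{a}\Seq c \sync d$, whereas the corollary as stated has the already-unfolded three-term form; you must first fold it back using exactly the unfolding identities above, which is precisely the paper's entire proof of the corollary. Your version buys self-containedness relative to the two sub-lemmas; the paper's buys economy by routing through the lemma it has already established.
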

Choosing $c$ and $d$ to both be $\Nil$ gives the following corollary
using (\ref{A-nil-sync-absorb}) and (\ref{A-nil-sync-atomic}).
\begin{corollary}[atomic-either-sync]\label{L-atomic-either-sync}
\label{omega-sync-omega}
$\Om{a} \sync \Om{b} = \Om{(a \sync b)}$
\end{corollary}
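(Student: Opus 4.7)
The plan is to derive this corollary directly from Corollary~\ref{C-atomic-iteration-either} by specialising the auxiliary commands $c$ and $d$ to $\Nil$. Since the paper itself hints at this approach, no induction or fresh fixed-point reasoning is required; all of the real work has already been done in establishing Lemma~\ref{L-atomic-iteration-either}.

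More concretely, I would first rewrite $\Om{a} \sync \Om{b}$ as $\Om{a} \Seq \Nil \sync \Om{b} \Seq \Nil$ using the identity law for sequential composition (\ref{A-seq-identity}), so that Corollary~\ref{C-atomic-iteration-either} applies. Instantiating that corollary with $c = d = \Nil$ gives
\[
\Om{a} \sync \Om{b} ~=~ \Om{(a \sync b)} \Seq \bigl((\Nil \sync \Nil) \nondet (\Nil \sync b \Seq \Om{b} \Seq \Nil) \nondet (a \Seq \Om{a} \Seq \Nil \sync \Nil)\bigr).
\]
Then I would simplify each disjunct inside the brackets: the first is $\Nil$ by axiom (\ref{A-nil-sync-absorb}); the second equals $\top$ after using commutativity (\ref{A-sync-comm}) to move $\Nil$ to the right and applying (\ref{A-nil-sync-atomic}) (noting $b \Seq \Om{b} \Seq \Nil$ is a command with a leading atomic step); and the third equals $\top$ directly by (\ref{A-nil-sync-atomic}). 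Because $\top$ is the top of the refinement lattice, the non-deterministic choice $\Nil \nondet \top \nondet \top$ collapses to $\Nil$, and the trailing $\Seq \Nil$ vanishes by (\ref{A-seq-identity}), leaving $\Om{(a \sync b)}$.

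I do not expect any real obstacle here; the only mild subtlety is recognising that the second disjunct must be flipped via commutativity before axiom (\ref{A-nil-sync-atomic}) can be applied, since that axiom is stated with the $\Nil$ on the right. Everything else is bookkeeping with the lattice operations and the sequential identity.
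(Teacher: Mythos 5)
Your proposal is correct and follows exactly the paper's own route: the paper obtains this corollary from Corollary~\ref{C-atomic-iteration-either} by choosing $c = d = \Nil$ and simplifying with (\ref{A-nil-sync-absorb}) and (\ref{A-nil-sync-atomic}). The extra detail you supply (invoking commutativity before applying (\ref{A-nil-sync-atomic}) and absorbing the $\top$ branches under $\nondet$) is just the bookkeeping the paper leaves implicit.
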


In Section~\ref{S-rely} a rely command is defined in terms of an iteration 
that may abort if the environment performs a step not satisfying the rely condition.
Here we provide an abstract lemma for reasoning about an iteration of a command that
aborts after an $a_1$ step (representing the rely not holding).
Both weak conjunction and parallel are abort strict 
and hence satisfy the assumption of this lemma.
\begin{lemma}[iterations-with-abort]\label{L-iterations-with-abort}
Provided $c \sync \bot = \bot$ for all commands $c$,
\[
  \Om{(a_0 \nondet a_1 \Seq \bot)} \sync \Om{b} = \Om{(a_0 \sync b)} \Seq (\Nil \nondet (a_1 \sync b) \Seq \bot)~.
\]
\end{lemma}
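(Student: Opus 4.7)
The plan is to reduce the left-hand side to a shape where Lemma~\ref{L-atomic-iteration-either} can be applied, and then simplify the resulting disjuncts using the aborting hypothesis $c \sync \bot = \bot$. The key obstacle is that $a_0 \nondet a_1 \Seq \bot$ is \emph{not} atomic, so Corollary~\ref{L-atomic-either-sync} is not directly applicable; the first step must peel away the aborting branch.

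First I would apply the DRA decomposition $\Om{(c \sqcap d)} = \Om{c} \Seq \Om{(d \Seq \Om{c})}$ that was used in the proof of Lemma~\ref{L-iterated-assumption}, with $c := a_0$ and $d := a_1 \Seq \bot$. Since $\bot$ is a left annihilator of sequential composition by~\refeqn{A-seq-annihilation-left}, we get $a_1 \Seq \bot \Seq \Om{a_0} = a_1 \Seq \bot$, and one further unfolding of $\Om{(a_1 \Seq \bot)}$ (again using left-annihilation of $\bot$) yields
\[
  \Om{(a_0 \nondet a_1 \Seq \bot)} = \Om{a_0} \Seq (\Nil \nondet a_1 \Seq \bot)~.
\]
With this rewriting, the leading iteration is now over the atomic command $a_0$, so I can invoke Lemma~\ref{L-atomic-iteration-either} with $c := \Nil \nondet a_1 \Seq \bot$ and $d := \Nil$ (using $\Om{b} = \Om{b} \Seq \Nil$), giving
\[
  \Om{(a_0 \sync b)} \Seq \bigl(((\Nil \nondet a_1 \Seq \bot) \sync \Om{b}) \nondet (\Om{a_0} \Seq (\Nil \nondet a_1 \Seq \bot) \sync \Nil)\bigr)~.
\]

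Next I would simplify the two inner disjuncts separately. For the first, distribute $\sync$ over $\nondet$ via~\refeqn{A-sync-Inf-distrib}: the summand $\Nil \sync \Om{b}$ reduces to $\Nil$ by Lemma~\ref{L-atomic-iteration-nil}; the summand $a_1 \Seq \bot \sync \Om{b}$ is unfolded using~\refeqn{L-omega-unfold} and then simplified by~\refeqn{A-nil-sync-atomic} (giving $\top$ on the $\Nil$ branch, which is absorbed in $\nondet$) and~\refeqn{A-atomic-sync-interchange} on the $b \Seq \Om{b}$ branch, yielding $(a_1 \sync b) \Seq (\bot \sync \Om{b})$; the hypothesis $c \sync \bot = \bot$ together with commutativity of $\sync$ collapses $\bot \sync \Om{b}$ to $\bot$. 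Thus the first disjunct equals $\Nil \nondet (a_1 \sync b) \Seq \bot$.

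For the second disjunct $\Om{a_0} \Seq (\Nil \nondet a_1 \Seq \bot) \sync \Nil$, I would unfold $\Om{a_0}$ once via~\refeqn{L-omega-unfold} and distribute: the branch beginning with $a_0$ becomes $\top$ by~\refeqn{A-nil-sync-atomic}, while the remaining branch is $(\Nil \nondet a_1 \Seq \bot) \sync \Nil = (\Nil \sync \Nil) \nondet (a_1 \Seq \bot \sync \Nil) = \Nil \nondet \top = \Nil$ by~\refeqn{A-nil-sync-absorb} and~\refeqn{A-nil-sync-atomic}. Hence this disjunct is $\Nil$. Combining, the inner choice becomes $(\Nil \nondet (a_1 \sync b) \Seq \bot) \nondet \Nil = \Nil \nondet (a_1 \sync b) \Seq \bot$, delivering the stated right-hand side. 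The only subtlety worth flagging is the initial DRA decomposition: once that reduces the iterated command to an atomic leading step followed by a one-shot abort tail, the remaining calculation is a routine expansion.
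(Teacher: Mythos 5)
Your proof is correct, but it follows a genuinely different route from the paper's. The paper decomposes the iteration the other way around, as $\Om{(a_0 \nondet a_1 \Seq \bot)} = \Om{a_0} \nondet \Fin{a_0}\Seq a_1\Seq\bot$ (via $\Om{(\Fin{a_0}\Seq a_1\Seq\bot)}\Seq\Om{a_0}$ and one unfolding), distributes $\sync$ over that choice, and then handles the second disjunct by isolation $\Om{b}=\Fin{b}\nondet\Inf{b}$ together with Lemma~\ref{L-atomic-iteration-finite} and Lemma~\ref{L-atomic-iteration-finite-infinite}, plus Corollary~\ref{L-atomic-either-sync} for the first disjunct; it closes by reassembling $\Om{(a_0\sync b)} \nondet \Fin{(a_0\sync b)}\Seq(a_1\sync b)\Seq\bot$ into the stated form, again via isolation. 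You instead push the abort branch to the \emph{right}, obtaining $\Om{a_0}\Seq(\Nil\nondet a_1\Seq\bot)$ exactly as in the proof of Lemma~\ref{L-iterated-assumption}, which puts the left operand directly into the shape $\Om{a_0}\Seq c$ required by Lemma~\ref{L-atomic-iteration-either}; a single application of that lemma (with $d=\Nil$) then reduces everything to elementary unfoldings and the axioms (\ref{A-nil-sync-absorb}), (\ref{A-nil-sync-atomic}), (\ref{A-atomic-sync-interchange}) together with the abort-strictness hypothesis. Your version buys a shorter derivation that avoids isolation and the finite/infinite case split entirely, and it makes the role of the hypothesis $c\sync\bot=\bot$ very localised (collapsing $\bot\sync\Om{b}$); the paper's version has the minor advantage of exercising only the $\Fin{}$-level lemmas and of producing the intermediate form $\Om{(a_0\sync b)}\nondet\Fin{(a_0\sync b)}\Seq(a_1\sync b)\Seq\bot$, which is sometimes the form one actually wants downstream. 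All the individual steps you give check out, including the absorption of the $\top$ summands into $\nondet$ and the use of left-annihilation of $\bot$ in the initial decomposition.
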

The proof is listed in Appendix~\ref{S-proofs} on page \pageref{P-iterations-with-abort}.

\section{Instantiating parallel and weak conjunction as synchronous operators}\label{S-instantiations}

Given a DRA with atomic steps and tests as in Section~\ref{S-general-algebra}, 
we have that the conjunction operator, $\sqcup$, 
with synchronisation identity $\bot$
and atomic-step synchronisation identity $\cstepd$
satisfies the axioms of the synchronisation operator $\sync$ of SRA. 
In this section we give a further instantiation of synchronisation as two new operators: parallel and weak conjunction. 
We define a SRA with parallel ($\parallel$) and weak conjunction ($\together$) to be the three-sorted algebra
\begin{eqnarray*}
(\Commands, \Tests, \AtomicSteps, \Nondet, \bigsqcup,  ~\Seq\,, \parallel, \together, ~\tnegate\, , \anegate\, , \Nil, \cstepd, \Skip, \Chaos, \atomid)
\end{eqnarray*}
such that
\begin{eqnarray*}
(\Commands, \Tests, \AtomicSteps, \Nondet, \bigsqcup,  ~\Seq\,,  ~\tnegate\, , \anegate\, , \Nil, \cstepd)
\end{eqnarray*}
is a DRA with atomic steps and tests, and 
$(\parallel, \Skip, \atomid)$
is a synchronisation operator parallel, $\parallel$, with 
$\syncid$ 
taken to be the new distinguished element $\atomid\in \AtomicSteps$, and 
$\Identitycommand{}$
identified to be $\Skip = \Om{\atomid}$; and 
$(\together, \Chaos, \cstepd)$
is synchronisation operator weak conjunction, $\together$, with 
$\Identity{}$
taken to be $\cstepd$ and 
$\Identitycommand{}$
taken to be $\Chaos = \Om{\cstepd}$.
Further, we constrain $\together$ to be idempotent, 
\begin{eqnarray}
  c \together c & = & c \label{A-together-idempotent}
\end{eqnarray}
and both of the synchronisation operators parallel and weak conjunction to be abort-strict,
  \begin{eqnarray}
    c \parallel \bot & = & \bot  \label{A-parallel-abort}\\
    c \together \bot & = & \bot  \label{A-together-abort} 
  \end{eqnarray}
as well as introducing an additional interchange axiom to describe the interplay of the operators:
\begin{eqnarray}
(c_0 \parallel d_0) \together (c_1 \parallel d_1) &\refsto& (c_0 \together c_1) \parallel (d_0 \together d_1)~.
\label{conjunction-interchange-parallel}
\end{eqnarray}
That axiom codifies that one way of synchronising all the steps of $c_0 \parallel d_0$ and $c_1 \parallel d_1$
is to synchronise all the steps of $c_0$ with $c_1$ and synchronise all the steps of $d_0$ with $d_1$.

\subsection{Weak conjunction}

Weak conjunction behaves like conjunction ($\sqcup$) up until the failure of either command. It is a useful operator for composing a command together with a requirement that need only hold until the environmental assumptions of that command are violated, resulting in program failure (abortion). 

From idempotence (\ref{A-together-idempotent})  and monotonicity (\ref{A-sync-Inf-distrib}), we can show that the weak conjunction of any two commands $c$ and $d$, is refined by their conjunction:
\begin{eqnarray}
  c \together d & \refsto &  (c \sqcup d) \together (c \sqcup d) = c \sqcup d 
\end{eqnarray}
and that weak conjunction of any two commands $c$ and $d$ is equal to the conjunction of those commands if both $c$ and $d$ are refinements of the identity of weak conjunction (i.e.\ $\Chaos$), and hence they do not fail:
\begin{eqnarray}
\Chaos \refsto c \land \Chaos \refsto d & ~~\implies~~ &   c \together d = c \sqcup d  \label{A-weakconj-nonfail}
\end{eqnarray}
because
$c \sqcup d = (c \together \Chaos) \sqcup (\Chaos \together d) \refsto (c \together d) \sqcup (c \together d) = (c\together d)$.
This implies that weak conjunction behaves like conjunction on atomic steps $a,b\in \AtomicSteps$,
\begin{eqnarray}
  a \together b & = & a \sqcup b \label{A-together-atomic}~,
\end{eqnarray}
because for any atomic step $a$, $\Chaos = \cstepd^\omega \refsto \cstepd \refsto a$.
From the axioms of synchronisation and Lemma~\ref{L-test-sync-test} we already have that weak conjunction behaves like conjunction for tests, i.e. $t \together t' = t \sqcup t'$.
Using these properties we can, for example, show that for atomic steps $a_i,b_i\in \AtomicSteps$,
\begin{displaymath}
(a_1\Seq a_2 \Seq a_3 \Seq a_4) \together (b_1 \Seq b_2 \Seq \bot)
\Equals*[apply atomic step interchange axiom (\ref{A-atomic-sync-interchange}) twice]
(a_1 \together b_1) \Seq (a_2 \together b_2) \Seq ((a_3\Seq a_4) \together \bot)
\Equals*[weak conjunction is conjunction for atomic steps (\ref{A-together-atomic}) and is abort-strict (\ref{A-together-abort})]
(a_1 \sqcup b_1) \Seq (a_2 \sqcup b_2) \Seq \bot ~.
\end{displaymath}

The weak conjunction operators also simplifies to conjunction for  iterated assumptions, a fact that will be useful for proving properties on relies.
\begin{lemma}[assume-iter-conj-assume-iter]\label{L-assume-iter-conj-assume-iter}
For any atomic steps $a, b \in \AtomicSteps$,
\begin{eqnarray*}
\Om{(\Assume{a})} \together \Om{(\Assume{b})} &=& \Om{(\Assume{(a \join b)})}
\end{eqnarray*}
\end{lemma}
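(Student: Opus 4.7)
The plan is to reduce both sides to a common normal form using Lemma~\ref{L-iterated-assumption}, which gives $\Om{(\Assume{x})} = \Om{x} \Seq T_x$ where $T_x = \Nil \nondet \anegate{x} \Seq \bot$. After this rewrite the left-hand side becomes $(\Om{a} \Seq T_a) \together (\Om{b} \Seq T_b)$, and the right-hand side becomes $\Om{(a\sqcup b)} \Seq T_{a \sqcup b}$, where by the Boolean-algebra De~Morgan law on atomic steps, $\anegate{(a \sqcup b)} = \anegate{a} \nondet \anegate{b}$, so $T_{a \sqcup b} = \Nil \nondet (\anegate{a} \nondet \anegate{b}) \Seq \bot$.

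Next I would apply Lemma~\ref{L-atomic-iteration-either} with $\sync = \together$ to the left-hand side, obtaining
\[
  \Om{(a \together b)} \Seq \bigl( (T_a \together \Om{b}\Seq T_b) \nondet (\Om{a}\Seq T_a \together T_b)\bigr).
\]
Since $a \together b = a \sqcup b$ by (\ref{A-together-atomic}), the leading iteration already matches the right-hand side. What remains is to show that the bracketed expression collapses to $T_a \nondet T_b$, which equals $T_{a \sqcup b}$ after absorbing the duplicate $\Nil$.

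The main obstacle is computing $T_a \together \Om{b}\Seq T_b$ (and symmetrically the other summand). Unfolding $\Om{b} = \Nil \nondet b \Seq \Om{b}$ and distributing $\together$ over $\nondet$, I would split $T_a = \Nil \nondet \anegate{a} \Seq \bot$ into its two summands. The $\Nil$ summand synchronised against $\Om{b} \Seq T_b$ reduces to $\Nil$ using (\ref{A-nil-sync-absorb}) and (\ref{A-nil-sync-atomic}) after unfolding (every branch that performs an atomic step synchronises with $\Nil$ to give $\top$). The $\anegate{a} \Seq \bot$ summand synchronised against each unfolded branch of $\Om{b} \Seq T_b$ reduces via (\ref{A-atomic-sync-interchange}) and abort-strictness (\ref{A-together-abort}) to pieces of the form $(\anegate{a} \sqcup x) \Seq \bot$ for $x \in \{\anegate{b}, b\}$; the join of these two is $\anegate{a} \sqcup (b \nondet \anegate{b}) = \anegate{a} \sqcup \cstepd = \anegate{a}$ by the Boolean-algebra identity $b \nondet \anegate{b} = \cstepd$ and the fact that $\cstepd$ is the least atomic step. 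Hence $T_a \together \Om{b}\Seq T_b = \Nil \nondet \anegate{a}\Seq\bot = T_a$, and symmetrically the other term equals $T_b$.

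Combining, the bracketed expression is $T_a \nondet T_b = \Nil \nondet \anegate{a}\Seq\bot \nondet \anegate{b}\Seq\bot = \Nil \nondet (\anegate{a} \nondet \anegate{b})\Seq\bot = T_{a \sqcup b}$, which delivers the desired equality. The delicate calculation is really just the Boolean-algebra simplification at the end of the $\anegate{a}\Seq\bot$ case; everything else is bookkeeping with the interchange and abort laws. I would expect the full Isabelle proof to mirror exactly this decomposition, with the $\Om{b}$ unfolding handled once and reused in the symmetric argument.
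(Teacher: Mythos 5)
Your proof is correct, and it reaches the result by a genuinely different decomposition from the paper's. The paper first uses Lemma~\ref{L-iterated-assumption} and conjunctivity to rewrite each side into the \emph{distributed} form $\Om{a} \nondet \Om{a}\Seq\anegate{a}\Seq\bot$, then expands the weak conjunction into four cross-terms and evaluates each one separately via Corollary~\ref{L-atomic-either-sync} and the helper Lemmas~\ref{L-assump-help1} and \ref{L-assump-help2} (which themselves rest on Corollaries~\ref{L-atomic-iter-prefix-sync-nil} and \ref{L-atomic-iter-prefix-sync-atomic}), before collecting terms with the Boolean identity of Lemma~\ref{L-assump-help3}. You instead keep the \emph{factored} form $\Om{a}\Seq T_a$, make a single top-level application of Lemma~\ref{L-atomic-iteration-either}, and reduce the problem to the local computation $T_a \together \Om{b}\Seq T_b = T_a$; your final De~Morgan step $\anegate{a} \nondet \anegate{b} = \anegate{(a \sqcup b)}$ plays the role of the paper's Lemma~\ref{L-assump-help3}. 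Your route avoids the four-way case split and two of the helper corollaries at the cost of one self-contained residual calculation, and both routes bottom out in the same ingredients: the atomic-iteration synchronisation lemmas, the interchange and abort-strictness axioms, and Boolean algebra on atomic steps. Two small imprecisions, neither of which affects correctness: the two pieces $(\anegate{a}\sqcup b)\Seq\bot$ and $(\anegate{a}\sqcup\anegate{b})\Seq\bot$ are combined by nondeterministic choice (the lattice meet), not by join, before distributivity gives $\anegate{a}\sqcup(b \nondet \anegate{b}) = \anegate{a}$; and in that same case you should note that the $\Nil$ branch of the unfolded $\Om{b}\Seq T_b$ contributes a $\top$ summand via (\ref{A-nil-sync-atomic}), which is then absorbed by $\nondet$.
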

Appendix~\ref{S-proofs} provides the proof on page \pageref{P-assume-iter-conj-assume-iter}.

\subsection{Parallel}

Like weak conjunction, the parallel operator can be thought of as synchronising the atomic steps of two commands until either terminates, fails, or becomes infeasible, e.g.
\begin{displaymath}
\begin{array}{lll}
(a_1\Seq a_2 \Seq a_3 \Seq a_4) \parallel (b_1 \Seq b_2 \Seq b_3 \Seq b_4)
& = &
(a_1 \parallel b_1) \Seq (a_2 \parallel b_2) \Seq (a_3 \parallel b_3) \Seq (a_4 \parallel b_4) \\
(a_1\Seq a_2 \Seq a_3 \Seq a_4) \parallel (b_1 \Seq b_2 \Seq \bot)
& = &
(a_1 \parallel b_1) \Seq (a_2 \parallel b_2) \Seq \bot \\
(a_1\Seq a_2 \Seq a_3 \Seq a_4) \parallel (b_1 \Seq b_2 \Seq \top)
& = &
(a_1 \parallel b_1) \Seq (a_2 \parallel b_2) \Seq \top
\end{array}
\end{displaymath}
The definition of $a \parallel b$ for atomic steps $a$ and $b$ is not further axiomatised here to retain its generality,
but we show how it may be interpreted in different formalisms in Sections~\ref{S-rg-logic} and \ref{S-process-algebras}. 

For the parallel operator, the element $\atomid$, the atomic step that synchronises under parallel with any other atomic step $a\in \AtomicSteps$,
\begin{eqnarray}
\label{eqn-a-atomid}
  a \parallel \atomid& = & a ~,
\end{eqnarray}
can be interpreted as a placeholder for any one step taken by the environment. For example, prefixing a command $c$ with $\atomid$, i.e.\ $\atomid \Seq c$, defines a process that waits for one step, allowing any single environment step to take place, before behaving as $c$. The command $\Skip = \Om{\atomid}$ represents any finite or infinite number of steps taken by the environment and is the identity of parallel for arbitrary command $c\in\Commands$:
\begin{eqnarray*}
  c \parallel \Skip & = & c ~.
\end{eqnarray*}
This allows us to define, for any atomic step $a\in \AtomicSteps$, a program
$\Om{\atomid} \Seq a \Seq \Om{\atomid}$, 
that takes step $a$, but permits any possible environment behaviour both beforehand and afterwards.
(Note that the environment may interrupt the process forever with an infinite sequence of environment steps, thus preventing the process from taking step $a$.) 
Programs of this form are the building blocks of (interleaving) event-based languages, 
and are used in Section~\ref{S-process-algebras} to describe the interpretation of the parallel operator in process algebras.

The introduction of the atomic step $\atomid$, also provides us with an opportunity to further decompose the set of atomic steps, $\AtomicSteps$, into sub-algebras of program and environment steps, which we do in the following section.

\section{Program and environment steps}\label{S-pi-env}

\begin{figure}
 \centering
   \includegraphics[viewport=29 576 340 822,width=0.42\linewidth]{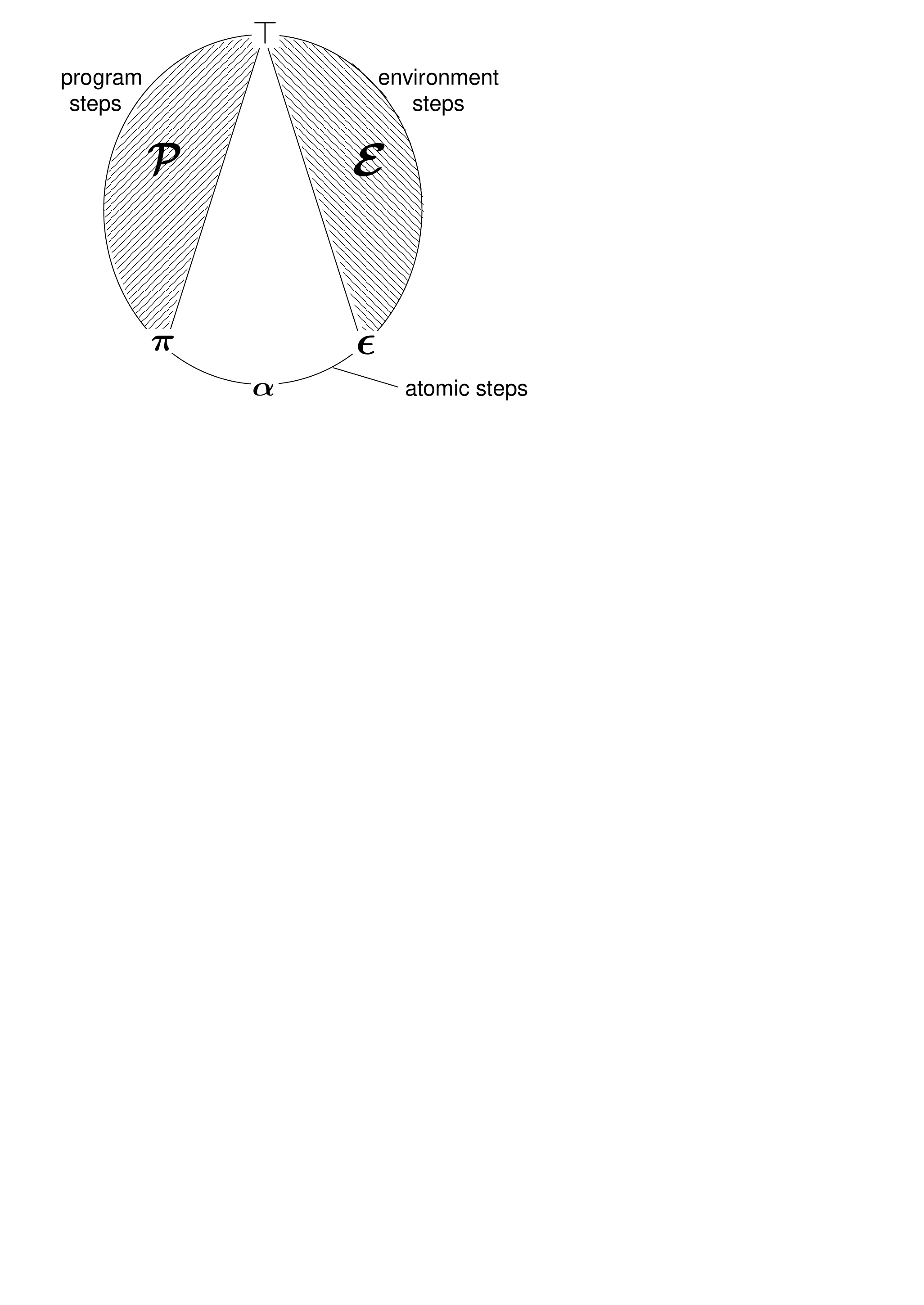}
  \caption{$\PCommands$ and $\ECommands$ as Boolean sub-algebras of $\AtomicSteps$}
  \label{fig:pienvlattice}
\end{figure}

Given that $\atomid$ represents any possible step taken by the environment, we define $\ECommands$ to be the subset of atomic steps $\e_i\in \AtomicSteps$ such that $\atomid \refsto \e_i \refsto \top$. These correspond the set of all possible \emph{environment steps}, and they can be shown to form a Boolean sub-algebra of atomic steps with disjunction $\nondet$, conjunction $\sqcup$, and with the negation of an environment step $\e_i$ within $\ECommands$ defined to be $(\anegate \e_i) \sqcup \atomid$.
We then define the complement of $\atomid$ within $\AtomicSteps$,
\begin{eqnarray}
\pibot =  \anegate \atomid \label{A-pibot-ebot}
\end{eqnarray}
to be the atomic step that can take any possible program step, and $\PCommands$ to be the subset of atomic steps $\pi_i \in \AtomicSteps$ such that $\pibot \refsto \pi_i \refsto \top$. These are then the set of all possible \emph{program steps}, and they can also be shown to form a Boolean sub-algebra of atomic steps.

Figure~\ref{fig:pienvlattice} illustrates the relation between program, environment and atomic steps.
Both sub-lattices have a bottom element, $\pibot$ and $\ebot$ respectively, and share the same top element, $\top$, with $\AtomicSteps$ and the command lattice $\Commands$.
Because both program steps and environment steps form a sub-lattice of the atomic steps $\AtomicSteps$, we have that the bottom element of $\AtomicSteps$, $\cstepd$, is refined by any program step as well as any environment step.
\vspace*{-2ex}
\\
\begin{minipage}[t]{0.5\textwidth}
\begin{eqnarray}
 \cstepd &~\refsto~& \pibot
\end{eqnarray}
\end{minipage}%
\begin{minipage}[t]{0.5\textwidth}
\begin{eqnarray}
 \cstepd &~\refsto~& \ebot
\end{eqnarray}
\end{minipage}
\vspace*{1ex}

As depicted in Figure~\ref{fig:pienvlattice}, 
program and environment steps have only the top element $\top$ in common (\ref{join-pibot-ebot}). 
All other atomic steps are either program or environment steps, 
or some non-deterministic choice over program and environment steps (\ref{choice-pi-env}). 
By definition (\ref{A-pibot-ebot}), the non-deterministic choice between the least program and environment step 
$\atomid \nondet \pibot = \atomid \nondet \anegate \atomid$ is $\cstepd$.
An atomic step negation applied to a arbitrary environment step, $\anegate \e_i$, 
results in any atomic step other than one $\e_i$ can perform 
and hence the non-deterministic choice of $\e_i$ and its negation equals the bottom of $\AtomicSteps$
(\ref{negate_e_inf_e}), 
which includes any possible program steps. 
Atomic negation applied to the bottom of $\PCommands$, $\anegate \pibot$, 
results in the non-deterministic choice over any atomic steps that are not program steps, 
which is the same as $\ebot$ (\ref{anegate-pibot}).  
\vspace*{-2ex}
\\
\begin{minipage}[t]{0.5\textwidth}
\begin{eqnarray}
 \pi_i ~\join~ \e_i &=& \top \label{join-pibot-ebot}\\
 \pi_i ~\nondet~ \e_i &\in& \AtomicSteps \label{choice-pi-env}
\end{eqnarray}
\end{minipage}%
\begin{minipage}[t]{0.5\textwidth}
\begin{eqnarray}
 \anegate \e_i ~\nondet~ \e_i  &=& \cstepd \label{negate_e_inf_e}\\
 \anegate \pibot &=& \ebot \label{anegate-pibot}
\end{eqnarray}
\end{minipage}
\vspace*{1ex}

The distinction between program and environment steps enables us to specify and prove some basic algebraic laws about guarantees and relies in Section~\ref{S-rely-guarantee}.

\subsection{Embedding in Isabelle}

As for the encoding of the Boolean algebra of tests, described in Section~\ref{S-Boolean-encoding}, program and environment steps are embedded into the lattice of commands via mappings, injective homomorphisms, from some Boolean algebras $\BooleanAlg_1$
and
$\BooleanAlg_2$ into the Boolean sub-algebras of environment and program steps, respectively:
$ \cestepd : \BooleanAlg_1 \rightarrow \AtomicSteps$
and 
$ \cpstepd : \BooleanAlg_2 \rightarrow \AtomicSteps$,
where usually $\BooleanAlg_1 = \BooleanAlg_2$
so that each program step has a matching environment step.

 With this mapping in place, the non-deterministic choice of two
 program steps as well as the join and the weak conjunction of two
 program steps can be computed on the level of the Boolean algebra,
 (\ref{pi-inf-pi}--\ref{pi-sup-pi}).
 The same set of lemmas holds for environment steps, (\ref{e-inf-e}--\ref{e-sup-e}).
 \\
 \begin{minipage}{0.5\textwidth}
 \begin{eqnarray}
   \cpstep{p \union q} &=& \cpstep{p} \nondet \cpstep{q}      \label{pi-inf-pi}\\
   \cpstep{p \inter q}   &=& \cpstep{p} \sqcup \cpstep{q}      \label{pi-sup-pi} \\
   \cpstep{\bot} &=& \top \label{pstep-bot} \\
   \cpstep{\top} &=& \cpstepd \label{pstep-top} \\
   p \subseteq q &\iff& \cpstep{q} \refsto \cpstep{p} \label{pstep-iso}
 \end{eqnarray}
 \end{minipage}%
 \begin{minipage}{0.5\textwidth}
 \begin{eqnarray}
   \cestep{p \union q} &=& \cestep{p} \nondet \cestep{q}      \label{e-inf-e}\\
   \cestep{p \inter q}   &=& \cestep{p} \sqcup \cestep{q}      \label{e-sup-e} \\
   \cestep{\bot} &=& \top \label{estep-bot} \\
   \cestep{\top} &=& \cestepd \label{estep-top} \\
   p \subseteq q &\iff& \cestep{q} \refsto \cestep{p} \label{estep-iso}
 \end{eqnarray}
 \end{minipage}

\section{Abstract specifications, guarantees and relies}\label{S-rely-guarantee} \label{SS-rely-guar}

In the rely/guarantee approach of Jones~\cite{CoJo07,Jones81d,Jones83a,Jones83b}, concurrent program specifications are traditionally formulated in terms of a  quintuple 
\begin{equation}\label{Hoare-triple}
\quintprgqc 
\end{equation}
which extends a Hoare triple with the rely $r$ and guarantee $g$ to handle concurrency \cite{Jones81d,Jones83a,Jones83b}.
The quintuple states that every program step of $c$ satisfies relation $g$ (on the program state) and that it terminates and establishes the initial-final-state relation $q$, provided it is executed from an initial state satisfying predicate $p$ and interference from the environment is bounded by relation $r$.

The synchronous refinement algebra can be used to abstractly represent specifications of this kind.
Predicates and the initial-final-state relation are abstracted using tests (Section~\ref{S-tests}), and relations $r$ and $g$ are treated as elements of \emph{some} Boolean algebras $\BooleanAlg_1$ and $\BooleanAlg_2$, respectively, so that $\cestep{r}$ represents the environment step satisfying $r$ and $\cpstep{g}$ is the program step satisfying $g$ (Section~\ref{S-pi-env}).

Instead of defining such a specification using a monolithic approach as in (\ref{Hoare-triple}), 
we decompose the two commitments (described in the above quintuple by $g$ and $q$) 
and the two assumptions ($p$ and $r$), into four separate constructs.
Initial-state assumptions ($p$) can already be represented using assertions (Section~\ref{S-assertions}), and so in this section we introduce strong specifications (that capture the end-to-end commitment $q$), program-step guarantees (describing guarantee $g$) and environment-step assumptions (describing rely $r$).
This simplifies reasoning by allowing the constructs to be treated separately, 
(e.g. strengthening a guarantee $g$ does not involve $p$, $r$ and $q$), 
as well as in combination.
They can be composed to give an overall program specification.

\subsection{The strong specification command}\label{S-specification}

A \emph{terminating command}, is one that only performs a finite number of program steps, but puts no constraints on the steps taken by its environment -- and hence could be interrupted forever by its environment. The most general non-terminating command, $\Term$, is defined by
\begin{eqnarray}\label{def-term}
\Term  &\sdef & \Fin{\cstepd} \Seq \Om{\atomid}~,
\end{eqnarray}
and, formally, we say that a command $c$ is terminating when $\Term \refsto c$.
Given a pair of tests $t_1, t_2\in \Tests$, a command of the form
\begin{eqnarray*}
t_1 \Seq \Term \Seq t_2
\end{eqnarray*}
represents a terminating command that when started in an initial state satisfying $t_1$, terminates (unless it is interrupted indefinitely by its environment) in a final state satisfying $t_2$. It is a simple example of an \emph{end-to-end specification}: a terminating command that constrains the initial and final state, but does not otherwise restrict the program or environment steps in any way. 
We call such specifications \emph{strong} because they must achieve their end-to-end constraints regardless of the behaviour of the environment, and they are usually not implementable unless they are weakened by making assumptions on the environment steps of the command. Environment assumptions may be combined with strong specifications using the weak-conjunction operator to produce feasible specifications. 
More formally, for $q \in \pset(\Tests \times \Tests)$, we refer to any command
\begin{eqnarray*}
\Spec{}{}{q} = \Nondet_{ (t_1,t_2)\in q} t_1 \Seq \Term \Seq t_2 
\end{eqnarray*}
as a \emph{strong specification command}, an abstraction of Morgan's specification statement~\cite{TSS}. A strong specification $\Spec{}{}{q}$ can be weakened to include an assumption that the initial state of the command satisfies test $p$, by pre-composing it with the assertion $\Assert{p}$, giving $\Assert{p} \Seq \Spec{}{}{q}$.

We can show that any strong specification is unaffected by weak conjunction or parallel composition with $\Term$, a property which is useful when decomposing program specifications into parallel implementations (see Section~\ref{S-parallel-intro}). 

\begin{lemma}[specification-terminates]\label{L-specification-terminates}
For arbitrary strong specification command $\Spec{}{}{q}$, we have that 
\begin{eqnarray*}
\Spec{}{}{q} \together \Term = \Spec{}{}{q}  & \textrm{~~and~~} & \Spec{}{}{q} \parallel \Term = \Spec{}{}{q} ~.
\end{eqnarray*}
\end{lemma}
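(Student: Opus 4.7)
My plan is first to use axiom~(\ref{A-sync-Inf-distrib}) to distribute each synchronisation over the non-empty nondeterministic choice defining $\Spec{}{}{q}$, reducing both equalities to the single-summand form $t_1 \Seq \Term \Seq t_2 \sync \Term = t_1 \Seq \Term \Seq t_2$ for each $(t_1,t_2) \in q$. The degenerate case $q = \emptyset$, where $\Spec{}{}{q} = \top$, can be dispatched separately by unfolding $\Term = \Om{\atomid} \nondet \cstepd \Seq \Term$ and combining axiom~(\ref{A-sync-top}) with Lemma~\ref{L-test-sync-test} to deduce $\top \sync \Term = \top$.

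For weak conjunction, both refinement directions fall out of monotonicity. I would first note that $\Chaos \refsto \Term$, because $\Chaos = \Om{\cstepd}$ admits possibly infinite atomic activity that the $\Om{\atomid}$-tail of $\Term$ forbids, and that $\Term \refsto \Spec{}{}{q}$, because $\Spec{}{}{q}$ is $\Term$ restricted by pre- and post-tests. Combining the former with $\Chaos$ being the identity of $\together$ yields $\Spec{}{}{q} = \Spec{}{}{q} \together \Chaos \refsto \Spec{}{}{q} \together \Term$; combining the latter with the idempotence axiom~(\ref{A-together-idempotent}) yields $\Spec{}{}{q} \together \Term \refsto \Spec{}{}{q} \together \Spec{}{}{q} = \Spec{}{}{q}$.

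For parallel, one direction is analogous: $\Term \refsto \Skip$ (from $\Fin{\cstepd} \refsto \Nil$ by prefix monotonicity of $\,\Seq\,$) together with $\Skip$ being the identity of $\parallel$ gives $\Spec{}{}{q} \parallel \Term \refsto \Spec{}{}{q} \parallel \Skip = \Spec{}{}{q}$. The reverse refinement does not come from monotonicity, since $\Skip \not\refsto \Term$, so I would prove the parallel equality by direct calculation. Decomposing $\Term = \Om{\atomid} \nondet \cstepd \Seq \Term$ makes each branch eligible for Lemma~\ref{L-test-command-sync-command}---the first trivially satisfies $\Skip \refsto \Om{\atomid}$, the second begins with an atomic step---so the leading $t_1$ can be factored outside the $\parallel$. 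This reduces the task to showing $\Term \Seq t_2 \parallel \Term = \Term \Seq t_2$. Expanding both $\Term$ factors and applying Lemma~\ref{L-atomic-iteration-finite} extracts an outer $\Fin{(\cstepd \parallel \cstepd)} = \Fin{\cstepd}$ prefix; by commutativity~(\ref{A-sync-comm}) only $\Om{\atomid} \Seq t_2 \parallel \Fin{\cstepd} \Seq \Om{\atomid}$ needs further simplification. A final application of Lemma~\ref{L-atomic-iteration-finite-omega}, supplemented by the auxiliary identities $\Term \parallel t_2 = t_2$ and $\Om{\atomid} \parallel \Om{\atomid} \Seq t_2 = \Om{\atomid} \Seq t_2$ (each obtained by unfolding $\Om{\atomid}$ and discarding the atomic-prefixed alternatives via~(\ref{A-sync-test}) or~(\ref{A-nil-sync-atomic})), collapses the residual to $\Term \Seq t_2$; property~(\ref{L-finite-twice}) then absorbs the outer $\Fin{\cstepd} \Seq \Fin{\cstepd}$ back into $\Fin{\cstepd}$.

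The main obstacle will be the nested iterative calculation for parallel. Idempotence renders the weak-conjunction argument immediate, but $\parallel$ is not idempotent, so the proof has to exploit the specific shape $\Fin{\cstepd} \Seq \Om{\atomid}$ of $\Term$ and the fact that its environment-only tail absorbs both a trailing test and a parallel occurrence of $\Om{\atomid}$ without loss. Keeping track of which atomic-iteration lemma applies to which sub-term---and consistently dismissing the atomic-prefixed branches that become infeasible against tests or $\Nil$---is where the bookkeeping is most likely to slip.
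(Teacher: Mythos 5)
Your proof is correct, and although it opens exactly as the paper's does --- distributing each synchronisation over the choice defining $\Spec{}{}{q}$ via (\ref{A-sync-Inf-distrib}) and reducing to the single summand $t_1 \Seq \Term \Seq t_2$ --- the two halves are then handled along genuinely different lines. For weak conjunction the paper invokes the derived property (\ref{A-weakconj-nonfail}) (that $\together$ coincides with $\sqcup$ below $\Chaos$) and finishes by lattice order; your sandwich $\Spec{}{}{q} = \Spec{}{}{q}\together\Chaos \refsto \Spec{}{}{q}\together\Term \refsto \Spec{}{}{q}\together\Spec{}{}{q} = \Spec{}{}{q}$ is in effect an inlined re-derivation of the needed instance of that property, so the content is the same. (Your stated reason for $\Chaos \refsto \Term$ is off --- $\Om{\atomid}$ also admits infinite iteration; the refinement really follows from $\Om{\cstepd} = \Om{\cstepd}\Seq\Om{\cstepd} \refsto \Fin{\cstepd}\Seq\Om{\atomid}$ --- but the fact is true and is the one the paper also uses.) For parallel the routes diverge more substantially: the paper first proves $\Term \parallel \Term = \Term$ and $(t_1\Seq\Skip\Seq t_2)\parallel\Term = t_1\Seq\Term\Seq t_2$ and then obtains the result by rewriting $(t_1\Seq\Term\Seq t_2)\parallel\Term$ as $(t_1\Seq\Skip\Seq t_2)\parallel\Term\parallel\Term$ and absorbing the extra $\Term$ by associativity, whereas you compute $(t_1\Seq\Term\Seq t_2)\parallel\Term$ head-on, factoring $t_1$ out with Lemma~\ref{L-test-command-sync-command} --- correctly noting that $\Term$ must first be unfolded into $\Om{\atomid} \nondet \cstepd\Seq\Term$ for the lemma's side-conditions to apply, a point the paper glosses over --- and then chaining Lemmas~\ref{L-atomic-iteration-finite} and \ref{L-atomic-iteration-finite-omega} with the absorption (\ref{L-finite-twice}). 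Both calculations draw on the same atomic-iteration toolkit; the paper's associativity trick buys a shorter final step at the price of two auxiliary identities, while yours is one longer direct chain. Your extra observations --- the cheap refinement $\Spec{}{}{q}\parallel\Term \refsto \Spec{}{}{q}\parallel\Skip = \Spec{}{}{q}$ from $\Term\refsto\Skip$ (made redundant by the subsequent equality proof) and the separate treatment of $q=\emptyset$, where the paper silently applies the non-empty-choice distribution axiom --- are both sound, and the latter tidies up a small gap in the paper's argument.
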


\begin{proof}
Since synchronisation operators distribute over non-deterministic choices (\ref{A-sync-Inf-distrib}), the proof reduces to showing that for arbitrary tests $t_1,t_2 \in \Tests$ 
\begin{eqnarray}
(t_1 \Seq \Term \Seq t_2) \together \Term = t_1 \Seq \Term \Seq t_2
& \textrm {~~and~~} & 
(t_1 \Seq \Term \Seq t_2) \parallel \Term = t_1 \Seq \Term \Seq t_2 ~.
\label{L-specIdem}
\end{eqnarray}
This property trivially holds for weak conjunction ($\together$) since it behaves like conjunction for commands that refine $\Chaos$ (\ref{A-weakconj-nonfail}), and $\Chaos = \Om{\cstepd} \refsto \Term \refsto t_1\Seq \Term \Seq t_2$. 
For parallel composition we first reason that
\begin{eqnarray}
\Term \parallel \Term & = & \Term
\end{eqnarray}
because
\begin{displaymath}
\Term \parallel \Term
\Equals
(\Fin{\cstepd} \Seq \Om{\atomid}) \parallel (\Fin{\cstepd} \Seq \Om{\atomid})
\Equals*[using Lemma \ref{L-atomic-iteration-finite} and idempotence of choice ($\nondet$)]
\Fin{(\cstepd \parallel \cstepd)}\Seq (\Om{\atomid} \parallel \Fin{\cstepd} \Seq \Om{\atomid})
\Equals*[we have that $\cstepd \parallel \cstepd = \cstepd$ from (\ref{A-sync-cstepd}), and $\Skip = \Om{\atomid}$ is the identity of parallel
]
\Fin{\cstepd} \Seq \Fin{\cstepd} \Seq \Om{\atomid}
\Equals*[using property $\Fin{c}\Seq\Fin{c} = \Fin{c}$ (\ref{L-finite-twice})]
\Fin{\cstepd} \Seq \Om{\atomid}
\Equals
\Term
\end{displaymath}
and then that
\begin{eqnarray}
(t_1 \Seq \Skip \Seq t_2) \parallel\Term  & = & t_1 \Seq \Term \Seq t_2
\end{eqnarray}
which is shown by
\begin{displaymath}
(t_1 \Seq \Skip \Seq t_2) \parallel \Term 
\Equals
(t_1 \Seq \Om{\cestepd} \Seq t_2) \parallel (\Fin{\cstepd} \Seq \Om{\atomid})
\Equals*[distribute $t_1$ over parallel using Lemma \ref{L-test-command-sync-command}]
t_1 \Seq ((\Om{\cestepd} \Seq t_2) \parallel (\Fin{\cstepd} \Seq \Om{\atomid}))
\Equals*[using Lemma \ref{L-atomic-iteration-finite-omega}]
t_1 \Seq \Fin{(\cestepd \parallel \cstepd)} \Seq
(
  (t_2 \parallel \Fin{\cstepd} \Seq \Om{\atomid})
  \nondet
  (\Om{\cestepd} \Seq t_2 \parallel \Om{\atomid})
)
\Equals*[atomic step $\atomid$ and command $\Skip = \Om{\atomid}$ are the atomic-step identity and identity of parallel, respectively]
t_1 \Seq \Fin{\cstepd} \Seq
(
  (t_2 \parallel \Fin{\cstepd} \Seq \Om{\atomid})
  \nondet
  \Om{\cestepd} \Seq t_2
)
\Equals*[distribute $t_2$ over parallel using Lemma \ref{L-test-command-sync-command}]
t_1 \Seq \Fin{\cstepd} \Seq
(
  t_2 \Seq (\Nil \parallel \Fin{\cstepd} \Seq \Om{\atomid})
  \nondet
  \Om{\cestepd} \Seq t_2
)
\Equals*[we have $(\Nil \parallel (\Fin{\cstepd} \Seq \Om{\atomid})) = \Nil$ by unfolding iterations and applying synchronisation axioms (\ref{A-nil-sync-absorb}) and (\ref{A-nil-sync-atomic})]  
t_1 \Seq \Fin{\cstepd} \Seq
(
   t_2 \nondet \Om{\cestepd} \Seq t_2
)
\Equals*[by iteration unfolding, $\Om{\cestepd} \Seq t_2 \refsto t_2$]
t_1 \Seq \Fin{\cstepd} \Seq \Om{\cestepd} \Seq t_2
\end{displaymath}
which finally gives us
\begin{displaymath}
(t_1 \Seq \Term \Seq t_2) \parallel \Term
=
(t_1 \Seq \Skip \Seq t_2) \parallel \Term \parallel \Term 
=
(t_1 \Seq \Skip \Seq t_2) \parallel \Term 
=
(t_1 \Seq \Term \Seq t_2)
\end{displaymath}
which concludes our proof.
\end{proof}

\subsection{The guarantee command}\label{S-guarantee}

For a process to guarantee $g$, every atomic program ($\pstepd$) step made by the program must satisfy $g$.
A guarantee puts no constraints on the environment of the process.
A guarantee command, $\Guar{g}$, is defined in terms of the iteration of
a single step guarantee, $\pguard{g}$, defined as follows.
\vspace{-1ex}
\\
\begin{minipage}{0.5\textwidth}
\begin{eqnarray}
  \pguard{g} & \sdefs &\cpstep{g} \nondet \atomid 
\end{eqnarray}
\end{minipage}%
\begin{minipage}{0.5\textwidth}
\begin{eqnarray}
  \Guar{g}    & \sdefs & \Om{\pguard{g}} \label{eqn-def-guarantee}
\end{eqnarray}
\end{minipage}
\vspace*{1ex}
\\
A command $c$ with a guarantee of $g$ is represented by $(\Guar{g}) \together c$.
If $c$ only fails when its environmental assumptions are violated, $(\Guar{g}) \together c$ constrains the program steps of $c$ to satisfy $g$ until the assumptions of $c$ are broken.
For example, $(\Guar{g}) \together (\Assert{p}\Seq \Spec{}{}{q})$, is used to define a program that satisfies guarantee $g$ and strong specification $\Spec{}{}{q}$ if the initial state satisfies test $p$. No constraints are placed on its behaviour otherwise. 

In the theory of Jones, a guarantee on a process may be strengthened.
\begin{lemma}[strengthen-guarantee]\label{L-strengthen-guarantee}
If $g_1 \subseteq g_2$, then $\pguard{g_2} \refsto \pguard{g_1}$ and  $\Guar{g_2} \refsto \Guar{g_1}$.
\end{lemma}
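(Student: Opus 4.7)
The plan is to first establish the single-step refinement $\pguard{g_2} \refsto \pguard{g_1}$ and then lift it through the $\Om{}$ iteration by a standard fixed-point/induction argument.

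For the first part, I would start from the hypothesis $g_1 \subseteq g_2$. By the correspondence (\ref{pstep-iso}) between the Boolean algebra ordering on guarantee relations and refinement of program steps, this immediately gives $\cpstep{g_2} \refsto \cpstep{g_1}$. Since non-deterministic choice is monotonic in each argument (it is the lattice meet), adjoining $\atomid$ on each side yields
\[
  \pguard{g_2} \;=\; \cpstep{g_2} \nondet \atomid \;\refsto\; \cpstep{g_1} \nondet \atomid \;=\; \pguard{g_1}~,
\]
which is the first conclusion (this also follows directly from unfolding the definition of $\pguard{}$ and noting $\cpstep{g_2} \nondet \cpstep{g_1} = \cpstep{g_2}$).

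For the second part, I would invoke the $\Om{}$-induction law (\ref{L-omega-induction}) with $c = \pguard{g_2}$ and $x = \Om{\pguard{g_1}}$: it suffices to show
\[
  \Nil \nondet \pguard{g_2} \Seq \Om{\pguard{g_1}} \;\refsto\; \Om{\pguard{g_1}}~.
\]
Unfolding $\Om{\pguard{g_1}}$ by (\ref{L-omega-unfold}) gives $\Om{\pguard{g_1}} = \Nil \nondet \pguard{g_1} \Seq \Om{\pguard{g_1}}$, so both $\Nil \refsto \Om{\pguard{g_1}}$ and $\pguard{g_1} \Seq \Om{\pguard{g_1}} \refsto \Om{\pguard{g_1}}$. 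Combining the step-level refinement $\pguard{g_2} \refsto \pguard{g_1}$ with monotonicity of sequential composition on the left (which follows from (\ref{A-seq-distr-right})) gives $\pguard{g_2} \Seq \Om{\pguard{g_1}} \refsto \pguard{g_1} \Seq \Om{\pguard{g_1}} \refsto \Om{\pguard{g_1}}$, and then meeting with $\Nil$ preserves the refinement. Induction then delivers $\Om{\pguard{g_2}} \refsto \Om{\pguard{g_1}}$, i.e.\ $\Guar{g_2} \refsto \Guar{g_1}$.

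There is essentially no obstacle here: the whole argument is just the composition of (i)~the homomorphism property (\ref{pstep-iso}), (ii)~monotonicity of $\nondet$ and $\Seq$, and (iii)~standard least-fixed-point induction. The only step worth being careful about is direction of the inequalities in the refinement lattice, since $g_1 \subseteq g_2$ gives a \emph{reverse} refinement on $\cpstep{}$; once that is handled correctly, everything else is routine monotonicity.
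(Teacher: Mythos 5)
Your overall strategy --- derive $\cpstep{g_2} \refsto \cpstep{g_1}$ from (\ref{pstep-iso}), lift it to $\pguard{g_2} \refsto \pguard{g_1}$ by monotonicity of $\nondet$, and then lift through the iteration --- is exactly the paper's proof, which simply cites ``monotonicity of iteration'' for the last step; your expansion of that step via the induction law (\ref{L-omega-induction}) is the standard way to establish it. However, two of your intermediate claims have the refinement direction reversed, and one of them is used as a link in a transitivity chain, so as written that chain fails. In this lattice $\nondet$ is the meet and $c \refsto d \iff c \nondet d = c$, so a choice is refined \emph{by} each of its branches: from the unfolding $\Om{\pguard{g_1}} = \Nil \nondet \pguard{g_1} \Seq \Om{\pguard{g_1}}$ one gets $\Om{\pguard{g_1}} \refsto \Nil$ and $\Om{\pguard{g_1}} \refsto \pguard{g_1} \Seq \Om{\pguard{g_1}}$ --- the opposite of what you wrote. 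In particular $\pguard{g_1} \Seq \Om{\pguard{g_1}} \refsto \Om{\pguard{g_1}}$ is false in general (it would force $\Om{\pguard{g_1}}$ to coincide with its one-or-more-iterations variant, discarding the $\Nil$ branch), so the chain $\pguard{g_2} \Seq \Om{\pguard{g_1}} \refsto \pguard{g_1} \Seq \Om{\pguard{g_1}} \refsto \Om{\pguard{g_1}}$ breaks at its second link. This is precisely the direction trap you flagged at the end of your proposal.

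The repair is immediate and uses only facts you already have: from $\pguard{g_2} \Seq \Om{\pguard{g_1}} \refsto \pguard{g_1} \Seq \Om{\pguard{g_1}}$ (left-monotonicity of $\Seq$, from (\ref{A-seq-distr-right})) and monotonicity of $\nondet$,
\[
  \Nil \nondet \pguard{g_2} \Seq \Om{\pguard{g_1}}
  \;\refsto\;
  \Nil \nondet \pguard{g_1} \Seq \Om{\pguard{g_1}}
  \;=\;
  \Om{\pguard{g_1}}~,
\]
where the final step is the unfold \emph{equality} (\ref{L-omega-unfold}) rather than the two one-sided inequalities. The induction law (\ref{L-omega-induction}) then gives $\Guar{g_2} = \Om{\pguard{g_2}} \refsto \Om{\pguard{g_1}} = \Guar{g_1}$ as you intended. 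The first half of your proof, establishing $\pguard{g_2} \refsto \pguard{g_1}$, is correct as stated and matches the paper.
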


\begin{proof}
If $g_1 \subseteq g_2$, then $\cpstep{g_2} \refsto \cpstep{g_1}$ and hence $\pguard{g_2} \refsto \pguard{g_1}$. Refinement of the guarantees follows by monotonicity of iteration.  
\end{proof}

A process that must satisfy both guarantee $g_1$ and guarantee $g_2$, must satisfy guarantee $g_1 \cap g_2$.

\begin{lemma}[combine-guarantees]\label{L-combine-guarantees}
For $g_1$, $g_2$ we have that
\begin{eqnarray} 
\pguard{g_1} \together \pguard{g_2} & = & \pguard{(g_1 \cap g_2)} \label{L-combine-restrict} \\
\Guar{g_1} \together \Guar{g_2} & = & \Guar{(g_1 \cap g_2)}     \label{L-combine-guarantee}
\end{eqnarray}
\end{lemma}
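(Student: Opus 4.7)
The plan is to prove the two equalities in turn, with the second following almost immediately from the first once we observe that $\pguard{g}$ is itself an atomic step.

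For the first equality, I would expand the definition $\pguard{g_i} = \cpstep{g_i} \nondet \atomid$ and then use the fact that weak conjunction distributes over non-deterministic choice (axiom \ref{A-sync-Inf-distrib}) to rewrite $\pguard{g_1} \together \pguard{g_2}$ as the choice of the four cross-products $\cpstep{g_1} \together \cpstep{g_2}$, $\cpstep{g_1} \together \atomid$, $\atomid \together \cpstep{g_2}$, and $\atomid \together \atomid$. Since each of these is a weak conjunction of atomic steps, property (\ref{A-together-atomic}) reduces them to lattice joins.

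The key observation that collapses the expression is that $\cpstep{g_i}$ is a program step while $\atomid$ is (the bottom of) $\ECommands$, so by the disjointness property (\ref{join-pibot-ebot}) we have $\cpstep{g_i} \sqcup \atomid = \top$. The two mixed terms therefore vanish in the non-deterministic choice, and the remaining terms simplify to $(\cpstep{g_1} \sqcup \cpstep{g_2}) \nondet \atomid$. Finally, the homomorphism property (\ref{pi-sup-pi}) rewrites $\cpstep{g_1} \sqcup \cpstep{g_2}$ as $\cpstep{g_1 \inter g_2}$, which folds back to $\pguard{(g_1 \inter g_2)}$ by definition.

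For the second equality, I would first note that $\pguard{g} = \cpstep{g} \nondet \atomid$ is an atomic step, being a non-deterministic choice of atomic steps (atomic steps are closed under $\nondet$ as a sub-lattice of commands). Then Corollary~\ref{L-atomic-either-sync}, applied with $\sync$ instantiated to $\together$, gives
\[
\Guar{g_1} \together \Guar{g_2} = \Om{\pguard{g_1}} \together \Om{\pguard{g_2}} = \Om{(\pguard{g_1} \together \pguard{g_2})},
\]
and substituting the first equality yields $\Om{\pguard{(g_1 \inter g_2)}} = \Guar{(g_1 \inter g_2)}$.

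The only step requiring any care is the disjointness argument $\cpstep{g_i} \sqcup \atomid = \top$; everything else is symbolic manipulation that has already been justified by the axioms and lemmas established earlier in Sections~\ref{S-sync-algebra}, \ref{S-instantiations}, and \ref{S-pi-env}.
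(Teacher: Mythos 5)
Your proposal is correct and follows essentially the same route as the paper's proof: expand the single-step guarantees, reduce weak conjunction of atomic steps to lattice join via (\ref{A-together-atomic}), kill the mixed terms using the disjointness of program and environment steps (\ref{join-pibot-ebot}), apply the homomorphism property of $\cpstepd$, and then lift to iterations via Corollary~\ref{L-atomic-either-sync}. The only (immaterial) difference is that you distribute $\together$ over the choices before converting to $\sqcup$, whereas the paper first notes that each $\pguard{g_i}$ is itself an atomic step, converts the outer $\together$ to $\sqcup$ directly, and then distributes.
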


\begin{proof}
For single-step guarantees (\ref{L-combine-restrict}) we have 
\begin{displaymath}
\pguard{g_1} \together \pguard{g_2}
\Equals*[weak conjunction is conjunction for atomic steps (\ref{A-together-atomic}), expand single-step guarantees]
(\cpstep{g_1} \nondet \atomid) \sqcup (\cpstep{g_2} \nondet \atomid)
\Equals*[distribute conjunction over choices]
(\cpstep{g_1} \sqcup \cpstep{g_2}) \nondet
(\cpstep{g_1} \sqcup \atomid) \nondet
(\atomid \sqcup \cpstep{g_2}) \nondet
(\atomid \sqcup \atomid)
\Equals*[program and environment steps are disjoint (\ref{join-pibot-ebot}), homomorphism $\cpstepd$ ]
\cpstep{g_1 \cap g_2} \nondet \atomid
\Equals
\pguard{(g_1 \cap g_2)}
\end{displaymath}  
The proof of (\ref{L-combine-guarantee}) then follows from (\ref{L-combine-restrict}) using Corollary~\ref{L-atomic-either-sync} for synchronisation operator weak conjunction ($\together$).
\end{proof}

If we assume that commands only consist of atomic steps (which are
closed under the operators), then we have that an iterated atomic step
distributes over a sequence of commands,
$a^\omega \sync (c\Seq d) = (a^\omega \sync c) \Seq (a^\omega \sync d)$.
With this assumption, it follows that 
 guarantees also distribute over a sequence of commands.
\begin{eqnarray}
\label{eqn-guar-dist}
  (\Guar{g}) \together (c \Seq d) & = & ((\Guar{g}) \together c) \Seq ((\Guar{g}) \together d)
\end{eqnarray}

\subsection{The rely command}\label{S-rely}

A rely condition $r$ represents an assumption about environment steps.
After an environment step that does not satisfy $r$, 
i.e.\ a step that refines $\cestep{\prednegate{r}}$ where $\prednegate{r}$ is the complement of $r$,
the process may do anything,
which can be represented by it aborting.
Any other step, $\cnegate\cestep{\prednegate{r}} = \cpstepd \nondet \cestep{r}$, is allowed.
The rely command is defined in terms of a single step assumption,
itself defined in terms of the abstract command $\Assume{}$ (\ref{def-assume}) as follows.
\begin{eqnarray}
   \eassume{r} & \sdefs & \Assume{(\cpstepd \nondet \cestep{r})} 
               ~~~~~~~~ = \cpstepd \nondet \cestep{r} \nondet \cestep{\prednegate{r}} \Seq \bot \\
   \Rely{r} & \sdefs & \Om{\eassume{r}}  \label{def-rely}
\end{eqnarray}
A command $c$ with a rely $r$ imposed on its environment steps is given by $(\Rely{r}) ~\together~ c$. For example, $(\Rely{r})\together (\Guar{g})$, guarantees to only take program steps satisfying $g$ until the environment takes a step in which $r$ is violated. Similarly, $(\Rely{r})\together \Spec{}{}{q}$ terminates in a state satisfying $q$ when it is executed in an environment that only takes steps satisfying $r$. Alternatively, if it is executed in an environment that does take a step that violates $r$, then its behaviour from that point onwards is not constrained -- it may not terminate and has no obligation to satisfy $q$ if it does.

Weakening a rely condition allows more environment interference and hence is a refinement.
\begin{lemma}[weaken-rely]\label{L-weaken-rely}
If $r_0 \subseteq r_1$, then
$\Rely{r_0} \refsto \Rely{r_1}$.
\end{lemma}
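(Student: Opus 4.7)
The plan is to peel the definitions back one layer at a time and reduce the claim to a simple fact about the Boolean algebra homomorphism $\cestepd$, chaining together monotonicity results that are already established earlier in the paper.

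First I would unfold $\Rely{r_i} = \Om{\eassume{r_i}}$ and appeal to monotonicity of the iteration operator $\Om{\cdot}$ (which follows from its definition as a least/greatest fixed point of a monotone function built from monotone operators, specifically distribution (\ref{A-seq-distr-right}) and (\ref{A-seq-distr-left})). This reduces the goal to showing $\eassume{r_0} \refsto \eassume{r_1}$, i.e.\ $\Assume{(\cpstepd \nondet \cestep{r_0})} \refsto \Assume{(\cpstepd \nondet \cestep{r_1})}$.

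Next, I would invoke Lemma~\ref{L-weaken-assume} (anti-monotonicity of assumptions): to obtain the above it suffices to prove the reversed refinement on the arguments,
\[
\cpstepd \nondet \cestep{r_1} \;\refsto\; \cpstepd \nondet \cestep{r_0}.
\]
This in turn follows by monotonicity of $\nondet$ from $\cestep{r_1} \refsto \cestep{r_0}$, which is exactly (\ref{estep-iso}) applied to the hypothesis $r_0 \subseteq r_1$.

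There is no real obstacle here; the proof is essentially a two-line calculation once the direction of each monotonicity is tracked carefully. The one spot worth double-checking is the reversal of direction at the assume step: weakening an environment assumption (bigger set $r_1$ of allowed environment steps) corresponds to an atomic command higher up in the refinement lattice, and Lemma~\ref{L-weaken-assume} flips this into a refinement of assumes in the expected direction, so that iterated assumes also refine as expected after applying monotonicity of $\Om{\cdot}$.
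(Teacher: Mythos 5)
Your proof is correct and follows essentially the same route as the paper's: monotonicity of $\Om{\cdot}$, anti-monotonicity of assumptions via Lemma~\ref{L-weaken-assume}, and the order-reversal of the homomorphism (\ref{estep-iso}) to discharge $\cpstepd \nondet \cestep{r_1} \refsto \cpstepd \nondet \cestep{r_0}$ from $r_0 \subseteq r_1$. The only quibble is the closing informal remark: a larger relation $r_1$ makes $\cestep{r_1}$ \emph{lower} in the refinement lattice (more nondeterministic), not higher, but your formal chain of refinements is stated in the correct direction throughout.
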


\begin{proof}
This follows from anti-monotonicity of assumptions (Lemma~\ref{L-weaken-assume}) because 
\[
  \Assume{(\cpstepd \nondet \cestep{r_0})} \refsto \Assume{(\cpstepd \nondet \cestep{r_1})}
  ~~\Leftarrow~~
  \cpstepd \nondet \cestep{r_1} \refsto \cpstepd \nondet \cestep{r_0}
  ~~\iff~~
  r_0 \subseteq r_1
\]
and monotonicity of iteration.
\end{proof}

Weak conjunctions of rely conditions simplify in the following way.

\begin{lemma}[combine-relies]\label{L-combine-relies}
$(\Rely{r_1}) \together (\Rely{r_2}) = (\Rely{r_1 \inter r_2})$
\end{lemma}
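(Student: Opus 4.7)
The plan is to unfold both sides to their underlying iterated assumption form and then appeal to \reflemma{assume-iter-conj-assume-iter} to fuse the two weak-conjoined iterations into a single one, leaving a purely atomic-step identity to verify. By the definitions of $\Rely{}$ and $\eassume{}$,
\[
  (\Rely{r_1}) \together (\Rely{r_2}) ~=~ \Om{\Assume{(\cpstepd \nondet \cestep{r_1})}} \together \Om{\Assume{(\cpstepd \nondet \cestep{r_2})}}~,
\]
so \reflemma{assume-iter-conj-assume-iter} (with $a \defs \cpstepd \nondet \cestep{r_1}$ and $b \defs \cpstepd \nondet \cestep{r_2}$) rewrites the right-hand side as $\Om{\Assume{((\cpstepd \nondet \cestep{r_1}) \sqcup (\cpstepd \nondet \cestep{r_2}))}}$. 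It remains to show the atomic-step equality
\[
  (\cpstepd \nondet \cestep{r_1}) \sqcup (\cpstepd \nondet \cestep{r_2}) ~=~ \cpstepd \nondet \cestep{r_1 \inter r_2}~,
\]
after which folding back through the definitions yields $\Rely{(r_1 \inter r_2)}$.

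For the atomic-step identity, I would distribute $\sqcup$ over $\nondet$ using the Boolean-algebra structure on $\AtomicSteps$ to obtain the four-way meet
\[
(\cpstepd \sqcup \cpstepd) \nondet (\cpstepd \sqcup \cestep{r_2}) \nondet (\cestep{r_1} \sqcup \cpstepd) \nondet (\cestep{r_1} \sqcup \cestep{r_2})~.
\]
The first conjunct simplifies by idempotence of $\sqcup$ to $\cpstepd$; the two cross-terms $\cpstepd \sqcup \cestep{r_i}$ collapse to $\top$ because program and environment steps share only $\top$ (an instance of (\ref{join-pibot-ebot}) lifted to the non-deterministic choices $\cpstepd$ and $\cestep{r_i}$ via (\ref{pi-sup-pi}) and (\ref{e-sup-e})); and the last conjunct equals $\cestep{r_1 \inter r_2}$ by the homomorphism law (\ref{e-sup-e}). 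Since $\top$ is the identity of $\nondet$, the whole expression reduces to $\cpstepd \nondet \cestep{r_1 \inter r_2}$, as required.

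The only mild obstacle is justifying $\cpstepd \sqcup \cestep{r_i} = \top$ from the axioms as stated: (\ref{join-pibot-ebot}) is written for individual program and environment steps $\pi_i,\e_i$, so I would either appeal to the closure of $\PCommands$ and $\ECommands$ under $\nondet$ together with monotonicity of $\sqcup$, or expand $\cestep{r_i} = \Nondet_{e \in r_i}\sestep{e}$ and use distributivity of $\sqcup$ over the non-deterministic choice to reduce to the pointwise case. Everything else is mechanical rewriting using the Boolean algebra of atomic steps and the previously-established lemma on weak conjunction of iterated assumptions.
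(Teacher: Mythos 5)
Your proposal is correct and follows essentially the same route as the paper's proof: expand via (\ref{def-rely}), fuse the iterated assumptions with Lemma~\ref{L-assume-iter-conj-assume-iter}, distribute $\sqcup$ over $\nondet$, collapse the cross-terms to $\top$ by disjointness of program and environment steps (\ref{join-pibot-ebot}), and apply the homomorphism law (\ref{e-sup-e}) before folding back. The ``mild obstacle'' you flag is not one the paper worries about, since $\cpstepd$ and $\cestep{r_i}$ are themselves elements of $\PCommands$ and $\ECommands$ respectively, so (\ref{join-pibot-ebot}) applies to them directly.
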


\begin{proof}
Using Lemma (\ref{L-assume-iter-conj-assume-iter}) we have:
\begin{displaymath}
(\Rely{r_1}) \together (\Rely{r_2})
\Equals*[expand rely definition (\ref{def-rely})]
\Om{(\Assume{(\cpstepd \nondet \cestep{r_1})})}
~\together~
\Om{(\Assume{(\cpstepd \nondet \cestep{r_2})})}
\Equals*[simplify weak conjunction of assumptions using Lemma \ref{L-assume-iter-conj-assume-iter}]
\Om{(\Assume{
    (\cpstepd \nondet \cestep{r_1})
    \sqcup
    (\cpstepd \nondet \cestep{r_2})
})}
\Equals*[distribute conjunction over choices]
\Om{(\Assume{
    (
    (\cpstepd \sqcup \cpstepd) \nondet
    (\cpstepd \sqcup \cestep{r_2}) \nondet
    (\cestep{r_1} \sqcup \cpstepd) \nondet
    (\cestep{r_1} \sqcup \cestep{r_2})
    )
})}
\Equals*[program and environment steps are disjoint (\ref{join-pibot-ebot}), homomorphism $\cestepd$]
\Om{(\Assume{
    (\cpstepd \nondet \cestep{r_1 \cap r_2})
  })}
\Equals*[fold rely definition (\ref{def-rely})]
(\Rely{r_1 \cap r_2})
\end{displaymath}
\end{proof}

\subsection{Abstract rely/guarantee quintuple}\label{S-Hoare-logic}

Combining initial-state assumptions, strong specifications, guarantee and rely constructs together we get that
\begin{displaymath}
\Assert{p}\Seq ((\Rely{r})\together (\Guar{g}) \together \Spec{}{}{q})
\end{displaymath}
is a specification command that satisfies guarantee $g$ and strong specification $\Spec{}{}{q}$ until the environment takes a step that violates $r$, given that the command is executed from a state satisfying test $p$. Any command $c$ is said to satisfy this specification if it is a refinement of it.
This enables algebraic reasoning about rely/guarantee quintuples.

\section{Interpretation for shared-memory concurrency}\label{S-rg-logic}\label{S-intro-par}

For shared-memory concurrency on a state-space $\Sigma$, tests are instantaneous commands, taking no atomic steps, that either have no effect on the execution of the program (behaving like $\Nil$) if a given predicate $p\in \pset(\Sigma)$ is satisfied, and are infeasible (behaving like $\top$) otherwise.
For this model, we define the injective homomorphism $\cgdd$, that introduces the Boolean sub-algebra of tests, to be a mapping from the Boolean algebra of predicates, $(\pset(\Sigma), \union, \inter, \bar{~~}, \bot, \top)$, to commands:
\begin{equation*}
\cgdd : \pset(\Sigma) \fun \Commands ~.
\end{equation*}
 
A program or environment step is an atomic transition of the state of the system from a before state, $\sigma$, to an after state, $\sigma'$, satisfying some relation $r\in \pset(\Sigma{\times}\Sigma)$, that is made by either the program or its environment, respectively. For before-states outside of the domain of $r$, the atomic step is infeasible, behaving like program $\top$.
These atomic step commands can be thought of as lifting the primitive program and environment steps on pairs of states, e.g. $\pstep{\sigma, \sigma'}$ and $\estep{\sigma, \sigma'}$, used in Aczel's original trace semantics, to the level of commands. 

We therefore instantiate injective homomorphisms $\cestepd$ and $\cpstepd$, that introduce the Boolean sub-algebras of environment and program steps, to both be mappings from the Boolean algebra of relations, $(\pset(\Sigma{\times}\Sigma), \union, \inter, \bar{~~}, \bot, \top)$, to commands: 
\begin{equation*}
\cestepd  : \pset(\Sigma{\times}\Sigma) \fun \AtomicSteps \textrm{~~and~~}
\cpstepd : \pset(\Sigma{\times}\Sigma) \fun \AtomicSteps 
\end{equation*}

Under sequential composition, program and environment steps are also specified to follow the additional axioms in which $p \dres r$ stands for the domain restriction of the relation $r$ to the set $p$ and $r \rres p$ stands for the range restriction of $r$ to $p$.\\
\begin{minipage}{0.5\textwidth}
\begin{eqnarray}
  \cgd{p} \Seq \cpstep{r} & = & \cpstep{p \dres r} \\
  \cpstep{r \rres p}  \Seq \cgd{p} & = & \cpstep{r \rres p}
\end{eqnarray}
\end{minipage}%
\begin{minipage}{0.5\textwidth}
\begin{eqnarray}
  \cgd{p} \Seq \cestep{r} & = & \cestep{p \dres r} \\
  \cestep{r \rres p}  \Seq \cgd{p} & = & \cestep{r \rres p}
\end{eqnarray}
\end{minipage}\vspace{1ex}\\
They encode the fact that either a program or environment step, $\cpstep{r}$ or $\cestep{r}$, is infeasible for states outside of the domain of its relation $r$, and establishes a state satisfying the range restriction of $r$. 
The above axioms ensure the final state of one step matches the initial state of the next step.

\subsection{Interleaving parallel}

We define parallel composition as the interleaving of steps, in which environment steps and program steps synchronise using the following axioms, that follow Aczel's original trace-based formalisation:
\begin{eqnarray}
	\cpstep{r_1} \pl \cestep{r_2} 
	&=& 
	\cpstep{r_1 \inter r_2} 
	\label{eqn-aczel-pe}
	\\
	\cestep{r_1} \pl \cestep{r_2} 
	&=& 
	\cestep{r_1 \inter r_2} 
	\label{eqn-aczel-ee}
	\\
	\cpstep{r_1} \pl \cpstep{r_2} 
	&=& 
	\top
	\label{eqn-aczel-pp}
\end{eqnarray}
The first axiom matches a program step with an environment step, possibly narrowing the relation to conform to both relations.  If no step satisfies $r_1$ and $r_2$ then the parallel composition gives $\top$.
The second axiom matches environment steps of each process to become an environment step of their composition.
The final axiom prevents synchronisation on program steps.
Using these synchronisation axioms for parallel we have, for example, that for relations $g_1, g_2, r_1$ and $r_2$
\begin{eqnarray*}
(\cpstep{g_1} \nondet \cestep{r_1}) \parallel (\cpstep{g_2} \nondet \cestep{r_2})
& ~~=~~ &
\cpstep{(g_1 \cap r_2) \cup (r_1 \cap g_2)} \nondet
\cestep{r_1 \cap r_2}
\end{eqnarray*}
We can also evaluate the parallel composition of relies and guarantees. The following lemma that shows that a rely of $r$ allows parallel behaviour for which every step guarantees $r$.

\begin{lemma}[rely-guar]\label{L-rely-guar} 
\[
  \Rely{r} = (\Rely{r}) \parallel (\Guar{r})
\]
\end{lemma}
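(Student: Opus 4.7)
The plan is to reduce both sides of the equation to a common normal form via Lemma~\ref{L-iterations-with-abort}. Since parallel composition is abort-strict by (\ref{A-parallel-abort}), that lemma applies directly to $\Om{\eassume{r}} \parallel \Om{\pguard{r}}$, provided we first cast $\eassume{r}$ in the shape $a_0 \nondet a_1 \Seq \bot$ that the lemma demands. To do this, I would compute $\anegate(\cpstepd \nondet \cestep{r})$ using de Morgan in the atomic-step Boolean algebra together with the disjointness of program and environment steps (\ref{join-pibot-ebot}); this gives $\anegate(\cpstepd \nondet \cestep{r}) = \cestep{\prednegate{r}}$, so unfolding $\Assume{}$ yields $\eassume{r} = a_0 \nondet a_1 \Seq \bot$ with $a_0 = \cpstepd \nondet \cestep{r}$ and $a_1 = \cestep{\prednegate{r}}$.

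Applying Lemma~\ref{L-iterations-with-abort} with $b = \pguard{r} = \cpstep{r} \nondet \atomid$ then gives
\[
\Rely{r} \parallel \Guar{r} = \Om{(a_0 \parallel \pguard{r})} \Seq (\Nil \nondet (a_1 \parallel \pguard{r}) \Seq \bot)~.
\]
The core computation is to verify $a_0 \parallel \pguard{r} = a_0$ and $a_1 \parallel \pguard{r} = a_1$. Both reduce to routine applications of the Aczel synchronisation axioms (\ref{eqn-aczel-pe})--(\ref{eqn-aczel-pp}) after distributing parallel over the non-deterministic choices using (\ref{A-sync-Inf-distrib}), together with the identifications $\atomid = \cestepd$ and $\pibot = \cpstepd$, the absorption $\cpstepd \nondet \cpstep{r} = \cpstepd$ (from (\ref{pstep-iso}) since $r \subseteq \top$), and the infeasibility $\cpstep{\emptyset} = \top$ (\ref{pstep-bot}), with $\top$ absorbed in any meet.

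To close the argument, I would apply Lemma~\ref{L-iterated-assumption} directly to $\Rely{r} = \Om{\Assume{(\cpstepd \nondet \cestep{r})}}$ to obtain $\Rely{r} = \Om{a_0} \Seq (\Nil \nondet a_1 \Seq \bot)$, which syntactically matches the expression just derived for $\Rely{r} \parallel \Guar{r}$; equality is then immediate.

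The main obstacle is purely bookkeeping: the atomic-step calculations must combine the three Aczel parallel axioms with the Boolean-algebra structure on atomic steps correctly, keeping track of which summands of the distributed choices collapse to $\top$ (and are absorbed in the meet) and which survive. In particular, one must verify that $\cestep{\prednegate{r}} \parallel \cpstep{r}$ reduces to $\cpstep{\prednegate{r} \cap r} = \cpstep{\emptyset} = \top$ rather than anything nontrivial, for it is exactly this infeasibility that allows the $a_1$ branch to reproduce itself and thus for $\Rely{r} \parallel \Guar{r}$ to retain the aborting behaviour of $\Rely{r}$.
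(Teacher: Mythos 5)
Your proposal is correct and follows essentially the same route as the paper: expand the definitions, apply Lemma~\ref{L-iterations-with-abort} (justified by abort-strictness of parallel), reduce the atomic synchronisations $a_0 \parallel \pguard{r}$ and $a_1 \parallel \pguard{r}$ via the Aczel axioms (\ref{eqn-aczel-pe})--(\ref{eqn-aczel-pp}), and fold back using Lemma~\ref{L-iterated-assumption}. The only cosmetic difference is that you derive $\anegate(\cpstepd \nondet \cestep{r}) = \cestep{\prednegate{r}}$ explicitly, whereas the paper builds that identity into the definition of $\eassume{r}$.
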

\begin{proof}
The detailed proof expands the definitions of $\Rely{r}$ and $\Guar{r}$ and simplifies.
\begin{displaymath}
(\Rely{r}) \parallel (\Guar{r})
\Equals*[expand rely and guarantee definitions using (\ref{def-rely}) and (\ref{eqn-def-guarantee})]
\Om{(\cpstepd \nondet \cestep{r} \nondet \cestep{\prednegate{r}} \Seq \bot)} \parallel 
\Om{(\cpstep{r} \nondet \cestepd)}
\Equals*[by Lemma~\ref{L-iterations-with-abort}]
\Om{((\cpstepd \nondet \cestep{r}) \parallel (\cpstep{r} \nondet \cestepd))} \Seq 
(\Nil \nondet (\cestep{\prednegate{r}} \parallel (\cpstep{r} \nondet \cestepd)) \Seq \bot)
\Equals*[applying program and environment step synchronisation axioms (\ref{eqn-aczel-pe}) and (\ref{eqn-aczel-ee})]
\Om{(\cpstepd \nondet \cestep{r})} \Seq (\Nil \nondet \cestep{\prednegate{r}} \Seq \bot)
\Equals*[by Lemma~\ref{L-iterated-assumption} and folding the rely definition (\ref{def-rely})]
\Rely{r}
\end{displaymath}
\end{proof}

\subsection{Parallel introduction law}\label{S-parallel-intro}

The objective of this section is to give a refinement law for taking a command of the form $(\Rely{r}) \together (c \together d)$ and refining it into a parallel implementation in which the two parallel branches perform commands $c$ and $d$, respectively. 
This is achieved with the help of relies and guarantees.  
The proof requires some constraints on the commands $c$ and $d$ for which this is possible, namely they need to terminate (\ref{cterm}) and also be unaffected when put in parallel with an arbitrary terminating command (\ref{parIdem}):
\vspace{-2ex}\\
\begin{minipage}{0.5\textwidth}
\begin{eqnarray}
  c \together \Term = c \label{cterm}
\end{eqnarray}
\end{minipage}%
\begin{minipage}{0.5\textwidth}
\begin{eqnarray}
c \parallel \Term = c \label{parIdem}
\end{eqnarray}
\end{minipage}\vspace{1ex}
To verify the parallel introduction law below, we use the following lemma, that introduces a parallel guarantee that terminates.
The proof makes use of the interchange axiom between weak conjunction and parallel (\ref{conjunction-interchange-parallel}).
\begin{lemma}[parallel-guarantee]\label{L-parallel-guarantee}
For any command $c$ satisfying (\ref{parIdem}), 
\[
  (\Rely{r}) \together c \refsto 
    ((\Rely{r \union r1}) \together c) \parallel ((\Guar{r \union r1}) \together \Term)
\]
\end{lemma}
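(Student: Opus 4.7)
The plan is to chain three moves together: first rewrite the left-hand side so that it has the shape needed to apply the interchange axiom between weak conjunction and parallel, then apply that interchange axiom to split into a parallel composition, and finally weaken the rely and strengthen the guarantee on the two parallel branches to reach the stated form.

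More concretely, I would begin by applying Lemma~\ref{L-rely-guar} to rewrite $\Rely{r}$ as $(\Rely{r}) \parallel (\Guar{r})$, and simultaneously use hypothesis (\ref{parIdem}) to rewrite $c$ as $c \parallel \Term$. This gives the equality
\[
  (\Rely{r}) \together c \;=\; \bigl((\Rely{r}) \parallel (\Guar{r})\bigr) \together \bigl(c \parallel \Term\bigr).
\]
At this point the interchange axiom (\ref{conjunction-interchange-parallel}), applied with $c_0 = \Rely{r}$, $d_0 = \Guar{r}$, $c_1 = c$, $d_1 = \Term$, immediately yields the refinement
\[
  (\Rely{r}) \together c \;\refsto\; \bigl((\Rely{r}) \together c\bigr) \parallel \bigl((\Guar{r}) \together \Term\bigr).
\]

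To finish, I would observe that $r \subseteq r \cup r_1$, so by Lemma~\ref{L-weaken-rely} we have $\Rely{r} \refsto \Rely{r \cup r_1}$, and by Lemma~\ref{L-strengthen-guarantee} we have $\Guar{r \cup r_1} \refsto \Guar{r}$. Both $\together$ and $\parallel$ are monotonic (from the distributivity axiom (\ref{A-sync-Inf-distrib}) for each synchronisation operator), so replacing $\Rely{r}$ by $\Rely{r \cup r_1}$ in the left branch and $\Guar{r}$ by $\Guar{r \cup r_1}$ in the right branch is a valid further refinement step, producing exactly the right-hand side of the lemma.

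The only point requiring care is lining up the interchange axiom in the right direction: the axiom gives a refinement, not an equality, so one must apply Lemma~\ref{L-rely-guar} and (\ref{parIdem}) as equalities on the left-hand side \emph{before} invoking interchange, rather than trying to massage the right-hand side. Beyond that the proof is essentially bookkeeping, and no reasoning about iterations or atomic steps is needed at this stage — all the work has already been packaged into Lemma~\ref{L-rely-guar}, the weakening/strengthening lemmas, and the interchange axiom.
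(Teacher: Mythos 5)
There is a genuine gap in your final step. After the interchange you have
\[
  (\Rely{r}) \together c \;\refsto\; ((\Rely{r}) \together c) \parallel ((\Guar{r}) \together \Term),
\]
and you then want to replace $\Guar{r}$ by $\Guar{r \union r_1}$ in the right branch by monotonicity. Monotonicity requires $\Guar{r} \refsto \Guar{r \union r_1}$, but Lemma~\ref{L-strengthen-guarantee} (with $r \subseteq r \union r_1$) gives you exactly the opposite inequality, $\Guar{r \union r_1} \refsto \Guar{r}$ --- as you yourself state. Strengthening a guarantee is a refinement, so \emph{weakening} one, which is what passing from $\Guar{r}$ to $\Guar{r \union r_1}$ does, is an anti-refinement. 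Semantically, $(\Guar{r \union r_1}) \together \Term$ admits program steps in $r_1 \setminus r$ that $(\Guar{r}) \together \Term$ forbids, so the claimed refinement of the right branch fails unless $r_1 \subseteq r$. (The rely half of your last step is fine: $\Rely{r} \refsto \Rely{r \union r_1}$ does hold by Lemma~\ref{L-weaken-rely}.)

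The fix is to reorder the moves, which is what the paper does: first weaken the rely, $(\Rely{r}) \together c \refsto (\Rely{r \union r_1}) \together c$, and only then apply Lemma~\ref{L-rely-guar} to the \emph{already weakened} rely, giving $(\Rely{r \union r_1}) \parallel (\Guar{r \union r_1})$, together with (\ref{parIdem}) for $c$. The interchange axiom (\ref{conjunction-interchange-parallel}) then lands directly on the stated right-hand side with $\Guar{r \union r_1}$ already in place, and no adjustment of the guarantee is ever needed. Your first two steps (rewriting via Lemma~\ref{L-rely-guar} and (\ref{parIdem}), then interchanging) are correct as far as they go; the ordering of the rely-weakening relative to them is the one point that matters.
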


\pagebreak[2]

\begin{proof}
\[
  (\Rely{r}) \together c 
 \Refsto*[using Lemma~\ref{L-weaken-rely} to weaken the rely]
  (\Rely{r \union r1}) \together c
 \Equals*[by Lemma~\ref{L-rely-guar} and assumption (\ref{parIdem})]
  ((\Rely{r \union r1}) \parallel (\Guar{r \union r1})) \together (c \parallel \Term)
 \Refsto*[by (\ref{conjunction-interchange-parallel})] 
  ((\Rely{r \union r1}) \together c) \parallel ((\Guar{r \union r1}) \together \Term)
\]
\end{proof}

The parallel introduction law is an abstract version of that of Jones. 
The main difference is that here it is expressed based on our synchronous algebra primitives  and hence an algebraic proof is possible.
\begin{lemma}[introduce-parallel]\label{L-introduce-parallel}
For commands $c$ and $d$ both satisfying (\ref{cterm}) and (\ref{parIdem}), 
\[
  (\Rely{r}) \together (c \together d) \refsto 
    ((\Rely{r \union r_0}) \together (\Guar{r_1}) \together c) \parallel
    ((\Rely{r \union r_1}) \together (\Guar{r_0}) \together d)~.
\]
\end{lemma}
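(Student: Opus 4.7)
The plan is to split the single rely across the two processes, apply Lemma~\ref{L-parallel-guarantee} to each half to introduce parallelism (with an $r_0$/$r_1$ asymmetry), then use the weak-conjunction/parallel interchange axiom~(\ref{conjunction-interchange-parallel}) to cross-pair the branches so that $c$ ends up on one side and $d$ on the other.

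First, by Lemma~\ref{L-combine-relies} with $r_1 = r_2 = r$ (which yields idempotence of $\Rely{r}$ under $\together$) plus associativity and commutativity of $\together$, I would rewrite
\[
  (\Rely{r}) \together (c \together d) = ((\Rely{r}) \together c) \together ((\Rely{r}) \together d)~.
\]
Then I would apply Lemma~\ref{L-parallel-guarantee} to the first factor with extra rely $r_0$ and to the second with extra rely $r_1$ (using hypothesis (\ref{parIdem}) on $c$ and on $d$, respectively). Conjoining the two refinements by monotonicity of $\together$ gives a refinement to $(A \parallel B) \together (C \parallel D)$, where $A = (\Rely{r \union r_0}) \together c$, $B = (\Guar{r \union r_0}) \together \Term$, $C = (\Rely{r \union r_1}) \together d$, and $D = (\Guar{r \union r_1}) \together \Term$.

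The critical step is then to swap the right parallel via commutativity of $\parallel$ to $D \parallel C$ and invoke the interchange (\ref{conjunction-interchange-parallel}), yielding a refinement to $(A \together D) \parallel (B \together C)$. This swap is precisely what produces the characteristic crossover in the conclusion: $c$ (inside $A$) ends up with the guarantee $\Guar{r \union r_1}$ spawned from the $d$-branch, and $d$ (inside $C$) ends up with the guarantee $\Guar{r \union r_0}$ spawned from the $c$-branch.

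The remaining work is routine. Rearranging each branch by associativity and commutativity of $\together$ and absorbing $\Term$ via assumption (\ref{cterm}) (so $c \together \Term = c$ and $d \together \Term = d$) simplifies the expression to $((\Rely{r \union r_0}) \together (\Guar{r \union r_1}) \together c) \parallel ((\Rely{r \union r_1}) \together (\Guar{r \union r_0}) \together d)$. Finally, since $r_i \subseteq r \union r_i$, Lemma~\ref{L-strengthen-guarantee} gives $\Guar{r \union r_i} \refsto \Guar{r_i}$ for $i \in \{0,1\}$, and monotonicity of $\together$ and $\parallel$ strengthens both guarantees, reaching the target. The main obstacle is orchestrating the interchange so that the correct cross-pairing emerges; once the $\parallel$-commutation is put in place, (\ref{conjunction-interchange-parallel}) does exactly what is required and the rest is simplification.
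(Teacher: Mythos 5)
Your proposal is correct and follows essentially the same route as the paper's proof: split via idempotence of $\together$, apply Lemma~\ref{L-parallel-guarantee} to each half, cross-pair with the interchange axiom (\ref{conjunction-interchange-parallel}), and absorb $\Term$ via (\ref{cterm}). The only (immaterial) differences are that you obtain idempotence of $\Rely{r}$ from Lemma~\ref{L-combine-relies} rather than directly from axiom (\ref{A-together-idempotent}), and you defer the guarantee strengthening via Lemma~\ref{L-strengthen-guarantee} to the end, whereas the paper applies it immediately after introducing the parallel guarantees.
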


\begin{proof}
The proof first splits the specification into two conjoined specifications before
making use of Lemma~\ref{L-parallel-guarantee} to introduce parallel guarantee for each specification.
Once in this form the interchange axiom can be used to match up the relies and guarantees 
before a final simplification.
\[
  (\Rely{r}) \together (c \together d)
 \Equals*[as $\together$ is idempotent, associative and commutative]
  (\Rely{r}) \together c \together (\Rely{r}) \together d
 \Refsto*[by Lemmas~\ref{L-parallel-guarantee} and \ref{L-strengthen-guarantee}, twice]
  (((\Rely{r \union r_0}) \together c) \parallel ((\Guar{r_0}) \together \Term)) \together 
  (((\Guar{r_1}) \together \Term) \parallel ((\Rely{r \union r_1}) \together d))
 \Refsto*[by (\ref{conjunction-interchange-parallel})]
  ((\Rely{r \union r_0}) \together c \together (\Guar{r_1}) \together \Term) \parallel
  ((\Rely{r \union r_1}) \together d \together (\Guar{r_0}) \together \Term)
 \Equals*[by~(\ref{cterm})]
  ((\Rely{r \union r_0}) \together (\Guar{r_1}) \together c) \parallel
  ((\Rely{r \union r_1}) \together (\Guar{r_0}) \together d)
\]
\end{proof}

\subsection{Strong specifications}

Given a relation $q\in \pset(\Sigma{\times}\Sigma)$, 
we introduce the following short-hand (converting from a relation on states $q$ to a relation on tests) 
to refer to the strong specification statement (Section~\ref{S-specification}) 
that establishes $q$ between its initial and final states:
 \begin{eqnarray*}  
    \RSpec{}{}{q} & \sdefs & \Spec{}{}{ \{ \sigma \in \Sigma \spot (\cgd{\{\sigma\}},
                                              \cgd{\{\sigma' \in \Sigma \;\!|\!\; (\sigma,\sigma') \in q \} } ) \} } \\
    &=&\Nondet_{\sigma \in \Sigma} \cgd{\{\sigma\}} \Seq \Term \Seq 
              \cgd{\{\sigma' \in \Sigma \;\!|\!\; (\sigma,\sigma') \in q \}}
         \label{defn-spost}
 \end{eqnarray*}
Using this notation, we can show that following lemma holds.

\begin{lemma}[conjoin-postcondition]\label{L-conjoin-postcondition}
For relations $q_0, q_1 \in \pset(\Sigma{\times}\Sigma)$, 
\begin{math}
  \RSpec{}{}{q_0 \inter q_1} = \RSpec{}{}{q_0} \together \RSpec{}{}{q_1}
\end{math}
holds.
\end{lemma}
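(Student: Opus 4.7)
The plan is to unfold both sides using the definition of $\RSpec{}{}{q}$, distribute weak conjunction over non-deterministic choice on the right, and then show that off-diagonal summands collapse to $\top$ while diagonal summands reassemble into the left-hand side. Writing $q[\sigma]$ for $\{\sigma' \mid (\sigma,\sigma') \in q\}$, distributivity of $\together$ over $\Nondet$ from axiom~(\ref{A-sync-Inf-distrib}) (combined with commutativity~(\ref{A-sync-comm})) yields
\[
\RSpec{}{}{q_0} \together \RSpec{}{}{q_1} = \Nondet_{\sigma_0,\sigma_1 \in \Sigma} (\cgd{\{\sigma_0\}} \Seq \Term \Seq \cgd{q_0[\sigma_0]}) \together (\cgd{\{\sigma_1\}} \Seq \Term \Seq \cgd{q_1[\sigma_1]}).
\]

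Within each summand I would apply Lemma~\ref{L-test-command-sync-command} twice to commute the two leading tests $\cgd{\{\sigma_0\}}$ and $\cgd{\{\sigma_1\}}$ outside the weak conjunction; the lemma applies because each remaining factor is either a test or refines $\syncidcommand = \Chaos$. The two pulled-out tests then combine by (\ref{test-seq-test}) and the homomorphism property~(\ref{tau-sup-tau}) into $\cgd{\{\sigma_0\} \inter \{\sigma_1\}}$. For $\sigma_0 \neq \sigma_1$ this becomes $\cgd{\emptyset} = \top$ via~(\ref{tau-bot-top}), so that summand equals $\top$ and is absorbed by $\Nondet$ (whose identity is $\top$), leaving only the diagonal with $\sigma_0 = \sigma_1 = \sigma$.

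The main obstacle is the auxiliary equality $(\Term \Seq \cgd{p_0}) \together (\Term \Seq \cgd{p_1}) = \Term \Seq \cgd{p_0 \inter p_1}$ needed to collapse each diagonal summand, since the axioms provide an interchange law for leading tests but not trailing ones. The $\refsto$ direction is immediate from the weak interchange axiom~(\ref{A-sync-weak-interchange-seq}) together with $\Term \together \Term = \Term$ (from Lemma~\ref{L-specification-terminates}) and Lemma~\ref{L-test-sync-test} (giving $\cgd{p_0} \together \cgd{p_1} = \cgd{p_0 \inter p_1}$ through the homomorphism). For the reverse direction I would unfold $\Term = \Fin{\cstepd} \Seq \Om{\atomid}$ and compute using the atomic-iteration lemmas of Section~\ref{S-atomic-iteration}, principally Lemma~\ref{L-atomic-iteration-finite-omega}; the mismatched branches produced by those lemmas, in which one side has already taken an atomic step while the other has terminated into its trailing test, reduce to $\top$ by~(\ref{A-sync-test}), leaving only the desired term. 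This follows the same style of symbolic computation used in the proof of Lemma~\ref{L-specification-terminates}.

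Substituting this auxiliary equality into each diagonal summand and using $(q_0 \inter q_1)[\sigma] = q_0[\sigma] \inter q_1[\sigma]$ produces $\Nondet_{\sigma \in \Sigma} \cgd{\{\sigma\}} \Seq \Term \Seq \cgd{(q_0 \inter q_1)[\sigma]}$, which is $\RSpec{}{}{q_0 \inter q_1}$ by definition.
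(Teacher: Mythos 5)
Your proof is correct, and it follows the route the paper itself merely gestures at: the paper offers only the remark that ``the detailed proof expands the definition of the specifications'' followed by an informal semantic gloss, so your calculation supplies the detail the paper omits rather than diverging from it. You have correctly isolated the one genuine technical obstacle, the trailing-test equality $(\Term \Seq \cgd{p_0}) \together (\Term \Seq \cgd{p_1}) = \Term \Seq \cgd{p_0 \inter p_1}$, for which no interchange axiom is provided; your resolution --- one direction from the weak interchange axiom (\ref{A-sync-weak-interchange-seq}) with $\Term \together \Term = \Term$ and Lemma~\ref{L-test-sync-test}, the other by unfolding $\Term = \Fin{\cstepd} \Seq \Om{\atomid}$ and applying Lemmas~\ref{L-atomic-iteration-finite} and~\ref{L-atomic-iteration-finite-omega}, with the mismatched branches collapsing to $\top$ via (\ref{A-sync-test}) --- does go through, in the same computational style as the proof of Lemma~\ref{L-specification-terminates}. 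Two small points to make explicit: the appeals to Lemma~\ref{L-test-command-sync-command} rest on the side condition $\Chaos \refsto t_1 \Seq \Term \Seq t_2$, which the paper asserts inside the proof of Lemma~\ref{L-specification-terminates} and which you should cite rather than leave implicit; and distributing $\together$ over $\Nondet_{\sigma \in \Sigma}$ using (\ref{A-sync-Inf-distrib}) requires $\Sigma \neq \emptyset$, the empty case being trivial since both sides are then $\top$ by idempotence of $\together$.
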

The detailed proof expands the definition of the specifications.
Informally $\RSpec{}{}{q_0}$ allows any terminating behaviour that end-to-end satisfies $q_1$
and $\RSpec{}{}{q_1}$ likewise but for $q_1$,
and hence their conjunction must agree on all behaviours and hence
satisfies $q_0 \inter q_1$ end-to-end.

We can instantiate Law \ref{L-introduce-parallel} to introduce the parallel operator when refining a specification, i.e., 
\begin{equation}\label{L-introduce-parallel-rspec}
  (\Rely{r}) \together \RSpec{}{}{q_0 \inter q_1} \refsto 
    ((\Rely{r \union r_0}) \together (\Guar{r_1}) \together \RSpec{}{}{q_0}) \parallel
    ((\Rely{r \union r_1}) \together (\Guar{r_0}) \together \RSpec{}{}{q_1})
\end{equation}
since  $\RSpec{}{}{q_0 \inter q_1} = \RSpec{}{}{q_o} \together \RSpec{}{}{q_1}$ (by Lemma~\ref{L-conjoin-postcondition}), and specification statements $\RSpec{}{}{q_o}$ and $\RSpec{}{}{q_1}$ satisfy the requirements on $c$ and $d$ (by Lemma \ref{L-specification-terminates}).

\newcommand{\synchpstepe}{\synchpstep{e}}
\renewcommand{\synchpstepE}{\synchpstep{E}}
\newcommand{\ateve}{\atev{e}}
\newcommand{\atevf}{\atev{f}}
\newcommand{\atevec}{\atev{\ecomple}}
\newcommand{\ecomple}{\ecompl{e}}
\newcommand{\spstepe}{\spstep{e}}
\newcommand{\spstepf}{\spstep{f}}
\newcommand{\spstepec}{\spstep{\ecomple}}
\newcommand{\plEE}{\plE{E}}
\newcommand{\plEe}{\plE{\{e\}}}

\newcommand{\complE}{\compl{E}}

\newcommand{\atomice}{\atomic{e}}

\newcommand{\refaxiom}[1]{(\ref{A-#1})}

\newcommand{\synced}[1]{\widehat{#1}}
\newcommand{\syncede}{\synced{e}}

\newcommand{\phipi}[1]{\widehat{\phi}_{#1}}
\newcommand{\phipiE}{\phipi{E}}
\newcommand{\phipie}{\phipi{\{e\}}}

\newcommand{\SRAResSet}[1]{\synced{#1} \union \compl{#1}}
\newcommand{\SRAResSete}{\{\synced{e}\} \union \compl{\{e\}}}

\newcommand{\SRARes}[2]{#2 \together (\Guar{#1})}
\newcommand{\SRAResE}[1]{\SRARes{\SRAResSet{E}}{#1}}
\newcommand{\SRARese}[1]{\SRARes{\SRAResSete}{#1}}

\section{Interpretation for event-based communication in process algebras}\label{S-process-algebras}

In the process algebra domain, processes communicate via a set of synchronisation events, $Event$, in contrast to 
processes in a shared memory concurrency model which
interleave operations on (sets of pairs of) states.
We may build a core process algebra from the basic operators, with the addition of
a set of atomic program steps $\synchpstepE$
that model a process engaging in one of the abstract events $e \in E$, for $E
\subseteq Event$. 
We assume the set $\Event$ includes at least the silent event $\silent$.
For ease of presentation we let $\spstepe$ abbreviate $\synchpstep{\{e\}}$, and
typically define our axioms on the singleton event-set case; the
definitions may be straightforwardly lifted to sets of events.

Rather than the interleaving interpretation of parallel in
\refsect{rg-logic}, in the process algebra domain program steps can be
combined.  We assume axioms \refeqn{aczel-pe} and \refeqn{aczel-ee}, but
replace \refeqn{aczel-pp} with language-specific definitions, as shown
below.  In a sense, the fundamental difference between the process algebras
CCS and CSP, and between process algebras and the shared-memory domain, is in
the interpretation of the parallel synchronisation of program steps
(of course, the domain type of the steps themselves is also fundamental).

We extend the core algebra to give two types of abstract interprocess communication: 
CCS-style binary synchronisation
and 
CSP-style multi-way synchronisation.
Note that to fully encode CSP in our algebra we would need to model \emph{external choice},
which behaves similarly to CCS's choice operator
(our non-deterministic choice operator corresponds to CSP's \emph{internal choice} operator).
Neither internal nor external choice satisfies conjunctivity of sequential composition (\ref{A-seq-distr-left}). 
The full encoding of these process algebras in our synchronous algebra is ongoing work.
See \cite{FM2016atomicSteps} for a discussion on the relationship between the process algebra SCCS and our algebra.

\subsection{Foundations}

To simplify the discussion we make the following definition.
\begin{equation}
	\ateve \sdef 
		\Om{\synchestepE} \Seq
		\spstepe \Seq
		\Om{\synchestepE}
	\label{eqn-defn-atomic}
\end{equation}
This models a process engaging in event $e$, and
is the building block of event-based languages:
we interpret both prefixing in CCS ($e.c$) and CSP ($e \fun c$) as $(\ateve \Seq c)$.
The event is preceded and succeeded by steps of the environment, similar to 
\emph{asynchronising} in Synchronous CCS \cite{Milner83} (discussed in \cite{CaC}), allowing the potential for interleaving.

The following lemma gives three possibilities for parallel actions: synchronisation or interleaving (the latter in one of two ways).
This is similar to a fundamental axiom of communication from ACP \cite{BergstraKlop84}, 
which defines parallel composition in terms of a
\emph{left-merge} operator.
\begin{lemma}[atomic-interleaving]\label{L-atomic-interleaving}
If $\spstep{e} \parallel \spstep{f} = \spstep{g}$, then
\(
  \atomic{e} \parallel \atomic{f} 
  ~=~ 
  \atomic{g} \nondet \atomic{e} \Seq \atomic{f} \nondet \atomic{f} \Seq \atomic{e}~.
\)
\end{lemma}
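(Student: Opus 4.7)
The plan is to unfold the definitions $\ateve = \Om{\synchestepE}\Seq\spstep{e}\Seq\Om{\synchestepE}$ and $\atevf = \Om{\synchestepE}\Seq\spstep{f}\Seq\Om{\synchestepE}$, then invoke Corollary~\ref{C-atomic-iteration-either} with synchronisation operator $\parallel$, taking $a = b = \synchestepE$, $c = \spstep{e}\Seq\Om{\synchestepE}$, and $d = \spstep{f}\Seq\Om{\synchestepE}$. This yields
\[
\ateve \parallel \atevf
= \Om{(\synchestepE \parallel \synchestepE)} \Seq \bigl((c \parallel d) \nondet (c \parallel \synchestepE\Seq\Om{\synchestepE}\Seq d) \nondet (\synchestepE\Seq\Om{\synchestepE}\Seq c \parallel d)\bigr).
\]
Since $\synchestepE$ is the atomic identity $\atomid$ of parallel, axiom~\refeqn{aczel-ee} gives $\synchestepE \parallel \synchestepE = \synchestepE$, so the leading factor reduces to $\Om{\synchestepE}$. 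The remaining work is to reduce the three summands inside the choice to $\spstep{g}\Seq\Om{\synchestepE}$, $\spstep{e}\Seq\Om{\synchestepE}\Seq\spstep{f}\Seq\Om{\synchestepE}$, and $\spstep{f}\Seq\Om{\synchestepE}\Seq\spstep{e}\Seq\Om{\synchestepE}$ respectively.

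For each reduction I would first apply the atomic-step interchange axiom~\refaxiom{atomic-sync-interchange} to peel off the leading atomic step. The first summand becomes $(\spstep{e}\parallel\spstep{f})\Seq(\Om{\synchestepE}\parallel\Om{\synchestepE}) = \spstep{g}\Seq\Om{\synchestepE}$, using the hypothesis together with Corollary~\ref{L-atomic-either-sync} and $\synchestepE\parallel\synchestepE = \synchestepE$. The second summand becomes $(\spstep{e}\parallel\synchestepE)\Seq(\Om{\synchestepE}\parallel(\Om{\synchestepE}\Seq\spstep{f}\Seq\Om{\synchestepE}))$; here $\spstep{e}\parallel\synchestepE = \spstep{e}$ by axiom~\refeqn{aczel-pe} (because $\synchestepE$ is the atomic identity of $\parallel$), and $\Om{\synchestepE} = \Skip$ is the identity of $\parallel$ at the command level, collapsing the whole thing to $\spstep{e}\Seq\Om{\synchestepE}\Seq\spstep{f}\Seq\Om{\synchestepE}$. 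The third summand is symmetric.

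Finally, distributing the leading $\Om{\synchestepE}$ over the non-deterministic choice via~(\ref{A-seq-distr-left}) produces three disjuncts. The first is $\Om{\synchestepE}\Seq\spstep{g}\Seq\Om{\synchestepE} = \atev{g}$ directly from the definition, and the other two must be recognised as $\atev{e}\Seq\atev{f}$ and $\atev{f}\Seq\atev{e}$. This last identification uses Law~(\ref{L-infinite-twice}), $\Om{c}\Seq\Om{c} = \Om{c}$, to collapse the internal $\Om{\synchestepE}\Seq\Om{\synchestepE}$ that appears when the definition of $\atev{e}\Seq\atev{f}$ is expanded. The main obstacle is essentially bookkeeping rather than conceptual: Corollary~\ref{C-atomic-iteration-either} produces a bulky three-way choice, and each branch must be simplified carefully, in the right order, so that the match with $\atev{g}$, $\atev{e}\Seq\atev{f}$, and $\atev{f}\Seq\atev{e}$ becomes transparent.
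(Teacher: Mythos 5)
Your proof is correct and takes essentially the same route as the paper: the paper proves the more general Lemma~\ref{L-prefixed-interleaving} by exactly this computation (unfold the definition of $\atev{e}$, apply Corollary~\ref{C-atomic-iteration-either}, synchronise the leading program steps via the atomic interchange axiom, and use that $\Om{\atomid} = \Skip$ is the identity of $\parallel$), and then obtains Lemma~\ref{L-atomic-interleaving} as an instance of it. You simply inline that same computation for the special case rather than factoring it through the general lemma, which is fine.
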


\newcommand{\omsid}{\Om{\atomid}}
\newcommand{\refcorollary}[1]{Corollary~\ref{C-#1}}

This is an instance of the following more general lemma.
\begin{lemma}[prefixed-interleaving]\label{L-prefixed-interleaving}
If $\spstep{e} \parallel \spstep{f} = \spstep{g}$,
and assuming that $c$ satisfies $\omsid \Seq c = c$ and $\omsid \Seq d = d$,
\begin{equation*}
  \atomic{e} \Seq c \parallel \atomic{f} \Seq d 
  ~=~ 
  \atomic{g} \Seq (c \pl d) \nondet 
  	\atomic{e} \Seq (c \pl \atomic{f} \Seq d) \nondet 
  	\atomic{f} \Seq (\atomic{e} \Seq c \pl d)~.
\end{equation*}
\end{lemma}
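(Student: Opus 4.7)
The plan is to reduce the lemma to an application of \refcorollary{atomic-iteration-either} by unfolding the definitions of $\atomic{e}$ and $\atomic{f}$ via (\ref{eqn-defn-atomic}) and then absorbing every leading $\omsid$ using the hypotheses $\omsid \Seq c = c$ and $\omsid \Seq d = d$. Concretely, I would begin by rewriting
\(\atomic{e} \Seq c = \omsid \Seq \spstepe \Seq \omsid \Seq c = \omsid \Seq \spstepe \Seq c\)
and symmetrically for $\atomic{f} \Seq d$, turning the left-hand side into
\(\omsid \Seq \spstepe \Seq c \parallel \omsid \Seq \spstepf \Seq d\).

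Next I would apply \refcorollary{atomic-iteration-either} with $a = b = \atomid$, using $\atomid \parallel \atomid = \atomid$ (a consequence of \refeqn{a-atomid} and so of \refaxiom{sync-id}). This produces a leading $\omsid$ times the non-deterministic choice of three terms, each of which is an atomic-step-led parallel composition amenable to the atomic-step interchange axiom \refaxiom{atomic-sync-interchange}. The three terms collapse, using the hypothesis $\spstepe \parallel \spstepf = \spstep{g}$ and the identity property of $\atomid$, to
\(\spstep{g} \Seq (c \parallel d)\),
\(\spstepe \Seq (c \parallel \omsid \Seq \spstepf \Seq d)\), and
\(\spstepf \Seq (\omsid \Seq \spstepe \Seq c \parallel d)\),
where the last two can immediately be rewritten back as $\spstepe \Seq (c \parallel \atomic{f} \Seq d)$ and $\spstepf \Seq (\atomic{e} \Seq c \parallel d)$ using the hypotheses again.

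At this point I distribute the leading $\omsid$ through the ternary choice (\refaxiom{seq-distr-left}, applicable because the choice is non-empty) and am left with terms of the form $\omsid \Seq \spstep{h} \Seq (\ldots)$ that I must identify with $\atomic{h} \Seq (\ldots) = \omsid \Seq \spstep{h} \Seq \omsid \Seq (\ldots)$. The remaining obligation is therefore an absorption fact: each of $c \parallel d$, $c \parallel \atomic{f} \Seq d$, and $\atomic{e} \Seq c \parallel d$ satisfies $\omsid \Seq x = x$.

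The main obstacle is precisely this absorption step, but it dissolves uniformly: each of $c$, $d$, $\atomic{e} \Seq c$, $\atomic{f} \Seq d$ has the form $\omsid \Seq (\cdot)$, so one more use of \refcorollary{atomic-iteration-either} with $a = b = \atomid$ rewrites every parallel composition we care about as $\omsid \Seq (\cdot)$ (invoking $\omsid \Seq \omsid = \omsid$ from (\ref{L-infinite-twice})), giving the absorption for free. Putting the chain together yields the stated three-way choice, completing the proof. Note that \reflemma{atomic-interleaving} then follows by specialising $c = d = \Nil$ and recognising that $\Nil \parallel \Nil = \Nil$ by \refaxiom{nil-sync-absorb}, once one observes that in that degenerate case the hypotheses $\omsid \Seq c = c$ can be bypassed because the residual parallel compositions reduce immediately via \refaxiom{atomic-sync-interchange}.
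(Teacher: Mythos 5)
Your proof is correct and follows essentially the same route as the paper's: unfold $\atomic{e}$ and $\atomic{f}$ via \refeqn{defn-atomic}, apply Corollary~\ref{C-atomic-iteration-either} with $a=b=\atomid$, synchronise the leading program steps using the interchange axiom (\ref{A-atomic-sync-interchange}) together with the hypothesis $\spstepe \parallel \spstepf = \spstep{g}$, and then re-absorb the $\omsid$'s via $\omsid\Seq c = c$, $\omsid\Seq d = d$ and a second appeal to the corollary (with $\omsid\Seq\omsid=\omsid$); the only difference from the paper is the cosmetic one that you absorb the inner $\omsid$'s before invoking the corollary rather than after. One caveat on your closing aside: Lemma~\ref{L-atomic-interleaving} does \emph{not} follow by specialising $c=d=\Nil$, since $\omsid\Seq\Nil=\omsid\neq\Nil$ and, worse, $\Nil \parallel \atomic{f} = \top$ (unfold $\omsid$ and apply (\ref{A-nil-sync-atomic})), so the residual compositions do not reduce to $\atomic{f}$ and $\atomic{e}$ as you claim; the correct instantiation is $c=d=\Skip=\omsid$, which satisfies the hypotheses because $\omsid\Seq\omsid=\omsid$ and yields $\atomic{e}\Seq\omsid=\atomic{e}$, $\omsid\parallel\atomic{f}\Seq\omsid=\atomic{f}$ and $\atomic{g}\Seq\omsid=\atomic{g}$ as required.
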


\begin{proof}
\[
  	\atomic{e} \Seq c \parallel \atomic{f} \Seq d
 	\Equals*[definition \refeqn{defn-atomic}]
  	(\omsid \Seq \spstepe \Seq \omsid \Seq c) 
  	\parallel 
  	(\omsid \Seq \spstepf \Seq \omsid \Seq d) 
 	\Equals*[\refcorollary{atomic-iteration-either}, and $\cestepd \pl \cestepd = \cestepd$]
  	\omsid \Seq (
		(
		  	(\spstepe \Seq \omsid \Seq c) 
		  	\parallel 
		  	(\spstepf \Seq \omsid \Seq d) 
		)
		\nondet
		(
		  	(\spstepe \Seq \omsid \Seq c) 
		  	\parallel 
		  	(\atomid \Seq \omsid \Seq \spstepf \Seq \omsid \Seq d) 
		)
		\nondet
		(
		  	(\atomid \Seq \omsid \Seq \spstepe \Seq \omsid \Seq c) 
		  	\parallel 
		  	(\spstepf \Seq \omsid \Seq d) 
		)
 	\Equals*[Synchronise initial steps, from assumption and \refeqn{a-atomid}]
  	\omsid \Seq (
		(
		  	\spstep{g} \Seq 
				(
				\omsid \Seq c
		  		\parallel 
		  		\omsid \Seq d
				)
		)
		\nondet
		(
		  	\spstepe \Seq 
				(
				(\omsid \Seq c)
		  		\parallel 
		  		(\omsid \Seq \spstepf \Seq \omsid \Seq d)
				)
		)
		\nondet
		(
		  	\spstepf \Seq 
				(
				(\omsid \Seq \spstepe \Seq \omsid \Seq c)
		  		\parallel 
		  		(\omsid \Seq d)
				)
		)
 	\Equals*[Simplify using definition \refeqn{defn-atomic} and the assumptions $\omsid \Seq c = c$ and $\omsid \Seq d = d$]
  	\omsid \Seq (
		  	\spstep{g} \Seq 
				(
				c
		  		\parallel 
		  		d
				)
		\nondet
		(
		  	\spstepe \Seq 
				(
				c
		  		\parallel 
		  		\atevf \Seq d
				)
		)
		\nondet
		(
		  	\spstepf \Seq 
				(
				\ateve \Seq c
		  		\parallel 
		  		d
				)
		)
 	\Equals*[if $\omsid \Seq c = c$ and $\omsid \Seq d = d$ then $c \pl d = \omsid \Seq (c \pl d)$ by \reflemma{atomic-iteration-either}, noting that $\ateve \Seq c = \omsid \Seq \ateve \Seq c$]
  	\omsid \Seq (
		  	\spstep{g} \Seq \omsid \Seq
				(
				c
		  		\parallel 
		  		d
				)
		\nondet
		(
		  	\spstepe \Seq \omsid \Seq
				(
				c
		  		\parallel 
		  		\atevf \Seq d
				)
		)
		\nondet
		(
		  	\spstepf \Seq \omsid \Seq
				(
				\ateve \Seq c
		  		\parallel 
		  		d
				)
		)
 	\Equals*[distibute initial $\omsid$; definition \refeqn{defn-atomic}]
		  	\atev{g} \Seq 
				(
				c
		  		\parallel 
		  		d
				)
		\nondet
		  	\ateve \Seq
				(
				c
		  		\parallel 
		  		\atevf \Seq d
				)
		\nondet
		  	\atevf \Seq
				(
				\ateve \Seq c
		  		\parallel 
		  		d
				)
\]
\end{proof}

\newcommand{\Rename}[2]{#2[#1]}
\newcommand{\Renamephi}[1]{\Rename{\phi}{#1}}
We introduce a \emph{renaming} operator, as in CCS.  
The command $\Rename{\phi}{c}$, where $\phi$ is a total function on atomic actions, is defined by the following axioms.
\begin{eqnarray}
	\Renamephi{\Nil}
	&=&
	\Nil
	\\
	\Renamephi{(c_1 \nondet c_2)}
	&=&
	\Renamephi{c_1} \nondet \Renamephi{c_2}
	\label{eqn-dist-rename-choice}
	\\
	\Renamephi{(c_1 \Seq c_2)}
	&=&
	\Renamephi{c_1} \Seq \Renamephi{c_2}
	\\
	\Renamephi{a}
	&=&
	\phi(a)
\end{eqnarray}

\subsection{Communication in CCS}
\label{S-ccs-encoding}

In CCS each non-silent event $e$ has a complementary event $\ecomple$.
(We use $\ecomple$ rather than Milner's $\overline{e}$ to avoid confusion with set complement.)
A program step $\spstepe$ and its corresponding complementary program step $\spstepec$
may synchronise to become a silent step.
\begin{eqnarray}
	\spstepe \pl \spstepec = \ssilent
	\label{eqn-ccs-pp}
\end{eqnarray}
All other combinations of program steps result in $\top$.

Using an instantiation of \reflemma{atomic-interleaving}, from
\refeqn{ccs-pp} we may
derive
\begin{eqnarray}
  \ateve \pl \atevec ~ = ~ \atevs \nondet \ateve \Seq \atevec \nondet \atevec \Seq \ateve~.
  \label{eqn-ccs-atev-sync}
\end{eqnarray}
As such, events may synchronise \emph{or} interleave.
In CCS the restriction operator 
$\Res{E}{c}$,
where $E$ is a set of $\Event$s,
may be employed to exclude the final two interleaving options and hence
force processes to synchronise and generate a silent step.
Restriction may be defined straightforwardly using 
weak conjunction ($\together$) to forbid events
in $E$, along with the concept of \emph{guarantees} from \refsect{pi-env}.
\begin{eqnarray}
	\Res{E}{c} &\sdef& c \together (\Guar{\compl{E}})
	\label{eqn-ccs-res}
\end{eqnarray}
This definition restricts $c$ to just behaviours outside of $E$.

The behaviour of an atomic event inside a restriction is given by the following
lemmas, which follow straightforwardly from \refeqn{def-guarantee} and
\refeqn{defn-atomic}.
\begin{lemma}[atomic-restriction]\label{L-atomic-restriction}
If $e \in E$ and $f \notin E$,
\begin{eqnarray*}
	\atomice \together \Guar{E} = \atomice
	\qquad
	\atomic{f} \together \Guar{E} = \omsid \Seq \top
\end{eqnarray*}
\end{lemma}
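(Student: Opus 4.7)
The plan is to expand both $\Guar{E} = \Om{\pguard{E}}$ and $\atomice = \omsid \Seq \spstepe \Seq \omsid$ using their definitions \refeqn{eqn-def-guarantee} and \refeqn{eqn-defn-atomic}, and then reduce the weak conjunction to the single-step computation $\spstepe \together \pguard{E}$, which is where the hypothesis on membership in $E$ enters.

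First I would compute the ``padding'' factor $\omsid \together \Om{\pguard{E}}$. Using \refcorollary{atomic-either-sync} (for the weak-conjunction instance) this reduces to $\Om{(\atomid \together \pguard{E})}$. Since $\pguard{E} = \cpstep{E} \nondet \atomid \refsto \atomid$, axiom \refaxiom{together-atomic} gives $\atomid \together \pguard{E} = \atomid \sqcup \pguard{E} = \atomid$, so $\omsid \together \Om{\pguard{E}} = \omsid$. Next I would compute the single-step factor $\spstepe \together \pguard{E}$: by \refaxiom{together-atomic} it equals $\spstepe \sqcup (\cpstep{E} \nondet \atomid)$, which distributes to $(\spstepe \sqcup \cpstep{E}) \nondet (\spstepe \sqcup \atomid)$. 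The right disjunct collapses to $\top$ by \refeqn{join-pibot-ebot} (program and environment steps only share $\top$), and the left disjunct equals $\cpstep{\{e\} \inter E}$ by the homomorphism property \refeqn{pi-sup-pi}. Hence $\spstepe \together \pguard{E} = \cpstep{\{e\} \inter E}$, which evaluates to $\spstepe$ when $e \in E$ and to $\cpstep{\emptyset} = \top$ when $f \notin E$ (using \refeqn{pstep-bot}).

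To assemble the result I would apply \reflemma{atomic-iteration-either} (instantiated with sync$\,=\,\together$) to split the outer $\omsid$ from $\Om{\pguard{E}}$, giving a leading $\Om{(\atomid \together \pguard{E})} = \omsid$ followed by a choice. The ``right-biased'' branch, $\omsid \Seq \spstepe \Seq \omsid \together \Nil$, reduces to $\top$ by unfolding $\Om{\atomid}$ and applying \refaxiom{nil-sync-atomic} (since every alternative begins with an atomic step and \refaxiom{seq-annihilation-left} propagates $\top$), and is therefore absorbed. The remaining branch $\spstepe \Seq \omsid \together \Om{\pguard{E}}$ is handled by unfolding $\Om{\pguard{E}}$ once: the $\Nil$-summand again collapses to $\top$, and the $\pguard{E} \Seq \Om{\pguard{E}}$-summand decomposes by \refaxiom{atomic-sync-interchange} into $(\spstepe \together \pguard{E}) \Seq (\omsid \together \Om{\pguard{E}})$, which by the earlier calculation equals $(\spstepe \together \pguard{E}) \Seq \omsid$.

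The two cases then follow: for $e \in E$, the result is $\omsid \Seq \spstepe \Seq \omsid = \atomice$; for $f \notin E$, we get $\omsid \Seq \top \Seq \omsid = \omsid \Seq \top$, using that $\top$ is a left annihilator for sequential composition. The main obstacle is the bookkeeping around \reflemma{atomic-iteration-either}: verifying that every ``interleaving'' branch which synchronises a genuine atomic step against $\Nil$ collapses to $\top$ and is therefore absorbed in the choice, so that only the straight-line factorisation survives. Once this is established the algebraic simplifications above are routine.
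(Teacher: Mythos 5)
Your proof is correct and is essentially the ``straightforward'' expansion from the definitions that the paper alludes to (it gives no detailed proof): expand $\atomice$ and $\Guar{E}$, reduce everything to the single-step conjunction $\spstepe \together \pguard{E} = (\spstepe \sqcup \cpstep{E}) \nondet (\spstepe \sqcup \atomid) = \cpstep{\{e\} \inter E}$, and absorb the $\Nil$-against-atomic branches as $\top$. The only remark worth making is that the iteration bookkeeping you do by hand with Lemma~\ref{L-atomic-iteration-either} is already packaged as Corollary~\ref{L-atomic-iter-prefix-sync-nil}, which instantiated with $a = \pguard{E}$, $b = \atomid$, $b_1 = \spstepe$, $c = \omsid$ yields $\atomice \together \Guar{E} = \Om{(\pguard{E} \together \atomid)} \Seq (\pguard{E} \together \spstepe) \Seq (\Guar{E} \together \omsid) = \omsid \Seq (\spstepe \together \pguard{E}) \Seq \omsid$ in one step.
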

The command $\omsid \Seq \top$ allows the environment to make
some number of steps before becoming infeasible.  As such for atomic actions
it behaves similarly to $\top$ for general commands.
\begin{equation}
\label{eqn-omsidtop-elim}
	\atomice \nondet \omsid \Seq \top
	=
	\atomice
\end{equation}
We may now show that
a synchronisation within the corresponding restriction results in a silent step.
\begin{lemma}[CCS-synchronise]\label{L-ccs-synchronise}
\begin{equation*}
  	\Res{\{e, \ecomple\}}{ (\ateve \pl \atevec) } 
	= 
  	\atevs
\end{equation*}
\end{lemma}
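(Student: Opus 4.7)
The plan is to unfold the restriction operator using its definition \refeqn{ccs-res}, replace the parallel composition $\ateve \pl \atevec$ using \refeqn{ccs-atev-sync}, and then push the guarantee inside the resulting three-way non-deterministic choice so that each branch simplifies individually.

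More concretely, I would start by writing
\[
	\Res{\{e, \ecomple\}}{(\ateve \pl \atevec)}
	~=~
	(\atevs \nondet \ateve \Seq \atevec \nondet \atevec \Seq \ateve) \together \Guar{\compl{\{e,\ecomple\}}}
\]
and then distribute $\together$ over $\nondet$ using \refaxiom{sync-Inf-distrib}. This produces one branch per summand. The silent branch $\atevs \together \Guar{\compl{\{e,\ecomple\}}}$ simplifies to $\atevs$ directly by the first half of \reflemma{atomic-restriction}, since $\silent \in \compl{\{e,\ecomple\}}$.

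For the two interleaving branches I would apply the guarantee-distribution property \refeqn{guar-dist} to split $\Guar{\compl{\{e,\ecomple\}}} \together (\ateve \Seq \atevec)$ into $(\Guar{\compl{\{e,\ecomple\}}} \together \ateve) \Seq (\Guar{\compl{\{e,\ecomple\}}} \together \atevec)$; both $e$ and $\ecomple$ lie \emph{outside} $\compl{\{e,\ecomple\}}$, so the second half of \reflemma{atomic-restriction} reduces each factor to $\omsid \Seq \top$. Using that $\top$ is a left annihilator under sequential composition \refaxiom{seq-annihilation-left}, the sequential composition $\omsid \Seq \top \Seq \omsid \Seq \top$ collapses to $\omsid \Seq \top$, and symmetrically for $\atevec \Seq \ateve$.

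Putting the three branches back together gives $\atevs \nondet \omsid \Seq \top \nondet \omsid \Seq \top$, which by idempotence of $\nondet$ and then \refeqn{omsidtop-elim} simplifies to $\atevs$, as required. The only subtle step is the appeal to \refeqn{guar-dist}, which requires that $\ateve$ and $\atevec$ consist solely of atomic steps; this holds because $\ateve = \omsid \Seq \spstepe \Seq \omsid$ by \refeqn{defn-atomic} is built entirely from atomic commands, so the hypothesis is met and no additional work is needed. Everything else is routine algebraic manipulation once the restriction is pushed inside the choice.
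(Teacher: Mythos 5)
Your proof is correct and follows essentially the same route as the paper's: rewrite via \refeqn{ccs-atev-sync}, unfold the restriction \refeqn{ccs-res}, distribute $\together$ over the choice, apply \refeqn{guar-dist} and \reflemma{atomic-restriction} to each branch, and finish with \refeqn{omsidtop-elim}. The only slip is a citation: the fact $\top \Seq c = \top$ used to collapse $\omsid \Seq \top \Seq \omsid \Seq \top$ is derived from right-distribution over the empty choice \refeqn{A-seq-distr-right}, not from \refeqn{A-seq-annihilation-left}, which concerns $\bot$.
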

\begin{proof}
\[
  	\Res{\{e, \ecomple\}}{
  		(\ateve \pl \atevec)
	} 
 	\Equals*[from \refeqn{ccs-atev-sync}]
  	\Res{\{e, \ecomple\}}{
  		(\atevs \nondet \ateve \Seq \atevec \nondet \atevec \Seq \ateve)
  	}
 	\Equals*[definition \refeqn{ccs-res}]
  	(\atevs \nondet \ateve \Seq \atevec \nondet \atevec \Seq \ateve)
	\together \Guar{\compl{\{e, \ecomple\}}}
 	\Equals*[Distribute weak conjunction over choice \refaxiom{sync-Inf-distrib}]
  	\SRARes{\compl{\{e, \ecomple\}}}{
  		\atevs 
  	}
	\nondet 
  	\SRARes{\compl{\{e, \ecomple\}}}{
		(\ateve \Seq \atevec) 
  	}
	\nondet 
  	\SRARes{\compl{\{e, \ecomple\}}}{
		(\atevec \Seq \ateve)
  	}
 	\Equals*[Distribute guarantee over sequential \refeqn{guar-dist}
	    $\cross 2$; \reflemma{atomic-restriction} $\cross 3$; simplify]
  	\atevs 
	\nondet 
  	\omsid \Seq \top
	\nondet 
  	\omsid \Seq \top
 	\Equals*[from \refeqn{omsidtop-elim}]
  	\atevs
\]
\end{proof}

\reflemma{prefixed-interleaving} and \reflemma{ccs-synchronise}
are the foundation for proving communication behaviours of CCS,
such as the following.
\[
	\Res{\{e, \ecomple\}}{
		(e.c \pl \ecomple.d)
	}
	~=~ 
	\iota.\Res{\{e,\ecomple\}}{(c \pl d)}
\]

\subsection{Communication in CSP}
\label{S-csp-encoding}

CSP-style multi-way communication allows 
any number of processes (not just two) to synchronise on an event.
The key axiom in the interpretation of parallel is, again, in the behaviour
of two program steps.%
\footnote{
The encoding given in \cite{FM2016atomicSteps} allowed synchronisation
between two processes on the alphabet $E$, but prevented a third process
(the environment) from engaging in $e \in E$, because environment steps
were removed from the traces.  
}
We define two program steps on event $e$ to merge into a single ``synchronised'' event, $\syncede$.
The set of events is extended with new events $\syncede$ for every event $e$, and
introduce a step $\sestep{e}$, where $\sestepE \refsto \sestep{e}$ for all $e \in \Event$.
\begin{equation}
	\spstepe \pl \spstepe ~=~ \spstep{\syncede}
	\quad \mbox{for $e \neq \silent$}
\label{eqn-csp-pp} 
\end{equation}
The synchronisation tag marks the event as having been the result of a
synchronisation, but because further synchronisation is possible, it is not
renamed to a silent step as in CCS.  The tag distinguishes
synchronised events from unsynchronised (interleaved) events.

Using an instantiation of \reflemma{atomic-interleaving}, from
\refeqn{csp-pp} we may
derive (cf. \refeqn{ccs-atev-sync})
\begin{eqnarray}
  \ateve \pl \ateve ~ = ~ \atev{\syncede} \nondet \ateve \Seq \ateve ~.
  \label{eqn-csp-atev-sync}
\end{eqnarray}

\newcommand{\cspResSet}[1]{\widetilde{#1}}
\newcommand{\cspResE}{\cspResSet{E}}
\newcommand{\cspRese}{\cspResSet{\{e\}}}
\renewcommand{\cspResSet}[1]{#1 \cup \synced{\compl{#1}}}

Now we define CSP's parallel operator parameterised with alphabet $E$ to allow
synchronisation on events in $E$ only, while only events not in $E$ may be
interleaved.  
The synchronisation tag is stripped using the renaming $\phipiE$.
\begin{eqnarray}
	c_1 \plEE c_2
	&\sdef&
	\Rename{\phipiE}{(\SRAResE{(c_1 \pl c_2)})}
	\label{eqn-csp-pl-defn}
	\\
	\mbox{where}
	\quad
	\phipiE(\spstep{\syncede}) &=& \spstepe 
			\qquad \mbox{if $e \in E$} 
	\label{eqn-defn-phipiE-in}
	\\
	\phipiE(a) &=& a 
			\qquad \quad~\mbox{for all other steps}
	\label{eqn-defn-phipiE-out}
\end{eqnarray}
Hence CSP synchronisation on alphabet $E$ is defined in terms of basic parallel, 
with a guarantee (similar to a CCS restriction) that only steps in $E$ may synchronise and only steps in the complement of $E$ may interleave, inside a renaming.
The notation $\synced{E}$ is the set of events
obtained by applying the synchronisation tag to elements in $E$, i.e., $\{\synced{e} | e \in E\}$.
Events in $E$ that have synchronised are renamed back to normal $\spstepe$ steps by the renaming $\phipiE$.  
Thus, the environment cannot determine whether an event is
the result of sychronisation or interleaving.%
\footnote{
For completeness, note that we could define CCS's parallel composition using a similar structure to that above for CSP, where instead
of \refeqn{ccs-pp} we define
$
	\spstepe \pl \spstepec = \spstep{\syncede}
$, and hence make the treatment closer to \refeqn{csp-pp}.  Such a definition may allow closer comparison of the two algebras; in this paper we instead
adapt at the more fundamental level of the definition of parallel composition.
}

Some of the basic communication properties from CSP follow from the above
definitions and the atomic algebra.
\begin{lemma}[atomic-sync]\label{L-atomic-sync}
\begin{equation*}
	\ateve \plE{\{e\}} \ateve
	=
	\ateve
\end{equation*}
\end{lemma}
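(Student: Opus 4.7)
The strategy parallels that of \reflemma{ccs-synchronise}: unfold the CSP parallel, apply the two-way interleaving identity \refeqn{csp-atev-sync} to $\ateve \pl \ateve$, kill the unsynchronised interleaving using the restriction guarantee, and then push the renaming through to strip the synchronisation tag.

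By \refeqn{csp-pl-defn}, the goal is to compute $\Rename{\phipie}{((\ateve \pl \ateve) \together \Guar{\SRAResSete})}$. Substituting \refeqn{csp-atev-sync} for $\ateve \pl \ateve$ and distributing weak conjunction over the resulting choice via \refaxiom{sync-Inf-distrib}, the argument of the rename becomes the choice of $\atev{\syncede} \together \Guar{\SRAResSete}$ with $(\ateve \Seq \ateve) \together \Guar{\SRAResSete}$. The first disjunct reduces to $\atev{\syncede}$ by \reflemma{atomic-restriction}, since $\syncede \in \SRAResSete$. For the second, $e \notin \SRAResSete$, so \reflemma{atomic-restriction} yields $\ateve \together \Guar{\SRAResSete} = \omsid \Seq \top$; distributing the guarantee over sequential composition via \refeqn{guar-dist} and using that $\top$ is a left annihilator of $\Seq$ (a consequence of \refaxiom{seq-distr-right}) collapses $(\ateve \Seq \ateve) \together \Guar{\SRAResSete}$ to $\omsid \Seq \top$ as well. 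Applying \refeqn{omsidtop-elim} then simplifies the choice to just $\atev{\syncede}$.

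It remains to evaluate $\Rename{\phipie}{\atev{\syncede}}$. Unfolding \refeqn{defn-atomic} gives $\atev{\syncede} = \omsid \Seq \spstep{\syncede} \Seq \omsid$, and by the distribution of renaming over sequential composition together with \refeqn{defn-phipiE-in} (here $e \in \{e\}$, so $\phipie$ strips the synchronisation tag), this equals $\Rename{\phipie}{\omsid} \Seq \spstepe \Seq \Rename{\phipie}{\omsid}$. If we can show $\Rename{\phipie}{\omsid} = \omsid$, then another application of \refeqn{defn-atomic} closes the proof with $\ateve$.

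\textbf{Main obstacle.} The rename axioms are given only for $\Nil$, $\nondet$, $\Seq$ and atomic actions; they say nothing about the iteration operator. The auxiliary lemma required is $\Rename{\phi}{\Om{c}} = \Om{\Rename{\phi}{c}}$. The direction $\Om{\Rename{\phi}{c}} \refsto \Rename{\phi}{\Om{c}}$ drops out cleanly from the induction rule~\eqref{L-omega-induction}: by the distribution axioms and unfolding~\eqref{L-omega-unfold} the term $\Rename{\phi}{\Om{c}}$ satisfies the defining recurrence, whence it majorises $\Om{\Rename{\phi}{c}}$. The reverse refinement is more delicate and needs either a strengthening of \refeqn{dist-rename-choice} to universal disjunctivity (so that $\Rename{\phi}{\Fin{c}} = \Fin{\Rename{\phi}{c}}$ via \eqref{L-finite-iteration}) together with a treatment of $\Inf{c}$ through \refeqn{inf-iter-def}, or a direct continuity argument for renaming. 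Once that lemma is in hand, \refeqn{defn-phipiE-out} gives $\phipie(\atomid) = \atomid$ and hence $\Rename{\phipie}{\omsid} = \omsid$, completing the proof.
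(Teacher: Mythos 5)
Your proof follows the same route as the paper's: unfold via \refeqn{csp-pl-defn}, substitute \refeqn{csp-atev-sync}, distribute the restriction guarantee over the choice, dispose of the two branches with \reflemma{atomic-restriction}, \refeqn{guar-dist} and \refeqn{omsidtop-elim}, and strip the synchronisation tag with the renaming. Your ``main obstacle'' is not a defect of your argument but a gap the paper's final ``simplify'' step silently elides --- both proofs need $\Rename{\phipie}{\omsid} = \omsid$ (and the paper additionally $\Rename{\phipie}{(\omsid \Seq \top)} = \omsid \Seq \top$), which does require the auxiliary fact $\Rename{\phi}{\Om{c}} = \Om{\Rename{\phi}{c}}$ that you correctly flag as not derivable from the stated renaming axioms alone.
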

\begin{proof}
\[
	\ateve \plE{\{e\}} \ateve
 	\Equals*[definition \refeqn{csp-pl-defn}]
	\Rename{\phipie}{(\SRARese{(\ateve \pl \ateve)})} 
	\Equals*[from \refeqn{csp-atev-sync}]
	\Rename{\phipie}{(\SRARese{(\atev{\syncede} \nondet \ateve \Seq \ateve)})}
 	\Equals*[distribute weak conjunction and renaming over choice \refaxiom{sync-Inf-distrib}; \refeqn{dist-rename-choice}]
	\Rename{\phipie}{(\SRARese{
		\atev{\syncede}
	})}
	\nondet 
	\Rename{\phipie}{(\SRARese{
		\ateve \Seq \ateve})
	}
	\Equals*[\reflemma{atomic-restriction}$\cross 2$]
	\Rename{\phipie}{
		\atev{\syncede}
	}
	\nondet 
	\Rename{\phipie}{(\omsid \Seq \top)}
	\Equals*[from \refeqn{defn-phipiE-in} and \refeqn{defn-phipiE-out}; \refeqn{omsidtop-elim}; simplify]
	\ateve
\]
\end{proof}

Recalling that CSP's
prefixing operator $e \fun c$ is defined as $\ateve \Seq c$, 
\reflemma{prefixed-interleaving} and \reflemma{atomic-sync} form the basis for proving communication axioms of CSP such
as the following.
\begin{eqnarray} 
	(e \fun c_1 \plEe e \fun c_2) &=& e \fun (c_1 \plEe c_2)
	\label{eqn-csp-synchronise}
	\\
	(e \fun c_1 \plEe f \fun c_2) &=& f \fun ((e \fun c_1) \plEe c_2)
	\label{eqn-csp-interleave}
\end{eqnarray}

\newcommand{\renamingHE}{{\cal H}_E}

The \emph{hiding} operator of CSP, $\Hide{E}{c}$, affects program steps, renaming events in $E$
to silent events.  
This operator may be encoded straightforwardly as a renaming.
\begin{equation}
	\Hide{E}{c}
	\sdef
	\Rename{\renamingHE}{c}
	\quad
	\mbox{where} 
	\quad
		\begin{array}[t]{rcll}
			\renamingHE(\spstepe) &=& \ssilent
			\quad
			& \mbox{if $e \in E$}
			\\
			\renamingHE(a) &=& a
			\quad
			&\mbox{for all other steps}
		\end{array}
\end{equation}
Hence, following from \refeqn{csp-synchronise},
\[
	\Hide{\{e\}}{(e \fun c_1 \plE{\{e\}} e \fun c_2)} = \silent \fun \Hide{\{e\}}{(c_1 \plE{\{e\}} c_2)}
\]

\subsection{Combining states and events}

Specification languages such as Circus \cite{SemanticsOfCircus}, PAT \cite{PATCSP} and CSP$_{\sigma}$ \cite{IFM09}
combine CSP and state.  
Below we define a simple unified language as an interpretation of our algebra,
which we base on combining elements from
the earlier definitions, arbitrarily choosing CSP-style synchronisation rather than CCS-style.

The instantiation of the Boolean algebra for steps is in this case is a set of triples,
$\power(\Sigma \cross Event \cross \Sigma)$.
Using this instantiation we may describe atomic steps in which
the event can depend on the initial state, and the final state can depend on both
the initial state and the event. 
Synchronisation on program steps may be straightforwardly defined as below,
where
$T_1$ and $T_2$ are sets of state-event-state triples.
\begin{eqnarray}
	\cpstep{T_1} \pl \cpstep{T_2}
	&=&
	\cpstep{\{(\sigma, \syncede, \sigma') | (\sigma, e, \sigma') \in T_1
	\int T_2\}}
	\label{eqn-interp-csps}
\end{eqnarray}
This definition allows identical steps to be conjoined if they synchronise on
their events, and otherwise steps must be interleaved.  Renamings and
restrictions may be defined in this unified language to affect only the
event part of the steps.

If we set 
the event part to $Event = \{\silent\}$ then the interpretation \refeqn{interp-csps} collapses to that in
\refsect{rely-guarantee}, 
while choosing the state space to be the unit type (with just one element)
collapses the definitions to the interpretation in \refsect{csp-encoding}.

\section{Related Work}\label{S-related-work}

Our Synchronous Refinement Algebra (SRA) 
compares to Concurrent Kleene Algebra (CKA)
\cite{DBLP:journals/jlp/HoareMSW11} in that both extend a sequential algebra to allow
for reasoning about parallel composition. 
Synchronous Kleene Algebra (SKA) \cite{Pris10} is also based on Kleene Algebra 
but, unlike CKA, it adds tests and a synchronous parallel operator based on that of
Milner's SCCS \cite{Milner83}.
Both CKA and SKA are based on Kleene algebra and 
hence only support finite iteration and partial correctness. 
In comparison, our SRA
supports general fixed points 
and hence recursion and both finite and infinite iteration.
The richer structure of DRA contains a sub-lattice of commands below $\Chaos$
(see Fig.\ \ref{lattices}) that includes 
assertions (and hence preconditions in the relational interpretation)
and
assumptions (and hence rely commands),
and allows the weak conjunction operator, $\together$, to be distinguished from strong conjunction, $\sqcup$.
All these constructs are needed to faithfully represent rely/guarantee theory.

CKA is also applied to rely/guarantee rules \cite{DBLP:journals/jlp/HoareMSW11}
but they define a Jones-style 5-tuple  (as in Section~\ref{S-Hoare-logic})
in terms of two separate refinement conditions,
whereas in our approach the existing (single) refinement relation can be used directly.
In Jones' theory, a guarantee has to be satisfied only from initial states satisfying the precondition of the program,
and further, if its rely condition is broken by the environment, the program can abort.
However, in the CKA framework, the guarantee has to always be maintained by the program, irrespective
of what the initial state is and how the environment is behaving;
that over restricts the set of possible implementations.
Our theory faithfully reflects Jones' approach.

Our algebra of atomic steps makes use of a synchronous parallel operator similar to that 
in SCCS \cite{CaC} and 
in SKA \cite{Pris10}
but 
it differs in two main ways: 
\begin{itemize}
\item
instead of atomic actions being separate from commands (as in SCCS and SKA),
they are treated as a sub-algebra of commands within SRA
and
\item
while both 
SCCS and SKA
explicitly define composition of atomic steps (their $\times$ operator),
our parallel operator is used directly on atomic steps (because they are commands)
and its definition on atomic steps is left open to allow multiple interpretatiotns.
\end{itemize}

Prensa Nieto has encoded rely-guarantee theory in Isabelle/HOL \cite{PrensaNieto03}.
Her language disallowed nested parallelism but allowed a multi-way parallel at the top level only, 
while here nested parallelism is allowed 
and hence a multi-way parallel can be defined in terms of a binary parallel.
Her work made use of a state-based operational semantics and 
showed the soundness of Hoare-style rely-guarantee quintuple rules with respect to the semantics directly.
In comparison, our approach is axiomatically based and 
to show soundness with respect to a semantics, 
we need show the axioms hold in the semantic model \cite{DaSMfaWSLwC}.
Our approach follows the more general refinement calculus style, rather than quintuples,
and hence it is easier to develop new laws.
It is also more abstract than that of Prensa Nieto and hence more widely applicable.

\section{Conclusion}\label{S-conclusion}

The main aim of this research is to provide mechanised support for the
verification/derivation of concurrent programs.
The approach taken is to develop a set of algebraic theories,
where we reason about programming operators at an abstract level
and then instantiate the theories, perhaps multiple times, 
to build our overall theory.
Just as mathematics has benefited from utilising abstract algebras,
such as semi-groups, lattices and boolean algebras,
we also benefit from using these and more programming-specific algebras.
While our initial aim was to provide a theory to support relational rely-guarantee concurrency,
by focusing on the abstract properties of operators and
careful structuring of theories 
-- an iterative process based on feedback from developing the theories --
we have come up with a collection of abstract theories, 
most of which are applicable to other contexts.

The basis of our theory is a Demonic Refinement Algebra (DRA) similar to that of von Wright \cite{Wright04}.
That gives us a simple program algebra over a complete lattice of commands $\Commands$
with sequential composition.
Unlike von Wright, we directly use fixed point operators to define iteration operators,
rather than explicitly axiomatising the iteration operators themselves;
that allows us to make more general use of recursion.
To the DRA we add a boolean sub-algebra of tests, $\Tests$, 
in a manner similar to that of Kozen's Kleene Algebra with Tests (KAT) \cite{kozen97kleene},
and a boolean sub-algebra of atomic steps, $\AtomicSteps$,
following our approach in \cite{FM2016atomicSteps}.

An innovation in the current paper is to axiomatise an abstract synchronisation operator, $\sync$,
and prove a set of ``synchronisation'' laws in the abstract theory.
The abstract theory is then instantiated for
parallel composition ($\parallel$),
weak conjunction ($\together$)
and the lattice supremum operator ($\sqcup$),
thus immediately giving a set of synchronisation laws for each of these operators 
without requiring further proof.

The atomic step commands, $\AtomicSteps$, are initially treated without any internal structure
other than that imposed by the Boolean algebra and 
the existence of an identity for the synchronisation operators.
Unlike in \cite{FM2016atomicSteps} which went directly to a relational interpretation,
we then give $\AtomicSteps$ more structure by identifying a subset of program steps $\PCommands$ and
another subset of matching environment steps $\ECommands$.
All atomic steps can be constructed as a non-deterministic choice of program and environment steps.
That structure is sufficient for us to define abstract rely and guarantee commands,
as well as an abstract version of Morgan's specification command \cite{TSS}.
We can then derive abstract versions of rely-guarantee concurrency laws,
including an abstract version of the parallel introduction law (Law \ref{L-introduce-parallel}),
one of the core laws in the rely-guarantee approach to concurrency.
Law~\ref{L-introduce-parallel} is a generalisation of the parallel introduction law of Jones
in the sense that it applies for any terminating commands satisfying $c \parallel \Term = c$.
The version for a relation postcondition specification (\ref{L-introduce-parallel-rspec})
corresponds to that of Jones
but other instances are possible, 
for example, a specification command using more expressive constraints on the behaviour 
during execution, such as via the use of possible values notation \cite{PVEaCfC}.
Exploration of such alternative specifications is a goal of future work.

Because our theory follows the refinement calculus approach of treating
precondition assertions, specifications, relies and guarantees as commands,
in comparison with using Hoare-style quintuples,
it is easier to develop a range of useful lemmas for each construct in isolation as well as 
for combinations of them.
For example, Lemma~\ref{L-introduce-parallel} for introducing parallelism builds on the
simpler Lemmas \ref{L-weaken-rely}, \ref{L-rely-guar} and \ref{L-parallel-guarantee}
involving properties of relies and guarantees.
It is also simpler to develop new laws, for example, 
it is straightforward to develop a version of the parallel introduction lemma
that includes preconditions.

Only at the final stage do we instantiate our test and atomic steps theories.
Firstly, for shared-memory concurrency,
where tests form a boolean algebra over sets of states,
and program and environment steps are boolean algebras over sets of pairs of states, i.e.\ relations.
Secondly, for process algebras 
tests are instantiated with the booleans, $\bool$,
and program and environment atomic steps are instantiated as boolean algebras
over sets of events.
All the laws derived in the abstract theory are applicable to both instantiations.
For the process algebra instantiation, by choosing appropriate sets of events
and defining how program steps combine via the parallel operator 
one can encode different process algebras such as CCS/SCCS and CSP.
One discovery was that the restriction operator in CSS corresponds to a 
guarantee (with a set complement because it excludes events).

The theory has been encoded in Isabelle/HOL \cite{IsabelleHOL}.
The Isabelle encoding is extensive, 
comprising of 26 theory files with 
over 350 proven lemmas,
and has evolved in unison with the development of our theory.
It makes extensive use of Isabelle/HOL locales to 
axiomatise the operators and structure the theories.

Overall, we think the algebraic approach has succeeded admirable in our quest
to mechanise rely-guarantee concurrency, 
but more than that it has surprised us with its ability to unify 
what are traditionally treated as separate approaches to concurrency.

\paragraph{Acknowledgements.}
This work has benefited from input from 
Cliff Jones, 
Kim Solin,
and 
Andrius Velykis.

\bibliographystyle{plain}
\bibliography{main}

\begin{thebibliography}{10}

\bibitem{Aczel83}
P.~H.~G. Aczel.
\newblock On an inference rule for parallel composition, 1983.
\newblock Private communication to Cliff Jones
  \url{http://homepages.cs.ncl.ac.uk/cliff.jones/publications/MSs/PHGA-traces.pdf}.

\bibitem{Armstrong14}
A.~Armstrong, V.B.F. Gomes, and G.~Struth.
\newblock Algebras for program correctness in {Isabelle/HOL}.
\newblock In P.~H{\"{o}}fner et~al., editor, {\em Int. Conf. on Relational and
  Algebraic Methods in Computer Science (RAMiCS 2014)}, volume 8428 of {\em
  Lecture Notes in Computer Science}, pages 49--64. Springer, 2014.

\bibitem{Bac81a}
R.-J.~R. Back.
\newblock On correct refinement of programs.
\newblock {\em Journal of Computer and System Sciences}, 23(1):49--68, February
  1981.

\bibitem{BackWright98}
R.-J.~R. Back and J.~von Wright.
\newblock {\em Refinement Calculus: A Systematic Introduction}.
\newblock Springer, New York, 1998.

\bibitem{BergstraKlop84}
J.A. Bergstra and J.W. Klop.
\newblock Process algebra for synchronous communication.
\newblock {\em Information and Control}, 60(1-3):109--137, January-March 1984.

\bibitem{Blikle78}
A.~Blikle.
\newblock Specified programming.
\newblock In E.~K. Blum, M.~Paul, and S.~Takasu, editors, {\em Mathematical
  Studies of Information Processing}, volume~75 of {\em Lecture Notes in
  Computer Science}, pages 228--251. Springer, 1978.

\bibitem{CoJo07}
J.~W. Coleman and C.~B. Jones.
\newblock A structural proof of the soundness of rely/guarantee rules.
\newblock {\em Journal of Logic and Computation}, 17(4):807--841, 2007.

\bibitem{IFM09}
R.~J. Colvin and I.~J. Hayes.
\newblock {CSP} with hierarchical state.
\newblock In M.~Leuschel and H.~Wehrheim, editors, {\em Integrated Formal
  Methods (IFM 2009)}, volume 5423 of {\em LNCS}, pages 118--135. Springer,
  2009.

\bibitem{DaSMfaWSLwC}
R.~J. Colvin, I.~J. Hayes, and L.~A. Meinicke.
\newblock Designing a semantic model for a wide-spectrum language with
  concurrency.
\newblock {\em Formal Aspects of Computing}, 29:853–--875, 2016.

\bibitem{DeRoever01}
W.-P. de~Roever.
\newblock {\em Concurrency Verification: Introduction to Compositional and
  Noncompositional Methods}.
\newblock Cambridge University Press, 2001.

\bibitem{gardinerfacj93}
P.H.B. Gardiner and C.~Morgan.
\newblock A single complete rule for data refinement.
\newblock {\em Formal Aspects of Computing}, 5:367--382, 1993.

\bibitem{AFfGRGRACP}
I.~J. Hayes.
\newblock Generalised rely-guarantee concurrency: An algebraic foundation.
\newblock {\em Formal Aspects of Computing}, 28(6):1057--1078, November 2016.

\bibitem{HayesJonesColvin14TR}
I.~J. Hayes, C.~B. Jones, and R.~J. Colvin.
\newblock Laws and semantics for rely-guarantee refinement.
\newblock Technical Report CS-TR-1425, Newcastle University, July 2014.

\bibitem{FM2016atomicSteps}
I.J. Hayes, R.J. Colvin, L.A. Meinicke, K.~Winter, and A.~Velykis.
\newblock An algebra of synchronous atomic steps.
\newblock In J.~Fitzgerald, C.~Heitmeyer, S.~Gnesi, and A.~Philippou, editors,
  {\em FM 2016: Formal Methods: 21st International Symposium, Proceedings},
  volume 9995 of {\em LNCS}, pages 352--369, Cham, November 2016. Springer
  International Publishing.

\bibitem{Hoare85}
C.~A.~R. Hoare.
\newblock {\em Communicating Sequential Processes}.
\newblock Prentice-Hall, 1985.

\bibitem{DBLP:journals/jlp/HoareMSW11}
C.~A.~R. Hoare, B.~M{\"o}ller, G.~Struth, and I.~Wehrman.
\newblock {Concurrent Kleene Algebra} and its foundations.
\newblock {\em J. Log. Algebr. Program.}, 80(6):266--296, 2011.

\bibitem{Jones81d}
C.~B. Jones.
\newblock {\em Development Methods for Computer Programs including a Notion of
  Interference}.
\newblock PhD thesis, Oxford University, June 1981.
\newblock Available as: Oxford University Computing Laboratory (now Computer
  Science) Technical Monograph PRG-25.

\bibitem{Jones83a}
C.~B. Jones.
\newblock Specification and design of (parallel) programs.
\newblock In {\em Proceedings of IFIP'83}, pages 321--332. North-Holland, 1983.

\bibitem{Jones83b}
C.~B. Jones.
\newblock Tentative steps toward a development method for interfering programs.
\newblock {\em ACM ToPLaS}, 5(4):596--619, 1983.

\bibitem{FACJexSEFM-14}
C.~B. Jones, I.~J. Hayes, and R.~J. Colvin.
\newblock Balancing expressiveness in formal approaches to concurrency.
\newblock {\em Formal Aspects of Computing}, 27(3):475--497, May 2015.

\bibitem{PVEaCfC}
Cliff~B. Jones and Ian~J. Hayes.
\newblock Possible values: Exploring a concept for concurrency.
\newblock {\em Journal of Logical and Algebraic Methods in Programming}, 85(5,
  Part 2):972--984, August 2016.
\newblock Articles dedicated to Prof. J. N. Oliveira on the occasion of his
  60th birthday.

\bibitem{kozen97kleene}
D.~Kozen.
\newblock Kleene algebra with tests.
\newblock {\em ACM Trans.\ Prog.\ Lang.\ and Sys.}, 19(3):427--443, May 1997.

\bibitem{CaC}
A.J.R.G. Milner.
\newblock {\em Communication and Concurrency}.
\newblock Prentice-Hall, 1989.

\bibitem{Milner83}
R.~Milner.
\newblock Calculi for synchrony and asynchrony.
\newblock {\em Theoretical Computer Science}, 25(3):267--310, 1983.

\bibitem{TSS}
C.~C. Morgan.
\newblock The specification statement.
\newblock {\em ACM Trans.\ Prog.\ Lang.\ and Sys.}, 10(3):403--419, July 1988.

\bibitem{Morgan94}
C.~C. Morgan.
\newblock {\em Programming from Specifications}.
\newblock Prentice Hall, second edition, 1994.

\bibitem{ATBfSRatPC}
J.~M. Morris.
\newblock A theoretical basis for stepwise refinement and the programming
  calculus.
\newblock {\em Science of Computer Programming}, 9(3):287--306, 1987.

\bibitem{IsabelleHOL}
T.~Nipkow, L.~C. Paulson, and M.~Wenzel.
\newblock {\em Isabelle/HOL: A Proof Assistant for Higher-Order Logic}, volume
  2283 of {\em LNCS}.
\newblock Springer, 2002.

\bibitem{PrensaNieto03}
Leonor {Prensa Nieto}.
\newblock The rely-guarantee method in {Isabelle/HOL}.
\newblock In {\em Proceedings of ESOP 2003}, volume 2618 of {\em LNCS}.
  Springer-Verlag, 2003.

\bibitem{Pris10}
C.~Prisacariu.
\newblock {Synchronous Kleene Algebra}.
\newblock {\em Journal of Logic and Algebraic Programming}, 79(7):608--635,
  2010.

\bibitem{Solin07}
K.~Solin.
\newblock {\em Abstract Algebra of Program Refinement}.
\newblock PhD thesis, Turku Centre for Computer Science, 2007.

\bibitem{PATCSP}
Jun Sun, Yang Liu, and Jin~Song Dong.
\newblock Model checking csp revisited: Introducing a process analysis toolkit.
\newblock In Tiziana Margaria and Bernhard Steffen, editors, {\em Leveraging
  Applications of Formal Methods, Verification and Validation: Third
  International Symposium, ISoLA 2008, Porto Sani, Greece, October 13-15, 2008.
  Proceedings}, pages 307--322, Berlin, Heidelberg, 2008. Springer Berlin
  Heidelberg.

\bibitem{Wright04}
J.~von Wright.
\newblock Towards a refinement algebra.
\newblock {\em Science of Computer Programming}, 51:23--45, 2004.

\bibitem{SemanticsOfCircus}
J.~C.~P. Woodcock and A.~L.~C. Cavalcanti.
\newblock The semantics of {\sf\textsl{circus}}.
\newblock In D.~Bert, J.~P. Bowen, M.~C. Henson, and K.~Robinson, editors, {\em
  ZB 2002:~Formal Specification and Development in Z and B}, volume 2272 of
  {\em Lecture Notes in Computer Science}, pages 184--203. Springer-Verlag,
  2002.

\end{thebibliography}

\newpage
\appendix

\newenvironment{lemmaproof}[1]{\par\noindent\textbf{Lemma \ref{L-#1} (#1)}\itshape}{}
\newenvironment{theoremproof}[1]{\par\noindent\textbf{Theorem \ref{T-#1} (#1)}\itshape}{}
\def\arraystretch{1.5}

\newenvironment{new-proof}{\paragraph{{\bf Proof.}}}{\hfill$\Box$}

\section{Proofs for inspection}
\label{S-proofs}

For all lemmas we assume $a$ and $b$ to be atomic steps, $c$ and $d$ arbitrary commands,
and $t$ and $t'$ tests.
\vspace*{2ex}

\begin{lemmaproof}{atomic-iteration-finite}
\label{P-atomic-iteration-finite}
\begin{eqnarray*}
 \Fin{a} \Seq c \sync \Fin{b} \Seq d 
  & = & \Fin{(a \sync b)} \Seq ((c \sync \Fin{b} \Seq d) \nondet (\Fin{a} \Seq c \sync d))
\end{eqnarray*}
\end{lemmaproof}

\begin{proof}
The proof relies on (\ref{L-finite-iteration}), i.e., $\Fin{a} = \Nondet_{i \in \nat} a^i$. 
The notation $\Nondet_{i,j \in \nat}^{i \leqslant j} c_{i,j}$ stands for the choice of $c_{i,j}$
 over all natural numbers $i$ and $j$, such that $i \leqslant j$.
\begin{align*}
    \Fin{a} \Seq c \sync \Fin{b} \Seq d
    & = (\Nondet_{i \in \nat} a^i \Seq c) \sync (\Nondet_{j \in \nat} b^j \Seq d) \\
    & = \Nondet_{i,j \in \nat} (a^i \Seq c \sync b^j \Seq d)\\[1ex]
    & = \Nondet_{i,j \in \nat}^{i \leqslant j} (a^i \Seq c \sync b^i \Seq b^{j-i} \Seq d)
         \nondet \Nondet_{i,j \in \nat}^{i \geqslant j} (a^j \Seq a^{i-j} \Seq c \sync b^{j} \Seq d)\\
   & = \Nondet_{i, k \in \nat}(a \sync b)^i \Seq (c \sync b^k \Seq d) 
         \nondet \Nondet_{j, k \in \nat}(a \sync b)^j \Seq (a^k \Seq c \sync d)\\
   & = (\Nondet_{i \in \nat} (a \sync b)^i)  \Seq \Nondet_{k \in \nat}(c \sync b^k \Seq d)
         \nondet (\Nondet_{j \in \nat} (a \sync b)^j) \Seq \Nondet_{k \in \nat}(a^k \Seq c \sync d) \\
   & =  \Fin{(a \sync b)} \Seq ((c \sync \Nondet_{k \in \nat}b^k \Seq d)
         \nondet (\Nondet_{k \in \nat}a^k \Seq c \sync d)\\
   & = \Fin{(a \sync b)} \Seq ((c \sync \Fin{b} \Seq d) \nondet (\Fin{a} \Seq c \sync d))
\end{align*}
\end{proof}

\begin{lemmaproof}{atomic-iteration-finite-infinite}
\label{P-atomic-iteration-finite-infinite}
\[
 ~~\Fin{a} \Seq c \sync \Inf{b} = \Fin{(a \sync b)} \Seq (c \sync \Inf{b})
\]
\end{lemmaproof}

\begin{proof}
Note that, by unfolding law (\ref{L-infinite-unfold-power}), $\Inf{b} = b^i \Seq \Inf{b}$ for any $i \in \nat$.
The proof also uses Lemma~\ref{L-atomic-iteration-power}.
\begin{displaymath}
 \begin{array}{rclclcl}
\Fin{a} \Seq c \sync \Inf{b} 
  & = & (\Nondet_{i \in \nat} a^i \Seq c) \sync \Inf{b}\\
  & = & \Nondet_{i \in \nat}(a^i \Seq c \sync \Inf{b}) \\
  & = & \Nondet_{i \in \nat}(a^i \Seq c \sync b^i \Seq \Inf{b}) \\
  & = &  \Nondet_{i \in \nat}(a \sync b)^i \Seq (c \sync \Inf{b})\\
  & = & \Fin{(a \sync b)} \Seq (c \sync \Inf{b})  
 \end{array}
\end{displaymath}
\end{proof}

\begin{lemmaproof}{atomic-iteration-either}
\label{P-atomic-iteration-either}
\begin{eqnarray*}
 \Om{a} \Seq c \sync \Om{b} \Seq d & = &
    \Om{(a \sync b)} \Seq ((c \sync \Om{b} \Seq d) \nondet (\Om{a} \Seq c \sync d))
\end{eqnarray*}
\end{lemmaproof}

\begin{proof} Note that, by (\ref{L-isolation}) and (\ref{L-infinite-annihilates}), 
$\Om{a} = \Fin{a} \nondet \Inf{a}$ and $\Inf{a} \Seq c = \Inf{a}$.
The proof uses also (\ref{L-iteration1}), 
and Lemmas \ref{L-atomic-iteration-finite} and \ref{L-atomic-iteration-finite-infinite}, 
and (\ref{A-atomic-infiter-sync}) for the synchronisation operator, i.e.\ $\Inf{a} \sync \Inf{b} = \Inf{(a \sync b)}$.
\begin{align*}
  \Om{a} \Seq c \sync \Om{b} \Seq d 
  &  =  (\Fin{a} \nondet \Inf{a}) \Seq c \sync (\Fin{b} \nondet \Inf{b}) \Seq d \\
  & =  (\Fin{a} \Seq c \sync \Fin{b} \Seq d) \nondet (\Fin{a} \Seq c \sync \Inf{b}) \nondet 
           (\Inf{a} \sync \Fin{b} \Seq d) \nondet (\Inf{a} \sync \Inf{b}) \\
  & =  \Fin{(a \sync b)} \Seq ((c \sync \Fin{b} \Seq d) \nondet (\Fin{a} \Seq c \sync d)) \nondet 
         \Fin{(a \sync b)} \Seq (c \sync \Inf{b}) \nondet \Fin{(a \sync b)} \Seq (\Inf{a} \sync d) \nondet \Inf{(a \sync b)} \\
  & =  \Fin{(a \sync b)} \Seq ((c \sync \Fin{b} \Seq d) \nondet (c \sync \Inf{b}) \nondet 
                                            ((\Fin{a} \Seq c \sync d) \nondet (\Inf{a} \sync d))) \nondet \Inf{(a \sync b)} \\
  & =  \Fin{(a \sync b)} \Seq ((c \sync (\Fin{b} \Seq d \nondet \Inf{b})) \nondet 
                                            ((\Fin{a} \Seq c  \nondet \Inf{a}) \sync d)) \nondet \Inf{(a \sync b)} \\
  & =  \Fin{(a \sync b)} \Seq ((c \sync \Om{b} \Seq d) \nondet
                                             (\Om{a} \Seq c  \sync d)) \nondet \Inf{(a \sync b)} \\
  & =  \Om{(a \sync b)} \Seq ((c \sync \Om{b} \Seq d) \nondet (\Om{a} \Seq c  \sync d))
\end{align*}
\end{proof}

\begin{lemmaproof}{iterations-with-abort}\label{P-iterations-with-abort}
Provided $c \sync \bot = \bot$, for any command $c$,
\[
  \Om{(a_0 \nondet a_1 \Seq \bot)} \sync \Om{b} = \Om{(a_0 \sync b)} \Seq (\Nil \nondet (a_1 \sync b) \Seq \bot)~.
\]
\end{lemmaproof}

\begin{proof}
\[
  \Om{(a_0 \nondet a_1 \Seq \bot)} \sync \Om{b}
 \Equals*[omega decomposition]
  \Om{(\Fin{a_0} \Seq a_1 \Seq \bot)} \Seq \Om{a_0} \sync \Om{b}
 \Equals*[omega unfolding]
  (\Om{a_0} \nondet \Fin{a_0} \Seq a_1 \Seq \bot) \sync \Om{b}
 \Equals*[distribute]
  (\Om{a_0} \sync \Om{b}) \nondet (\Fin{a_0} \Seq a_1 \Seq \bot \sync \Om{b})
 \Equals*[by (\ref{omega-sync-omega}) and isolation, i.e.\ $\Om{b} = \Fin{b} \nondet \Inf{b}$] 
  \Om{(a_0 \sync b)} \nondet (\Fin{a_0} \Seq a_1 \Seq \bot \sync \Fin{b}) \nondet (\Fin{a_0} \Seq a_1 \Seq \bot \sync \Inf{b})
 \Equals*[by Lemmas \ref{L-atomic-iteration-finite} and \ref{L-atomic-iteration-finite-infinite}]
  \Om{(a_0 \sync b)} \nondet 
    \Fin{(a_0 \sync b)} \Seq ((a_1 \Seq \bot \sync \Fin{b}) \nondet (\Fin{a_0} \Seq a_1 \Seq \bot \sync \Nil)) \nondet
    \Fin{(a_0 \sync b)} \Seq (a_1 \Seq \bot \sync \Inf{b})
 \Equals*[simplifying assuming $c \sync \bot = \bot$]
  \Om{(a_0 \sync b)} \nondet 
    \Fin{(a_0 \sync b)} \Seq (a_1 \sync b) \Seq \bot \nondet 
    \Fin{(a_0 \sync b)} \Seq (a_1 \sync b) \Seq \bot 
 \Equals*[non-deterministic choice is idempotent; isolation]
  \Om{(a_0 \sync b)} \Seq (\Nil \nondet (a_1 \sync b) \Seq \bot)
\]
\end{proof}

From Corollary~\ref{C-atomic-iteration-either} we can derive the following two corollaries
which will be used in later proofs.

\begin{corollary}[atomic-iter-prefix-sync-nil]\label{L-atomic-iter-prefix-sync-nil}
For any atomic steps $a, b, b_1 \in \AtomicSteps$, and commands $c \in \Commands$,
\[
\Om{a}\sync \Om{b}\Seq b_1\Seq c 
                          = \Om{(a \sync b)} \Seq (a \sync b_1) \Seq (\Om{a} \sync c)
\]
\end{corollary}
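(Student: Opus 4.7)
The plan is to invoke Corollary~\ref{C-atomic-iteration-either} directly, writing the left-hand side with an explicit trailing $\Nil$ so that both sides match the shape $\Om{a}\Seq c \sync \Om{b}\Seq d$ required by the corollary. Specifically, I would rewrite
\[
\Om{a}\sync \Om{b}\Seq b_1\Seq c
 = \Om{a}\Seq \Nil \sync \Om{b}\Seq (b_1\Seq c)
\]
using axiom~(\ref{A-seq-identity}) and associativity of sequential composition~(\ref{A-seq-asso}), and then apply Corollary~\ref{C-atomic-iteration-either} (instantiating its $c$ with $\Nil$ and its $d$ with $b_1\Seq c$) to obtain
\[
\Om{(a\sync b)} \Seq \bigl((\Nil \sync b_1\Seq c)
\nondet (\Nil \sync b \Seq \Om{b}\Seq b_1\Seq c)
\nondet (a\Seq \Om{a}\sync b_1\Seq c)\bigr).
\]

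Next I would simplify each of the three disjuncts. The first two both have the form $\Nil \sync (\text{atomic})\Seq(\cdots)$, so by axiom~(\ref{A-nil-sync-atomic}) each collapses to $\top$. The third disjunct is the synchronisation of two commands both beginning with an atomic step, so axiom~(\ref{A-atomic-sync-interchange}) gives
\[
a\Seq \Om{a}\sync b_1\Seq c = (a\sync b_1)\Seq (\Om{a}\sync c).
\]
Since $\top$ is the top of the refinement lattice, it is the identity of $\nondet$, so $\top \nondet \top \nondet x = x$. Combining these simplifications with associativity of sequential composition yields the desired equality
\[
\Om{(a \sync b)} \Seq (a \sync b_1) \Seq (\Om{a} \sync c).
\]

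There is no real obstacle here; the result is essentially a direct corollary. The only care needed is in presenting the left-hand side in the exact $\Om{a}\Seq c \sync \Om{b}\Seq d$ shape so the corollary applies, and in observing that the two $\Nil$-synchronisation disjuncts are infeasible and therefore absorbed by the third via $\top$ being the identity of $\nondet$.
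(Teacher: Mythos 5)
Your proposal is correct and follows exactly the route the paper intends: the text introduces this corollary with ``From Corollary~\ref{C-atomic-iteration-either} we can derive the following two corollaries,'' and your instantiation ($c:=\Nil$, $d:=b_1\Seq c$), the collapse of the two $\Nil$-disjuncts to $\top$ via (\ref{A-nil-sync-atomic}) and commutativity, and the final use of the atomic interchange axiom (\ref{A-atomic-sync-interchange}) are precisely the omitted details. No gaps.
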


\begin{corollary}[atomic\_iter\_prefix\_sync\_atomic]\label{L-atomic-iter-prefix-sync-atomic}
For any atomic steps $a, a_1, b, b_1 \in \AtomicSteps$, and commands $c, d \in \Commands$,
\[
\Om{a} \Seq a_1 \Seq c \sync \Om{b} \Seq b_1\Seq d 
                          = \Om{(a \sync b)} \Seq ((a_1 \sync b_1) \Seq (c \sync d) 
                             \nondet (a_1 \sync b)\Seq (c \sync \Om{b}\Seq b_1 \Seq d)
                             \nondet (a \sync b_1)\Seq (\Om{a}\Seq a_1\Seq c \sync d))
\]
\end{corollary}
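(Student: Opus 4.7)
The plan is to derive this corollary directly from Corollary~\ref{C-atomic-iteration-either} by an instantiation of its free commands $c$ and $d$, followed by a simplification of each disjunct using the atomic-sync interchange axiom. This is the natural generalisation of the argument that produces Corollary~\ref{L-atomic-iter-prefix-sync-nil}, now keeping both trailing atomic steps $a_1$ and $b_1$ rather than specialising one side to $\Nil$.

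Concretely, I would first apply Corollary~\ref{C-atomic-iteration-either} with the substitutions $c \mapsto a_1 \Seq c$ and $d \mapsto b_1 \Seq d$, which yields
\[
\Om{a} \Seq a_1 \Seq c \sync \Om{b} \Seq b_1 \Seq d ~=~
\Om{(a\sync b)} \Seq \bigl(
(a_1\Seq c \sync b_1\Seq d) \nondet
(a_1\Seq c \sync b\Seq\Om{b}\Seq b_1\Seq d) \nondet
(a\Seq\Om{a}\Seq a_1\Seq c \sync b_1\Seq d)
\bigr).
\]
Each of the three conjuncts inside the outer sequential composition is now of the form ``atomic prefix composed with a command, synchronised with atomic prefix composed with a command'', to which axiom \refeqn{atomic-sync-interchange} applies directly. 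Applying it term-by-term rewrites the three disjuncts into $(a_1\sync b_1)\Seq(c\sync d)$, $(a_1\sync b)\Seq(c\sync \Om{b}\Seq b_1\Seq d)$ and $(a\sync b_1)\Seq(\Om{a}\Seq a_1\Seq c \sync d)$ respectively, which is exactly the right-hand side of the corollary.

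There is no real obstacle here: the only work is to check that the three instances of \refeqn{atomic-sync-interchange} are applied with genuinely atomic leading commands on both sides. This is immediate because $a$, $a_1$, $b$, $b_1$ are hypothesised to be atomic steps, and closure of $\AtomicSteps$ is not required for the rewriting (we only need the leading step, not the tail, to be atomic). No appeal to conjunctivity or to the iteration-with-abort lemma is needed beyond what Corollary~\ref{C-atomic-iteration-either} has already absorbed.
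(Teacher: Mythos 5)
Your proposal is correct and matches the paper's intended derivation: the paper presents this as a corollary of Corollary~\ref{C-atomic-iteration-either}, and your instantiation $c \mapsto a_1 \Seq c$, $d \mapsto b_1 \Seq d$ followed by three applications of the atomic-step interchange axiom (\ref{A-atomic-sync-interchange}) is exactly that route. The only nit is terminological: the three terms you rewrite are disjuncts (alternatives under $\nondet$), not conjuncts, but this does not affect the argument.
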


With Corollary~\ref{L-atomic-iter-prefix-sync-nil} and (\ref{A-together-atomic}) we have
\begin{lemma}[assump-help1]\label{L-assump-help1}
For any atomic steps $a, b \in \AtomicSteps$,
\[
\Om{a}\together\Om{b}\Seq\anegate b\Seq\bot 
                           = \Om{(a \join b)}\Seq(a \join \anegate b) \Seq\bot
\]
\end{lemma}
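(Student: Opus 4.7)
The plan is to apply Corollary~\ref{L-atomic-iter-prefix-sync-nil} directly, using $\together$ as the instantiation of $\sync$, and then simplify the resulting atomic-step synchronisations using the special properties of weak conjunction.

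First I would instantiate Corollary~\ref{L-atomic-iter-prefix-sync-nil} with $\sync := \together$, $b_1 := \anegate b$, and $c := \bot$. This gives
\[
  \Om{a}\together\Om{b}\Seq\anegate b\Seq\bot
    \;=\; \Om{(a \together b)} \Seq (a \together \anegate b) \Seq (\Om{a} \together \bot)~.
\]
Next I would rewrite each of the three factors. For the first, since $a$ and $b$ are atomic steps, weak conjunction coincides with join on them by~(\ref{A-together-atomic}), so $a \together b = a \join b$; similarly $a \together \anegate b = a \join \anegate b$ for the middle factor, as $\anegate b$ is atomic. For the third factor, abort-strictness of weak conjunction~(\ref{A-together-abort}) gives $\Om{a} \together \bot = \bot$. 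Substituting these three simplifications into the right-hand side of the equation above yields exactly $\Om{(a \join b)}\Seq(a \join \anegate b) \Seq\bot$, which is the required conclusion.

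There is essentially no obstacle here: the lemma is a straightforward specialisation of an already-proved abstract result about iterations under synchronisation, once one recognises that $\Om{b}\Seq \anegate b \Seq \bot$ fits the shape $\Om{b}\Seq b_1 \Seq c$ required by the corollary. The only things to check are that the two uses of (\ref{A-together-atomic}) are legitimate (they are, because both $a$ and $\anegate b$ lie in $\AtomicSteps$ by closure of $\AtomicSteps$ under $\anegate$), and that abort-strictness is available for $\together$, which it is by axiom.
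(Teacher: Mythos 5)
Your proof is correct and follows exactly the route the paper intends: the paper derives Lemma~\ref{L-assump-help1} by instantiating Corollary~\ref{L-atomic-iter-prefix-sync-nil} with $\sync := \together$ and then simplifying via (\ref{A-together-atomic}). You are in fact slightly more explicit than the paper, since you also name abort-strictness (\ref{A-together-abort}) as the justification for $\Om{a} \together \bot = \bot$, which the paper leaves implicit.
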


With Corollary~\ref{L-atomic-iter-prefix-sync-atomic}  we have

\begin{lemma}[assump-help2]\label{L-assump-help2}
For any atomic steps $a, b \in \AtomicSteps$,
\[
\Om{a}\Seq\anegate a \Seq\bot \together \Om{b}\Seq\anegate b\Seq\bot
                 = \Om{(a \join b)}\Seq ((\anegate a \join \anegate b) \join (\anegate a \join b) 
                   \nondet (a \join \anegate b))\Seq\bot
\]
\end{lemma}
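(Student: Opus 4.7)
The plan is to recognise the right-hand side as exactly the pattern produced by Corollary~\ref{L-atomic-iter-prefix-sync-atomic} once all the trivial simplifications are carried out. Concretely, I would apply that corollary with the synchronisation operator instantiated as weak conjunction $\together$, taking $a_1 := \anegate a$, $b_1 := \anegate b$, and $c := d := \bot$. This rewrites the left-hand side to
\[
\Om{(a \together b)} \Seq \bigl( (\anegate a \together \anegate b)\Seq(\bot \together \bot)
  \nondet (\anegate a \together b)\Seq(\bot \together \Om{b}\Seq \anegate b\Seq \bot)
  \nondet (a \together \anegate b)\Seq(\Om{a}\Seq \anegate a\Seq \bot \together \bot)\bigr).
\]

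From here, three local simplifications collapse this to the target. First, by (\ref{A-together-atomic}) every weak conjunction of atomic steps $x \together y$ equals $x \sqcup y$, i.e.\ $x \join y$; this handles $a \together b$, $\anegate a \together \anegate b$, $\anegate a \together b$ and $a \together \anegate b$. Second, abort-strictness of weak conjunction (\ref{A-together-abort}), together with commutativity of $\together$ (\ref{A-sync-comm}), reduces every subterm of the form $c \together \bot$ or $\bot \together c$ to $\bot$; this kills the three inner parentheses, leaving $\bot$ as the tail of each branch. Third, right-distributivity of sequential composition over non-empty choice (\ref{A-seq-distr-right}) factors the common trailing $\bot$ out of the non-deterministic sum, yielding
\[
\Om{(a \join b)} \Seq \bigl((\anegate a \join \anegate b) \nondet (\anegate a \join b) \nondet (a \join \anegate b)\bigr)\Seq \bot,
\]
which is the stated right-hand side.

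The main obstacle is purely bookkeeping: lining up the three branches produced by the corollary (synchronise both leading steps, or let the left-hand iteration continue one extra step, or let the right-hand iteration continue one extra step) with the three summands inside the brackets, and being careful that each use of $\together$ is simplified via the correct law (atomic-conjunction collapses to $\join$, anything involving $\bot$ collapses via abort-strictness). No induction or fixed-point reasoning is needed here—Corollary~\ref{L-atomic-iter-prefix-sync-atomic} already encapsulates all of that work.
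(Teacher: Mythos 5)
Your proof is correct and is exactly the paper's own route: the paper obtains this lemma directly from Corollary~\ref{L-atomic-iter-prefix-sync-atomic} with the same instantiation ($a_1 = \anegate a$, $b_1 = \anegate b$, $c = d = \bot$), followed by the same simplifications via (\ref{A-together-atomic}), abort-strictness of $\together$, and factoring the trailing $\bot$ out of the choice. Note that your conclusion has $\nondet$ between the first two summands where the printed statement has $\join$; this appears to be a typo in the paper, since the all-$\nondet$ form is what is actually invoked later in the proof of Lemma~\ref{L-assume-iter-conj-assume-iter}, so your version is the intended one.
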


Since atomic commands form a Boolean algebra and with (\ref{A-together-atomic}) we can derive the following
two lemmas.
\begin{lemma}[helper2]\label{L-helper2}
For any atomic steps $a, b \in \AtomicSteps$,
\[
 (\anegate a \together b) \nondet (a \together \anegate b) \nondet (\anegate a \together \anegate b)
   =  \anegate(a \together b)
\]
\end{lemma}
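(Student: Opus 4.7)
The plan is to reduce the statement to a purely Boolean-algebraic identity inside $\AtomicSteps$ and then discharge that identity by standard manipulations in a Boolean algebra. The key observation is that the arguments $\anegate a\together b$, $a\together\anegate b$, and $\anegate a \together \anegate b$ are all weak conjunctions of \emph{atomic} steps, so by axiom~(\ref{A-together-atomic}) each equals the corresponding lattice join $\sqcup$. This collapses the left-hand side to a pure expression in the Boolean sub-algebra of atomic steps.

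First I would rewrite, using (\ref{A-together-atomic}) three times,
\[
  (\anegate a \together b) \nondet (a \together \anegate b) \nondet (\anegate a \together \anegate b)
  \;=\;
  (\anegate a \sqcup b) \nondet (a \sqcup \anegate b) \nondet (\anegate a \sqcup \anegate b)
\]
and similarly $\anegate(a \together b) = \anegate(a \sqcup b)$. Then the goal reduces to verifying
\[
  (\anegate a \sqcup b) \nondet (a \sqcup \anegate b) \nondet (\anegate a \sqcup \anegate b)
  \;=\;
  \anegate(a \sqcup b)
\]
in the Boolean algebra $(\AtomicSteps,\nondet,\sqcup,\anegate,\top,\cstepd)$.

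Next, I would discharge that identity by a short De Morgan-style calculation: apply De Morgan to the right-hand side to obtain $\anegate(a\sqcup b)=\anegate a \nondet \anegate b$, and then simplify the left by pulling out $\anegate a \sqcup \anegate b$ as a common refinement. Concretely, both $\anegate a$ and $\anegate b$ refine $\anegate a \sqcup \anegate b$, so the third term absorbs into what remains after combining the first two via distributivity and the law $x \sqcup \anegate x = \top$. Equivalently, one can verify the identity by evaluating it pointwise in the free Boolean algebra on $\{a,b\}$: each of the four atoms $a\sqcup b$, $a\sqcup\anegate b$, $\anegate a\sqcup b$, $\anegate a \sqcup \anegate b$ is either present (on the right) or absent (the single atom $a\sqcup b$) on both sides.

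There is no real obstacle here; the statement is essentially a Boolean tautology lifted through the homomorphic embedding of $\AtomicSteps$. The only subtlety worth flagging in the write-up is that (\ref{A-together-atomic}) is required to justify replacing $\together$ by $\sqcup$ on the left-hand side---the identity is \emph{not} an abstract synchronisation law but specifically exploits the atomic-step instantiation of $\together$. Once that reduction is made, the remaining step is a standard Boolean algebra manipulation that Isabelle's Boolean-algebra tactics handle immediately.
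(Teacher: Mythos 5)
Your proposal is correct and takes essentially the same route as the paper, which justifies Lemma~\ref{L-helper2} precisely by invoking (\ref{A-together-atomic}) to replace $\together$ by $\sqcup$ on atomic steps and then appealing to the Boolean algebra $(\AtomicSteps, \nondet, \sqcup, \anegate, \top, \cstepd)$. The residual identity $(\anegate a \sqcup b) \nondet (a \sqcup \anegate b) \nondet (\anegate a \sqcup \anegate b) = \anegate a \nondet \anegate b = \anegate(a \sqcup b)$ is exactly the conjunctive-normal-form tautology you describe (the meet of the three maxterms other than $a \sqcup b$), so your reduction and discharge are both sound.
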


\begin{lemma}[assump-help3]\label{L-assump-help3}
For any atomic steps $a, b \in \AtomicSteps$,
\[
(a \join \anegate b) \nondet (\anegate a \join b) \nondet (\anegate a \join \anegate b) 
                       = \anegate(a \join b)
\]
\end{lemma}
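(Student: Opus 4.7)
The plan is to recognise Lemma~\ref{L-assump-help3} as a pure identity in the Boolean sub-algebra $\AtomicSteps$. Under the refinement-lattice convention, $\join = \sqcup$ is the join on $\AtomicSteps$, $\nondet$ is the meet, $\anegate$ is complement, $\cstepd$ is the bottom, and $\top$ is the top. So the equation is exactly the standard Boolean tautology $(a+\overline b)(\overline a+b)(\overline a+\overline b) = \overline{a+b}$, merely read in the opposite-oriented lattice.

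First I would apply De~Morgan to the right-hand side to obtain $\anegate(a \join b) = \anegate a \nondet \anegate b$, which becomes the goal. Then I would expand the product of the first two factors by distributivity,
\[
  (a \join \anegate b) \nondet (\anegate a \join b)
  = (a \nondet \anegate a) \join (a \nondet b) \join (\anegate b \nondet \anegate a) \join (\anegate b \nondet b)~.
\]
The two complementary summands $a \nondet \anegate a$ and $b \nondet \anegate b$ collapse to $\cstepd$, and since $\cstepd$ is the bottom of $\AtomicSteps$ it is the identity of $\join$, leaving $(a \nondet b) \join (\anegate a \nondet \anegate b)$. Taking the meet of this with the remaining factor $(\anegate a \join \anegate b)$ and distributing once more, the summand $(a \nondet b) \nondet (\anegate a \join \anegate b)$ again collapses to $\cstepd$ using the complement laws, while $(\anegate a \nondet \anegate b) \nondet (\anegate a \join \anegate b)$ reduces by absorption to $\anegate a \nondet \anegate b$, since $\anegate a \nondet \anegate b$ lies below $\anegate a \join \anegate b$ in the lattice. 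The final value is $\anegate a \nondet \anegate b = \anegate(a \join b)$, as required.

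Alternatively, one could bypass the computation entirely: Lemma~\ref{L-helper2} asserts exactly the same equation with $\together$ in place of $\join$, and axiom~(\ref{A-together-atomic}) tells us that $x \together y = x \sqcup y = x \join y$ whenever $x,y \in \AtomicSteps$. Modulo commutativity of $\nondet$, the two lemmas are therefore the same statement, and Lemma~\ref{L-assump-help3} follows in one line from Lemma~\ref{L-helper2} and~(\ref{A-together-atomic}).

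I do not anticipate any real obstacle here; the only thing to keep straight is the refinement-lattice orientation in which $\nondet$ is meet and $\sqcup = \join$ is join, opposite to the usual set-theoretic reading of ``$\sqcap$ as and, $\sqcup$ as or''. Once that bookkeeping is fixed, everything reduces to applications of distributivity, the complement laws $x \nondet \anegate x = \cstepd$, and the fact that $\cstepd$ is the join-identity of the atomic sub-algebra.
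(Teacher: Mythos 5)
Your proposal is correct and matches the paper's approach: the paper simply asserts that Lemma~\ref{L-assump-help3} (and Lemma~\ref{L-helper2}) follow from the fact that atomic commands form a Boolean algebra together with~(\ref{A-together-atomic}), which is exactly the computation you carry out explicitly. Both your direct distributivity/complement calculation and your one-line reduction to Lemma~\ref{L-helper2} via $x \together y = x \join y$ on atomic steps are valid instantiations of what the paper leaves implicit.
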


\pagebreak[3]

Lemmas~\ref{L-assump-help1}, \ref{L-assump-help2}, and \ref{L-assump-help3} are used in the
proof of the following lemma, which in turn is valuable when proving the conjunction of relies 
(Lemma~\ref{L-combine-relies}).

\begin{lemmaproof}{assume-iter-conj-assume-iter}
\label{P-assume-iter-conj-assume-iter}
For any atomic steps $a, b \in \AtomicSteps$,
\[
\Om{(\Assume{a})} \together \Om{(\Assume{b})} ~=~ \Om{(\Assume{(a \join b)})}
\]
\end{lemmaproof}

\begin{proof}
\[
\Om{(\Assume{a})} \together \Om{(\Assume{b})}
 \Equals*[with Lemma~\ref{L-iterated-assumption} and (\ref{A-seq-distr-left})]
    (\Om{a} \nondet \Om{a}\Seq \anegate{a} \Seq \bot) 
      \together (\Om{b} \nondet \Om{b}\Seq \anegate{b} \Seq \bot)
 \Equals*[with (\ref{A-seq-distr-right}) and (\ref{A-sync-comm})] 
    (\Om{a} \together \Om{b}) 
       \nondet (\Om{a} \together \Om{b}\anegate{b}\Seq\bot) 
       \nondet (\Om{a}\anegate{a}\Seq\bot \together \Om{b}) 
       \nondet (\Om{a}\anegate{a}\Seq\bot \together \Om{b}\anegate{b}\Seq\bot)
 \Equals*[(\ref{A-together-atomic}), Corollary~\ref{L-atomic-either-sync}, and
           Lemma~\ref{L-assump-help1}]
    \Om{(a \join b)}
        \nondet \Om{(a \join b)}\Seq(a \join \anegate{b})\Seq \bot
        \nondet \Om{(a \join b)}\Seq (\anegate{a} \join b)\Seq\bot
        \nondet (\Om{a}\anegate{a}\Seq \bot \together \Om{b}\anegate{b}\Seq\bot)
 \Equals*[with Lemma~\ref{L-assump-help2}]
     \Om{(a \join b)}
       \nondet \Om{(a \join b)}\Seq(a \join \anegate{b})\Seq \bot
        \nondet \Om{(a \join b)}\Seq (\anegate{a} \join b)\Seq\bot
        \nondet \Om{(a \join b)}\Seq
     ((\anegate{a} \join \anegate{b}) \nondet (a \join \anegate{b}) \nondet (\anegate{a} \join b))\Seq\bot\\
 \Equals*[with (\ref{A-seq-distr-right})]
     \Om{(a \join b)}
        \nondet \Om{(a \join b)}\Seq(a \join \anegate{b})\Seq \bot
        \nondet \Om{(a \join b)}\Seq (\anegate{a} \join b)\Seq\bot
        \nondet \Om{(a \join b)}\Seq (\anegate{a} \nondet \anegate{b})\Seq\bot
 \Equals*[with (\ref{A-seq-distr-left}) and (\ref{A-seq-distr-right})]
   \Om{(a \join b)}\Seq(\Nil \nondet ((a \join \anegate{b}) \nondet (\anegate{a} \join b) 
                             \nondet \anegate{(a \join b)})\Seq \bot)
  \Equals*[with Lemma~\ref{L-assump-help3}]
   \Om{(a \join b)}\Seq(\Nil \nondet \anegate(a \join b)\Seq\bot)
 \Equals*[with Lemma \ref{L-iterated-assumption}]
   \Om{(\Assume{(a \join b)})}
\]
\end{proof}

\def\arraystretch{1}

\end{document}